\declaretheorem[name=Theorem,numberwithin=section]{theorem}
\newtheorem{observation}[theorem]{Observation}
\newtheorem{claim}[theorem]{Claim}
\newtheorem{rrule}{Reduction Rule}[section]
\crefname{rrule}{Reduction Rule}{Reduction Rules}
\crefname{claim}{Claim}{Claims}
\crefname{paragraph}{Paragraph}{Paragraphs}
\crefname{observation}{Observation}{Observations}
\crefname{lemma}{Lemma}{Lemmata}
\crefname{lemma}{Lemma}{Lemmata}
\crefname{theorem}{Theorem}{Theorems}
\crefname{proposition}{Proposition}{Propositions}
\crefname{corollary}{Corollary}{Corollaries}
\crefname{remark}{Remark}{Remarks}
\crefname{section}{Section}{sections}
\crefname{figure}{Figure}{Figures}
\crefname{table}{Table}{Tables}
\crefname{definition}{Definition}{Definitions}
\crefname{equation}{Equation}{Equations}
\crefname{algorithm}{Algorithm}{Algorithms}
\Crefname{remark}{Remark}{Remarks}
\Crefname{paragraph}{Paragraph}{Paragraphs}
\Crefname{theorem}{Thm}{Thms}
\Crefname{proposition}{Prop}{Props}
\Crefname{corollary}{Cor}{Cors}
\Crefname{observation}{Obs}{Obs}
\theoremstyle{definition}
\newtheorem{definition}[theorem]{Definition}
\theoremstyle{remark}
\theoremstyle{plain}
\def\thmt@refnamewithcomma #1#2#3,#4,#5\@nil{%
  \@xa\def\csname\thmt@envname #1utorefname\endcsname{#3}%
  \ifcsname #2refname\endcsname
    \csname #2refname\expandafter\endcsname\expandafter{\thmt@envname}{#3}{#4}%
  \fi
}
\definecolor{ourblue}{RGB}{135,206,250}
\definecolor{ourgreen}{RGB}{0,100,0}
\definecolor{ourred}{RGB}{176,23,31}
\definecolor{lipicsyellow}{rgb}{0.99,0.78,0.07}
\newcommand{\appendixproof}[3]{%
%  \gappto{\appendixProofText}{
    %\subsection{Proof of \cref{#1}} #2 \label{proof:#1}
	#3
    %}
}
\newcommand{\appendixsection}[1]{%
  %\gappto{\appendixProofText}{
  %  \section{Additional Material for~\cref{#1}}
  %  \label{appsec:#1}
  %}
}
\newcommand{\FPT}{\textnormal{\textsf{FPT}}}
\newcommand{\Wone}{\textnormal{\textsf{W[1]}}}
\newcommand{\MSO}{\textnormal{\textsf{MSO}}}
\newcommand{\N}{\mathds N}
\newcommand{\I}{\mathcal{I}}
\newcommand{\TG}{\boldsymbol{G}}
\newcommand{\TE}{\boldsymbol{E}}
\newcommand{\NP}{\textnormal{\textsf{NP}}}
\newcommand{\ON}{\mathcal{O}}
\newcommand{\aq}{\Leftrightarrow}
\newcommand{\impl}{\Rightarrow}
\newcommand{\yes}{\textnormal{\textsf{yes}}}
\newcommand{\no}{\textnormal{\textsf{no}}}
\newcommand{\ug}[1]{{#1}_{\downarrow}}
\newcommand{\nonstrproblem}{\textnormal{\textsc{Temporal $(s,z)$-Sep\-a\-ra\-tion}}}
\newcommand{\strproblem}{\textnormal{\textsc{Strict Temporal $(s,z)$-Sep\-a\-ra\-tion}}}
\newcommand{\sproblem}{\textnormal{\textsc{(Strict) Temporal $(s,z)$-\allowbreak Sep\-a\-ra\-tion}}}
\newcommand{\lenbndproblem}{\textnormal{\textsc{Length-Bound\-ed $(s,z)$-Sep\-a\-ra\-tion}}}
\newcommand{\lenbndcutproblem}{\textnormal{\textsc{Length-Bound\-ed $(s,z)$-Cut}}}
\newcommand{\vertexcover}{\textnormal{\textsc{Vertex Cover}}}
\newcommand{\nonstrpath}[1]{temporal~$(#1)$-path}
\newcommand{\nonstrpaths}[1]{temporal~$(#1)$-paths}
\newcommand{\strpath}[1]{strict temporal~$(#1)$-path}
\newcommand{\strpaths}[1]{strict temporal~$(#1)$-paths}
\newcommand{\npath}[1]{$(#1)$-path}
\newcommand{\npaths}[1]{$(#1)$-paths}
\newcommand{\nonstrsep}[1]{temporal~$(#1)$-sep\-a\-ra\-tor}
\newcommand{\strsep}[1]{strict temporal~$(#1)$-sep\-a\-ra\-tor}
\newcommand{\nsep}[1]{$(#1)$-sep\-a\-ra\-tor}
\newcommand{\implone}{\smallskip\noindent$\Rightarrow$:}
\newcommand{\impltwo}{\smallskip\noindent$\Leftarrow$:}
\DeclareMathOperator{\tw}{tw}
\DeclarePairedDelimiterX{\set}[1]{\{}{\}}{\setargs{#1}}
\NewDocumentCommand{\setargs}{>{\SplitArgument{1}{;}}m}
{\setargsaux#1}
\NewDocumentCommand{\setargsaux}{mm}
{\IfNoValueTF{#2}{#1} {#1\,\delimsize|\,\mathopen{}#2}}%{#1\:;\:#2}
\newdimen\longformulasindent
\newcommand{\multiwaycut}{\textnormal{\textsc{Node Multiway Cut}}}
\newcommand{\multiwaycutAcr}{\textnormal{\textsc{NWC}}}
\newcommand{\condRef}[1]{{\hyperref[#1]{(\ref{#1})}}}
\newcommand{\indic}{\ensuremath{\mathds{1}}}
\newcommand{\mte}{\bm{m}}
\newtheorem{lem}[theorem]{Lemma}
\newtheorem{cor}[theorem]{Corollary}
\crefname{rrule}{Reduction Rule}{Reduction Rules}
\crefname{claim}{Claim}{Claims}
\crefname{paragraph}{Paragraph}{Paragraphs}
\crefname{observation}{Observation}{Observations}
\crefname{lem}{Lemma}{Lemmata}
\crefname{theorem}{Theorem}{Theorems}
\crefname{proposition}{Proposition}{Propositions}
\crefname{corollary}{Corollary}{Corollaries}
\crefname{remark}{Remark}{Remarks}
\crefname{section}{Section}{sections}
\crefname{figure}{Figure}{Figures}
\crefname{table}{Table}{Tables}
\crefname{equation}{Equation}{Equations}
\crefname{algorithm}{Algorithm}{Algorithms}
\Crefname{remark}{Remark}{Remarks}
\Crefname{paragraph}{Paragraph}{Paragraphs}
\Crefname{theorem}{Thm.}{Thms.}
\Crefname{proposition}{Prop.}{Props.}
\Crefname{corollary}{Cor.}{Cors.}
\Crefname{cor}{Cor.}{Cors}
\Crefname{observation}{Obs.}{Obs.}
\Crefname{section}{Sec.}{Sec.}
\theoremstyle{definition}
\newtheorem{defin}[theorem]{Definition}
\crefname{defin}{Definition}{Definitions}
\theoremstyle{remark}
\newcommand{\problemdef}[3]{
    %   \vspace{10pt}\hfill
    \begin{center}
    \begin{minipage}{0.95\textwidth}
      \noindent
      \normalsize\textsc{#1}
      
      \vspace{1pt}
      \setlength{\tabcolsep}{3pt}
      \renewcommand{\arraystretch}{1.0}
      \begin{tabularx}{\textwidth}{@{}lX@{}}
	\normalsize\textbf{Input:} 	& \normalsize#2 \\
	\normalsize\textbf{Question:} 	& \normalsize#3
      \end{tabularx}
    \end{minipage}
    \end{center}
    %   \vspace{10pt}
}
\newcommand{\thetitle}{The Complexity of Finding Small Separators in Temporal Graphs }
\title{\thetitle{}
}
\author{Philipp~Zschoche\thanks{Supported by the Stiftung Begabtenf\"orderung berufliche Bildung (SBB).}}
\author{Till~Fluschnik\thanks{Supported by the DFG, project DAMM (NI 369/13) and project TORE (NI 369/18).}}
\author{Hendrik~Molter\thanks{Supported by DFG, project MATE (NI 369/17).}} 
\author{Rolf~Niedermeier}
\affil{Institut f\"ur Softwaretechnik und Theoretische Informatik,
 TU Berlin, Germany,\\
 \texttt{\{zschoche,till.fluschnik,h.molter,rolf.niedermeier\}@tu-berlin.de}}
\date{}
\begin{document}
\maketitle

% \todo[inline]{TF: 12 pages excluding references;
% Way to go: Strict versus non-strict: number of layers versus size of temporal core versus planarity.
% 
% Things to mention:\\
% - TempSep: Planar case as well as Small temporal cores motivated by street networks with embedded public transport (almost planar, temporal core ``only'' main stations\\
% - mention the easy $\tau^k$ searchtree for \strproblem{} somewhere?\\
% - Do not forget the remove non-used appendixsections\\
% - check for consistence around time steps, time-steps, time-points, etc.
% - Maximum Label = Lifetime?
% }

\begin{abstract}
Temporal graphs are graphs with time-stamped edges.
We study the problem of finding a small vertex set (the separator) with respect to two designated terminal vertices such that the removal of the set eliminates all temporal paths connecting one terminal to the other.
Herein, we consider two models of temporal paths: paths that pass through arbitrarily many edges per time~step (non-strict) and paths that pass through at most one edge per time~step (strict).
Regarding the number of time~steps of a temporal graph, we show a complexity dichotomy (\textsf{NP}-hardness versus polynomial-time solvability) for both problem variants.
Moreover we prove both problem variants to be \textsf{NP}-complete even on temporal graphs whose underlying graph is planar.
We further show that, on temporal graphs with planar underlying graph, if additionally the number of time~steps is constant, then the problem variant for strict paths is solvable in quasi-linear time.
Finally, we introduce and motivate the notion of a temporal core (vertices whose incident edges change over time).
We prove that the non-strict variant is fixed-parameter tractable when parameterized by the size of the temporal core, while the strict variant remains~\textsf{NP}-complete, even for constant-size temporal cores.
\end{abstract}
% \newpage

%%%%%%%%%%%%%%%%%%%%%%%%%%%%%%%%%%%%%%%%%%%%%%%%%%%%%%%%%%%%%%%%%%%%%%%%%%%%%%%%%%%%%%%%%%%%%%%%%%%%%%%
%%%%%%%%%%%%%%%%%%%%%%%%%%%%%%%%%%%%%%%%%%%%%%%%%%%%%%%%%%%%%%%%%%%%%%%%%%%%%%%%%%%%%%%%%%%%%%%%%%%%%%%
%%%%%%%%%%%%%%%%%%%%%%%%%%%%%%%%%%%%%%%%%%%%%%%%%%%%%%%%%%%%%%%%%%%%%%%%%%%%%%%%%%%%%%%%%%%%%%%%%%%%%%%
\section{Introduction}
  \label{sec:intro}
%   \appendixsection{sec:intro}
%%%%%%%%%%%%%%%%%%%%%%%%%%%%%%%%%%%%%%%%%%%%%%%%%%%%%%%%%%%%%%%%%%%%%%%%%%%%%%%%%%%%%%%%%%%%%%%%%%%%%%%
% BEGIN Intro %%%%%%%%%%%%%%%%%%%%%%%%%%%%%%%%%%%%%%%%%%%%%%%%%%%%%%%%%%%%%%%%%%%%%%%%%%%%%%%%%%%%%%%%%%%%%%%%%%%%%%
%%%%%%%%%%%%%%%%%%%%%%%%%%%%%%%%%%%%%%%%%%%%%%%%%%%%%%%%%%%%%%%%%%%%%%%%%%%%%%%%%%%%%%%%%%%%%%%%%%%%%%%

In complex network analysis, it is nowadays very common to have access to and process graph data where the interactions among the vertices are time-stamped. 
When using static graphs as a mathematical model, the dynamics of interactions are not reflected and important information of the data might not be captured.
% A straightforward approach to incorporate the dynamics of interactions into the model is to use 
\emph{Temporal graphs} address this issue.
% Modern algorithmic graph theory makes heavy use of graph decompositions such as tree decompositions. 
% Typically, such decompositions rely on the concept of (vertex) separators. 
% Thus, to find (minimal) vertex sets whose deletion disconnects two distinguished graph vertices is a problem of central algorithmic concern, having been subject to a comprehensive set of studies. 
% This is much less so when moving from the classic, ``static'' graph
% model to temporal graphs. 
A temporal graph is, informally speaking, a graph where the edge set may change over a discrete time interval, while the vertex set remains unchanged.
%Having the dynamics of interactions represented in the model, it is natural to adapt definitions of connectedness and paths along interactions in a way that respects the temporal nature of the model. 
%
Having the dynamics of interactions represented in the model, it is essential to adapt definitions such as connectivity and paths to respect temporal features. 
This directly affects the notion of \emph{separators} in the temporal setting. 
Vertex separators are a fundamental primitive in static network analysis and it is well-known that they can be computed in polynomial time~(see, e.g., proof of~\cite[Theorem~6.8]{AMO93}).
In contrast to the static case, \citet{kempe2000connectivity} showed that in temporal graphs it is \NP-hard to compute minimum separators.

Temporal graphs are well-established in the literature and are also referred to as time-varying~\cite{liang2017survivability} and evolving~\cite{ferreira2004building} graphs, temporal networks~\cite{holme2012temporal, kempe2000connectivity, mertzios2013temporal},  link streams~\cite{latapy2017stream,viard2015revealing}, multidimensional networks~\cite{boccaletti2014structure}, and edge-scheduled networks~\cite{berman}.
In this work, we use the well-established model in which each edge has a time stamp~\cite{boccaletti2014structure, holme2012temporal, akrida2015temporally, himmel2016enumerating, kempe2000connectivity, mertzios2013temporal,AMSZ18}. 
Assuming discrete time steps, this is equivalent to a sequence of static graphs over a fixed set of vertices~\cite{michail2016introduction}. 
Formally, we define a temporal graph as follows.
\begin{definition}[Temporal Graph]
	An (undirected) \emph{temporal graph}~$\TG = (V,\TE,\tau)$ is an ordered triple consisting of a set~$V$ of vertices,
	a set~$\TE \subseteq \binom{V}{2} \times \{1,\dots, \tau\}$ of \emph{time-edges},
	and a maximal time label~$\tau \in \N$.
% 	and a time interval~$\{1,\dots, \tau \} \subseteq \N$, where~$\tau \in \N$. 
\end{definition}
See \cref{fig:example} for an example with $\tau =4$,
that is, a temporal graph with four time steps, also referred to as \emph{layers}. 
The static graph obtained from a temporal graph~$\TG$ by removing the time stamps from all time-edges we call the \emph{underlying graph} %\footnote{We give a formal definition in \cref{sec:prem}.} 
of~$\TG$.

\begin{figure}
% 		\begin{subfigure}[c]{0.3\textwidth}
			\centering
	\subcaptionbox{\centering A temporal graph $\TG$.}[0.3\textwidth]{\includegraphics[width=0.3\textwidth]{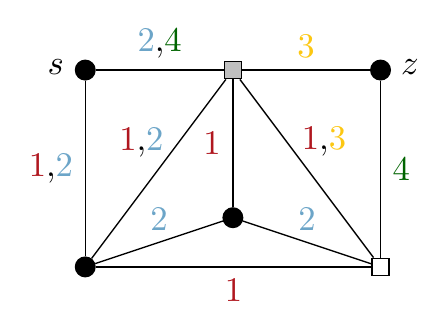}}
% 		\subcaption{A temporal graph $\TG$.}
% 			\label{fig:temp}
% 	\end{subfigure}
	\hfill
% 	\begin{subfigure}[c]{0.65\textwidth}
			\centering
	\subcaptionbox{\centering Layers of $\TG$.}[0.65\textwidth]{\includegraphics[width=0.65\textwidth]{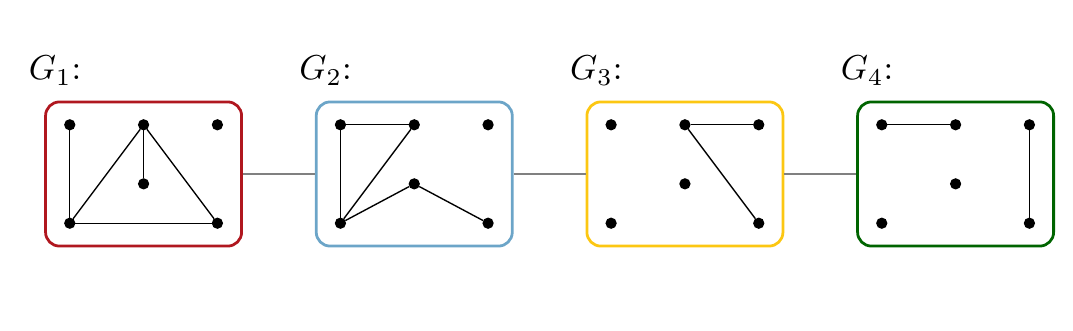}}
% 		\subcaption{Layers of $\TG$.}
% 			\label{fig:layers}
% 	\end{subfigure}

	\centering
	\caption{
		Subfigure (a) shows a temporal graph~$\TG$ and subfigure (b) shows its four layers~$G_1,\ldots,G_4$.
		The gray squared vertex forms a \strsep{s,z}, but no \nonstrsep{s,z}. 
		The two squared vertices form a \nonstrsep{s,z}.
		}
		\label{fig:example}
\end{figure}
%Note that in contrast to many multilayer models, the labels have a linear order in temporal graphs and, in contrast to dynamic graphs, all layers of a temporal graph have the same vertex set.\todo{hm: maybe note that having the same vertex set is no restriction in most cases, where it is possible to use isolated vertices to model vertices that appear / disappear.}

%\citet{wu2016efficient} collected a large set of real-world scenarios which can be modeled as a temporal graph.
%For example,~$A$ calls~$B$ at time~$t$ in telephone network,~$A$ sends message to~$B$ at time~$t$ in a messaging networks,~$A$ follows $B$ at time $t$ in social networks,~$A$ cites~$B$ at time~$t$ in citation networks,~$A$ works with~$B$ at time~$t$ in collaboration networks, and
%information spreads from~$A$ to~$B$ at time~$t$ in information dissemination networks.\todo{hm: most of these examples are directed. that is probably not a problem but should maybe be discussed.. Z:should be fine now.}
Many real-world applications have temporal graphs as underlying mathematical model. 
% Consider for instance a public transportation network where the edges correspond to bus or train connections between stations (vertices). 
For instance, it is natural to model connections in public transportation networks with temporal graphs. 
Other examples include information spreading in social networks, communication in social networks, biological pathways, or spread of diseases \cite{holme2012temporal}.
 
A fundamental question in temporal graphs, addressing issues such as 
connectivity~\cite{AxiotisF16,mertzios2013temporal}, survivability~\cite{liang2017survivability},
% decomposability\todo{TF: where and what?}, 
and robustness~\cite{ScellatoLMBZ13},
is whether there is a ``time-respecting'' path from a 
distinguished start vertex~$s$ to a distinguished target 
vertex~$z$.\footnote{In the literature the sink is usually denoted by $t$.
To be consistent with \citet{michail2016introduction} we use $z$ instead as we reserve $t$ to refer to points in time.}
We provide a thorough study of the computational complexity of separating 
$s$ from~$z$ in a given temporal graph. % by deleting few vertices with the aim to identify tractable cases.

Moreover, we study two natural restrictions of temporal graphs:
\begin{inparaenum}[(i)]
\item planar temporal graphs and
\item temporal graphs with a bounded number of vertices incident to edges that are not permanently existing---these vertices form the so-called \emph{temporal core}.
\end{inparaenum}
Both restrictions are naturally motivated by settings e.g.\ occurring in (hierarchical) traffic networks.
% As networks with public-transport embedded into the underlying street network are of interest in the context of robustness and connectivity, the following two typical characteristics of such networks motivate the restrictions that we study: (i) such networks are often (almost) planar, and (ii) the number of nodes incident with time edges, that is, edges that are not permanently present, is small (relatively to the total number of nodes).
We also consider two very similar but still significantly differing temporal 
path models (both used in the literature), leading to two corresponding 
models of temporal separation.

\subparagraph{Two path models.}
We start with the introduction of the ``non-strict'' path model~\cite{kempe2000connectivity}.
Given a temporal graph~$\TG=(V,\TE,\tau)$ with two distinct vertices~$s,z\in V$, a 
%(non-strict) 
\emph{temporal~$(s,z)$-path} of 
length~$\ell$ in~$\TG$ is a sequence
$P = ( (\{s=v_0,v_1\}, t_1), (\{v_1, v_2\}, t_2), \dots, (\{v_{\ell-1}, v_\ell=z\}, t_{\ell} ) )$  of time-edges in~$\TE$,
where $v_i\neq v_j$ for all $i, j\in \{0,\ldots,\ell\}$ with $i\neq j$
and~$t_i \leq t_{i+1}$ for all~$i \in \{1, \ldots,\ell -1\}$.
%Temporal paths are also known as time-respecting \cite{kempe2000connectivity} non-decreasing \cite{vassilevska2008nondecreasing} paths.
A vertex set~$S$ with $S\cap\{s,z\}=\emptyset$ is a \emph{\nonstrsep{s,z}} if there is no \nonstrpath{s,z} in~$\TG - S:= (V \setminus S, \set{ (\{v,w\},t) \in \TE ; v,w \in V \setminus S}, \tau)$. 
We are ready to state the central problem of our paper.
\problemdef{\nonstrproblem{}}
		{ A temporal graph~$\TG=(V,\TE,\tau)$, two distinct vertices~$s,z\in V$, and~$k \in \N$.}
		{Does $\TG$ admit a \nonstrsep{s,z} of size at most~$k$?   }
Our second path model is the ``strict'' variant.
A \nonstrpath{s,z} $P$ is called \emph{strict} if $t_i < t_{i+1}$ for all $i \in \{ 1, \dots, \ell -1 \}$.
In the literature, strict temporal paths are also known as journeys \cite{akrida2015temporally,akrida2017temporal,
michail2016introduction,mertzios2013temporal}.\footnote{We also refer to \citet{Himmel18} for a thorough discussion and  comparison of temporal path concepts.}
A vertex set~$S$ is a \emph{\strsep{s,z}} if there is no \strpath{s,z} in~$\TG - S$.
% Our second problem is the following.
% \problemdef{\strproblem{}}
% 		{ A temporal graph~$\TG=(V,\TE,\tau)$, two distinct vertices~$s,z\in V$, and~$k \in \N$.}
% 		{Does $\TG$ admit a \strsep{s,z} of size at most~$k$?   }
Thus, our second main problem, \strproblem{}, 
is defined in complete analogy to \nonstrproblem{}, just replacing (non-strict) 
temporal separators by strict ones.
% where given a temporal graph~$\TG=(V,\TE,\tau)$, two distinct vertices~$s,z\in V$, and~$k \in \N$, the question is whether $\TG$ admits a \strsep{s,z} of size at most~$k$. 
%Note that both problems are in the complexity class $\Wp \subseteq \XP{}$ and hence also in~\NP~\cite{Zschoche17}.\todo{this is the master thesis}

While the strict version of temporal separation immediately appears as natural, the non-strict variant can be viewed as a more conservative version of the problem. 
For instance, in a disease-spreading scenario the spreading speed might be unclear.
To ensure containment of the spreading by separating patient zero ($s$) from a certain target ($z$), a \nonstrsep{s,z} might be the safer choice.

\subparagraph{Main results.}
\newcommand{\smtab}[1]{\scriptsize#1}
\newcommand{\mrrb}[2]{\multirow{#1}{*}{\rotatebox[origin=c]{90}{#2}}}
\renewcommand{\arraystretch}{1.1}
\begin{table}[t]
  \setlength{\tabcolsep}{5pt}
  \centering
  \caption
  {% for \sproblem{}
	  Overview on our results. Herein, \NP{}-c.~abbreviates \NP{}-complete,~$n$ and~$\mte$ denote the number of vertices and time-edges, respectively,~$\ug{G}$ refers to the underlying graph of an input temporal graph.
% 	  and~$\gamma\in\N$ denotes some fixed constant.\\
    $ ^a$\,(\Cref{thm:nphard}; \Wone-hard wrt.~$k$) $^b$\,(\Cref{thm:str-t-4}) $^c$\,(\Cref{cor:sprobhardonplanar}) $^d$\,(\Cref{thm:planarfpt}) $^e$\,(\Cref{thm:fptcore})
%     \todo{TF: check table and fill up missing references}
  }
  \def\fbs{3.5cm}
  \begin{tabular}{@{}llllll@{}}%
  \toprule
  & \multicolumn{2}{l}{General} & \multicolumn{2}{l}{Planar~$\ug{G}$} & Temporal core  
  \vspace{-4pt}\\ &  \multicolumn{2}{l}{\smtab{(\cref{sec:gh})}} &  \multicolumn{2}{l}{\smtab{(\cref{sec:nonstr})}} & \smtab{(\cref{sec:strict})}
  \\\cmidrule{2-6}
	$(s,z)$-\textsc{Separation}	& $2\leq \tau\leq 4$	& $5\leq \tau$	& $\tau$~unbounded	& $\tau$~constant & constant size
	\\
    \midrule
      \textsc{Temporal}	& \multicolumn{2}{c}{\NP{}-complete$ ^a$}
					    & \NP{}-c.$^c$
					    & \emph{open}
					    & $n^{O(1)}+O(\mte\log \mte)$~$^e$
					    \\
	  \textsc{Strict Temporal}	& $\ON(k\cdot \mte)$~$^b$ & \NP{}-c.$ ^a$ & \NP{}-c.$^c$ & $\ON(\mte\log \mte)$~$^d$ & \NP{}-complete$^a$
							\\
    \bottomrule
%     &  \multicolumn{2}{c}{\smtab{(\cref{sec:gh})}} &  \multicolumn{2}{l}{\smtab{(\cref{sec:nonstr})}} & \smtab{(\cref{sec:strict})}
  \end{tabular}
  \label{tab:results}
\end{table}
%
% We give an overview of our results in \cref{tab:results} (together with an outline of the paper).\footnote{Due to the space constraints, several details and proofs (marked with~\appsymb{}) are deferred to the appendix.}
\cref{tab:results} provides an overview on our results.

A central contribution is to prove that both \nonstrproblem{} and \strproblem{} are \NP-complete for all~$\tau\geq 2$ and~$\tau\geq 5$, respectively, strengthening a result by~\citet{kempe2000connectivity} (they show \NP-hardness of both variants for all $\tau\geq 12$).
For \nonstrproblem{}, our hardness result is already tight.\footnote{\nonstrproblem{} with~$\tau=1$ is equivalent to \textsc{$(s,z)$-Separation} on static graphs.}
 For the strict variant, we identify a dichotomy in the computational complexity by proving polynomial-time solvability of \strproblem{} for~$\tau\leq 4$.
Moreover, we prove that both problems remain \NP-complete on temporal graphs that have an underlying graph that is planar.
%  To this end, we prove that~\lenbndproblem{} (for the problem definition, see~\cref{sec:gh}) is \NP-hard on planar graphs.
%  \todo{hm: imo this last sentence is a candidate to cut in case we run into space problems..}

%  \item  In \cref{sec:nonstr}, we identify tractable cases of both problems on temporal graphs with few layers. 
 
%  As a side product, we give a linear-time algorithm for the \textsc{Single-Source Shortest Strict Temporal Paths} problem, improving the best known running time~\cite{wu2016efficient} by a~$\log$-factor.
% 	Moreover, we prove that for constantly many layers, \strproblem{} on planar graphs is solvable in $\ON(|\TE|\log |\TE|)$ time.

%  \item In \cref{sec:strict}, 
 We introduce the notion of temporal cores in temporal graphs.
 Informally, the temporal core of a temporal graph is the set of vertices whose edge-incidences change over time.
We prove that \nonstrproblem{} is fixed-parameter tractable (\FPT) when parameterized by the size of the temporal core, while~\strproblem{} remains~\NP-complete even if the temporal core is empty.
% \end{compactitem}

A particular aspect of our results is that they demonstrate that the choice of the model (strict versus non-strict) for a problem can have a crucial impact on the computational complexity of said problem.
This contrasts with wide parts of the literature where both models were used without discussing the subtle but crucial differences in computational complexity.
% \todo[inline]{TF: should we also mention our parameterized results?}

\subparagraph{Technical contributions.}
To show the polynomial-time solvability of \strproblem{} for~$\tau\leq 4$, we prove that a classic separator result of \citet{lovasz1978mengerian} translates to the strict temporal setting. 
This is surprising since many other results about separators in the static case do not apply in the temporal case. In this context, we also develop a linear-time algorithm for \textsc{Single-Source Shortest Strict Temporal Paths}, improving the running time of the best known algorithm due to~\citet{wu2016efficient} by a logarithmic factor.

We settle the complexity of \lenbndproblem{} %(for the problem definition, see~\cref{sec:gh}) 
on planar graphs by showing its \NP-hardness, which was left unanswered by~\citet{FHNN18} and promises to be a valuable intermediate problem for proving hardness results.
In the hardness reduction for \lenbndproblem{} we introduce a grid-like, planarity-preserving vertex gadget that is generally useful to replace ``twin'' vertices which in many cases are not planarity-preserving and which are often used to model weights. 

While showing that \nonstrproblem{} is fixed-parameter tractable when parameterized by the size of the temporal core, we employ a case distinction on the size of the temporal core, and show that in the non-trivial case we can reduce the problem to \multiwaycut{}. 
We identify an ``above lower bound parameter'' for \multiwaycut{} that is suitable to lower-bound the size of the temporal core, thereby making it possible to exploit a fixed-parameter tractability result due to \citet{cygan2013multiway}. %This allows us to achieve the desired running time (it is not clear whether it is also possible with a ``standard'' FPT-algorithm for \multiwaycut{}). 

\subparagraph{Related work.}
Our most important reference is the work of \citet{kempe2000connectivity} who proved that \nonstrproblem{} is \NP-hard.
In contrast, \citet{berman} proved that computing temporal $(s,z)$-cuts (edge deletion instead of vertex deletion) is polynomial-time solvable. % developed an algorithm to compute the minimum number of time-\emph{edges} to delete such that the obtained temporal graph does not admit a \nonstrpath{s,z}.
In the context of survivability of temporal graphs,~\citet{liang2017survivability} studied cuts where an edge deletion only lasts for~$\delta$ consecutive time stamps.
Moreover, they studied a temporal maximum flow defined as the maximum number of sets of journeys where each two journeys in a set do not use a temporal edge within some~$\delta$ time steps.
A different notion of temporal flows on temporal graphs was introduced by~\citet{akrida2017temporal}.
They showed how to compute in polynomial~time the maximum amount of flow passing from a source vertex~$s$ to a sink vertex~$z$ until a given point in time.

The vertex-variant of Menger's Theorem~\cite{Menger1927} states that the maximum number of vertex-disjoint paths from~$s$ to~$z$ equals the size of a minimum-cardinality \nsep{s,z}.
In static graphs, Menger's Theorem allows for finding a minimum-cardinality~\nsep{s,z} via maximum flow computations.
However, \citet{berman} proved that the vertex-variant of an analogue to Menger's Theorem for temporal graphs, asking for the maximum number of (strict) temporal paths instead, does not hold. 
%in an essentially equivalent to temporal graphs~\cite{kempe2000connectivity} (while the edge-variant does).
\citet{kempe2000connectivity} proved that the vertex-variant of the former analogue
%\todo{TF: may decribe this analogue in more detail} 
to Menger's Theorem holds true if the underlying graph excludes a fixed minor.
\citet{mertzios2013temporal} proved another analogue of Menger's Theorem: 
the maximum number of \strpath{s,z} which never leave the same vertex at the same time equals the minimum number of node departure times needed to separate~$s$ from~$z$, where a node departure time $(v,t)$ is the vertex $v$ at time point $t$. 

\citet{MichailS16} introduced the time-analogue of the famous \textsc{Traveling Salesperson} problem and studied the problem on temporal graphs of dynamic diameter~$d\in\N$, that is, informally speaking, on temporal graphs where every two vertices can reach each other in at most~$d$ time steps at any time.
\citet{erlebach2015temporal} studied the same problem on temporal graphs where the underlying graph has bounded degree, bounded treewidth, or is planar.
Additionally, they introduced a class of temporal graphs with regularly present edges, that is, temporal graphs where each edge is associated with two integers upper- and lower-bounding consecutive time steps of edge absence.
% For planar max deg 4 graphs they show that one cannot do better. For tw they have a k\cdot n^1.5\cdot log(n) algorithm.
\citet{AxiotisF16} studied the problem of finding the smallest temporal subgraph of a temporal graph such that single-source temporal connectivity is preserved on temporal graphs where the underlying graph has bounded treewidth.
In companion work, we recently studied the computational complexity of (non-strict) temporal separation on several other restricted temporal graphs~\cite{FMNZ18}.

%%%%%%%%%%%%%%%%%%%%%%%%%%%%%%%%%%%%%%%%%%%%%%%%%%%%%%%%%%%%%%%%%%%%%%%%%%%%%%%%%%%%%%%%%%%%%%%%%%%%%%%
% END Intro %%%%%%%%%%%%%%%%%%%%%%%%%%%%%%%%%%%%%%%%%%%%%%%%%%%%%%%%%%%%%%%%%%%%%%%%%%%%%%%%%%%%%%%%%%%%%%%%%%%%%%
%%%%%%%%%%%%%%%%%%%%%%%%%%%%%%%%%%%%%%%%%%%%%%%%%%%%%%%%%%%%%%%%%%%%%%%%%%%%%%%%%%%%%%%%%%%%%%%%%%%%%%%

\section{Preliminaries}\label{sec:prem}
\appendixsection{sec:prem}

%%%%%%%%%%%%%%%%%%%%%%%%%%%%%%%%%%%%%%%%%%%%%%%%%%%%%%%%%%%%%%%%%%%%%%%%%%%%%%%%%%%%%%%%%%%%%%%%%%%%%%%
% BEGIN Prelims %%%%%%%%%%%%%%%%%%%%%%%%%%%%%%%%%%%%%%%%%%%%%%%%%%%%%%%%%%%%%%%%%%%%%%%%%%%%%%%%%%%%%%%%%%%%%%%%%%%%%%
%%%%%%%%%%%%%%%%%%%%%%%%%%%%%%%%%%%%%%%%%%%%%%%%%%%%%%%%%%%%%%%%%%%%%%%%%%%%%%%%%%%%%%%%%%%%%%%%%%%%%%%
 %and parameterized complexity~\cite{cygan2015parameterized,downey2013fundamentals,flum2006parameterized,Nie06}.
% In this section, we present formal definitions from (static) graph theory and parameterized complexity.
%\todo{RN: Perhaps FPT and W[1]-hard should be informally defined.}
% In this section we introduce all necessary notation and graph-theoretic concepts for our paper. 
Let $\N$ denote the natural numbers without zero.
For~$n\in \N$, we use~$[n]:=[1,n]=\{1,\ldots,n\}$.
% For a sequence~$x_1,\ldots,x_n$ and~$a,b\in[n]$,~$a<b$, we write~$x_{[a:b]}$ for the subsequence~$x_a,\ldots,x_b$.

% \subparagraph{Static graphs.}
\smallskip\noindent\emph{Static graphs.}
We use basic notations from (static) graph theory~\cite{diestel2000graphentheory}.
Let~$G = (V, E)$ be an \emph{undirected, simple graph}. 
% We use~$V(G)$, $E(G)$, and~$\Delta(G)$ to denote the set of vertices, set of edges, and the maximum vertex degree of~$G$, respectively.
We use~$V(G)$ and $E(G)$ to denote the set of vertices and set of edges of~$G$, respectively.
% Usually, we assume that~$G$ has no loops and hence for all~$\{v,w \} \in E(G)$ it holds that~$v \not = w$.
% Two vertices~$v,w \in V(G)$ are \emph{adjacent} if there is an edge~$\{v,w\} \in E(G)$.
% Two edges~$e_1,e_2 \in E(G)$ are \emph{incident} if~$e_1 \cap e_2 \not = \emptyset$.
% A vertex~$v \in V(G)$ is \emph{incident} with an edge~$e \in E(G)$ if~$v \in e$.
% We denote by~$G - E'$ the graph~$G$ without the edges in~$E' \subseteq E$, that is, the graph~$(V, E\setminus E')$.
We denote by~$G - V'$ $:=(V\setminus V',\set{ \{v,w\} \in E; v,w \in V \setminus V' })$ 
the graph~$G$ without the vertices in~$V'\subseteq V$.
% For~$v\in V$, $N(v) := \set{ w \in V(G) ; \{v, w\} \in E(G) }$ denotes the \emph{neighborhood} of~$v$ in~$G$.
% The \emph{maximum degree} of~$G$ is denoted by~$\Delta(G) := \max_{v \in V(G)} |N(v)|$.
For~$V' \subseteq V$,~$G[V']:=G - (V \setminus V')$ denotes the \emph{induced subgraph} of~$G$ by~$V'$. %, that is, . %and for~$Y \subseteq E$,~$G[Y]$ denotes the subgraph~$(\set{ v \in V ; \{v,w\} \in Y}, Y)$.
A \emph{path} of length~$\ell$ is sequence of edges~$P = (\{v_1, v_2\}, \{v_2, v_3\},\dots,\{v_\ell, v_{\ell+1}\})$ where $v_i\neq v_j$ for all $i, j\in[\ell+1]$ with $i\neq j$.
%that~$\Delta(G[E(P)]) = 2$ (we denote by~$E(P)=\{e_1,\ldots,e_\ell\}$),~$|e_i \cap e_{i+1}| = 1$ for all~$i \in \{ 1, \dots \ell -1 \}$, and~$e_i\not = e_j$ for all~$i\not=j \in [\ell]$.
We set~$V(P) = \{v_1,v_2,\ldots,v_{\ell+1}\}$.
Path~$P$ is an \emph{$(s,z)$-path} if~$s=v_1$ and~$z=v_{\ell+1}$. % or if~$P=e_1=\{s,z\}$.
% A \emph{cycle} is an~$(s,s)$-path.
% The vertices visited by a path~$P = e_1,\dots,e_\ell$ are denoted by~$V(P) = \bigcup_{i=1}^{\ell} e_i$.
% Two \npaths{s,z}~$P_1$ and~$P_2$ in a graph are \emph{vertex-disjoint} if~$V(P_1) \cap V(P_2) = \{s,z\}$ 
% and \emph{edge-disjoint} if~$E(P_1)\cap E(P_2)=\emptyset$.
A set~$S \subseteq V\setminus\{s,z\}$ of vertices is an \emph{\nsep{s,z}} if there is no \npath{s,z} in~$G-S$.

% \subparagraph{Temporal graphs.}
\smallskip\noindent\emph{Temporal graphs.}
Let~$\TG = (V,\TE,\tau)$ be a temporal graph.
% Two vertices~$v,w \in V$ are \emph{adjacent} if there is an edge~$(\{v,w\},t) \in E$.
% Two time-edges~$(e_1,t_1),(e_2,t_2) \in E$ are \emph{incident} if~$e_1 \cap e_2 \not = \emptyset$.
% A vertex~$v \in V$ is \emph{incident} with a time-edge~$(e,t) \in E$ if~$v \in e$.
The graph~$G_i(\TG) = (V, E_i(\TG))$ is called \emph{layer}~$i$ of the temporal graph~$\TG = (V,\TE,\tau)$ where~$\{v,w\} \in E_i(\TG) \aq (\{ v, w \}, i) \in \TE$. % and~$v,w \in V_i \aq \{v,w \} \in E_i$.
The \emph{underlying graph}~$\ug{G}(\TG)$ of a temporal graph~$\TG = (V,\TE,\tau)$ is defined as~$\ug{G}(\TG) := (V,\ug{E}(\TG))$, where~$\ug{E}(\TG) = \set{ e ; (e,t) \in \TE}$.
(We write~$G_i$,~$E_i$,~$\ug{G}$, and~$\ug{E}$ for short if~$\TG$ is clear from the context.)
% In contrast to the temporal graph~$G$, the underlying graph~$\ug{G}$ has at most one edge between two vertices
For~$X \subseteq V$ we define the \emph{induced temporal subgraph} of~$X$ by~$\TG[X] := (X,\set{ (\{v,w\},t) \in \TE ; v,w \in X },\tau)$.
We say that~$\TG$ is \emph{connected} if its underlying graph $\ug{G}$ is connected.
% We define the \emph{maximum degree} of a temporal graph~$\Delta(\TG)$ as the maximum degree of its underlying graph~$\Delta(\ug{G})$.
% Let~$s,z \in V$.
% A \emph{temporal ~$(s,z)$-path} of length~$\ell$ in~$G$ is a sequence of time-edges
% 	$ 
% 		P = (\{s,v_1\}, t_1), (\{v_1, v_2\}, t_2), \dots, (\{v_{\ell-1}, z\}, t_{\ell} ),
% 	$
% 	where no time-edge appears twice in~$P$,~$t_i \leq t_{i+1}$, and~$(\{s,v_1\}, t_1), (\{v_i, v_{i+1}\}, t_{i+1}),$
% 	$(\{v_{\ell-1}, z\}, t_{\ell}) \in E$ for all~$i \in \{ 1, \dots, \ell -1 \}$.
% In the literature, temporal paths are also known as time-respecting \cite{kempe2000connectivity} non-decreasing \cite{vassilevska2008nondecreasing} paths.
% Let~$S \subseteq V$ a set of vertices.
% We denote the temporal graph~$G$ without~$S$ by~$G - S := (V \setminus S, \set{ (\{v,w\},t) \in E ; v,w \in V \setminus S}, \tau)$.
% A vertex set~$S$ is a \emph{\nonstrsep{s,z}} if there is no \nonstrpath{s,z} in~$G - S$. 
% The vertices \emph{visited} by~$P$ are denoted by~$V(P) = \bigcup_{i=1}^{\ell} e_i$.
For surveys concerning temporal graphs we refer to~\cite{casteigts2012time,michail2016introduction,holme2012temporal,latapy2017stream,holme2015}.
	  
 %\subparagraph{Strict and non-strict temporal separators.}
Regarding our two models, we have the following connection:
\begin{restatable}{lem}{strtononstr}
	\label{lemma:from-str-to-nonstr}
	There is a linear-time computable many-one reduction from \strproblem{} to \nonstrproblem{} that maps any instance $(\TG=(V,\TE,\tau),s,z,k)$ to an instance $(\TG'=(V',\TE',\tau'),s,z,k')$ with~$k'=k$ and~$\tau'=2\cdot\tau$.
\end{restatable}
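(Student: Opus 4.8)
The plan is to ``stretch time'' by subdividing each time-edge and distributing the two halves over two consecutive layers, so that the non-decreasing time condition of a \nonstrpath{s,z} in $\TG'$ is forced to behave like the strictly increasing time condition of a \strpath{s,z} in $\TG$. Concretely, from $\TG=(V,\TE,\tau)$ I would build $\TG'=(V',\TE',2\tau)$ as follows. I keep all original vertices, and for every time-edge $(\{v,w\},t)\in\TE$ I introduce two fresh subdivision vertices $u_{v,w,t}$ and $u_{w,v,t}$ together with the time-edges $(\{v,u_{v,w,t}\},2t-1)$, $(\{u_{v,w,t},w\},2t)$ and, symmetrically, $(\{w,u_{w,v,t}\},2t-1)$, $(\{u_{w,v,t},v\},2t)$, and I set $k'=k$. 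The two subdivision vertices realize the two traversal directions of the undirected edge: each can be entered only at the odd layer $2t-1$ and left only at the even layer $2t$, since traversing it the other way would require decreasing time labels, which a \nonstrpath{s,z} forbids. This construction is computable in linear time and uses exactly $2\tau$ layers.

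Second, I would establish the path correspondence. Since every edge of $\TG'$ joins an original vertex to a subdivision vertex and every subdivision vertex has degree two, any \nonstrpath{s,z} in $\TG'$ alternates between original and subdivision vertices, and every subdivision vertex it visits is entered at its odd layer and left at its even layer. Hence such a path induces a sequence of original vertices $s=x_0,x_1,\dots,x_\ell=z$ with original times $t_1,\dots,t_\ell$, and the non-decreasing condition across the transition from one subdivision vertex to the next reads $2t_i\le 2t_{i+1}-1$, which over the integers is equivalent to $t_i<t_{i+1}$. Thus the induced sequence is exactly a \strpath{s,z} in $\TG$, and conversely every \strpath{s,z} in $\TG$ lifts to a \nonstrpath{s,z} in $\TG'$ with the same original-vertex sequence. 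This identification is the technical heart of the reduction.

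Third, I would transfer separators across the correspondence. For the forward direction, any \strsep{s,z} $S\subseteq V\setminus\{s,z\}$ is also a \nonstrsep{s,z} of $\TG'$: a surviving \nonstrpath{s,z} in $\TG'-S$ would, by the correspondence, yield a \strpath{s,z} in $\TG-S$ whose original vertices avoid $S$, a contradiction. For the backward direction, given a \nonstrsep{s,z} $S'$ of $\TG'$ of size at most $k$, I would construct $S$ by keeping every original vertex of $S'$ and replacing each subdivision vertex $u_{v,w,t}\in S'$ by one of its endpoints $v,w$ that lies outside $\{s,z\}$. Then $|S|\le|S'|\le k$ and $S\subseteq V\setminus\{s,z\}$; if some \strpath{s,z} survived in $\TG-S$, its lift would avoid $S'$ (its original vertices avoid $S'\cap V\subseteq S$, and every edge $\{v,w\}$ it uses has both endpoints outside $S$, so the corresponding subdivision vertex is not in $S'$), contradicting that $S'$ separates.

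I expect the main obstacle to be the degenerate case in which $\{s,z\}\in\ug{E}(\TG)$: then $\TG$ admits the length-one \strpath{s,z} $(\{s,z\},t)$, which no admissible vertex set can destroy, so the strict instance is a trivial no-instance; yet the corresponding subdivision vertex $u_{s,z,t}$ in $\TG'$ \emph{can} be deleted, which would threaten to make $\TG'$ a yes-instance and break the backward direction (this is exactly where the endpoint-replacement step fails, since both endpoints are terminals). I would handle this by a preprocessing step: whenever $\{s,z\}\in\ug{E}(\TG)$, I output a fixed no-instance on the vertex set $\{s,z\}$ carrying a permanently present $s$--$z$ edge, with $2\tau$ layers and bound $k$. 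With this single case excluded, the endpoint replacement is always well defined and the two directions above yield a correct linear-time many-one reduction with $k'=k$ and $\tau'=2\tau$.
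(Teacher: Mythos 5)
Your construction is exactly the paper's: two directed subdivision vertices per time-edge, placed at layers $2t-1$ and $2t$, with the same alternation argument for the path correspondence and the same separator transfer in both directions. The only difference is that you explicitly handle the degenerate case $\{s,z\}\in\ug{E}(\TG)$, which the paper excludes by a standing assumption that there is no time-edge between $s$ and $z$; your treatment of it is correct.
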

\begin{proof}
  Let $\I=(\TG=(V,\TE,\tau),s,z,k)$ be an instance  of \strproblem{}.
  We construct an equivalent instance $\I'=(\TG'=(V',\TE',\tau'=2\tau),s,z,k'=k)$ in linear-time.
  Set~$V'= V \cup V_E$, where~$V_E:=\set{e_{(v,w),i},e_{(w,v),i}; e \in \TE}$ is called the set of \emph{edge-vertices}.
  Next, let~$\TE'$ be initially empty.
  For each $(\{v,w\},t) \in \TE$, add the time-edges $(\{ v, e_{(v,w),t} \}, 2t-1),(\{e_{(v,w),t},w\},2t), (\{w,e_{(w,v),t}\},2t-1),(\{e_{(w,v),t},v\},2t)$ to~$\TE'$.
  This completes the construction of~$\TG'$.
  Note that this can be done in~$\ON(|V|+|\TE|)$ time.
  It holds that~$|V'| = |V|+2\cdot|\TE|$ and that~$|\TE'| = 4\cdot |\TE|$.
  
  We claim that $\I$ is a \yes-instance if and only if $\I'$ is a \yes-instance.
  
  \implone{}
  Let~$S$ be a temporal~$(s,z)$-separator in~$\TG$ of size at most~$k$.
  We claim that~$S$ is also a temporal~$(s,z)$-separator in~$\TG'$.
  Suppose towards a contradiction that this is not the case.
  Then there is a temporal~$(s,z)$-path~$P$ in~$\TG'-S$.
  Note that the vertices on~$P$ alternated between vertices in~$V$ and~$V_E$.
  As each vertex in~$V_E$ corresponds to an edge, there is a temporal $(s,z)$-path in~$\TG-S$ induced by the vertices of~$V(P)\cap V$.
  This is a contradiction.

  \impltwo{}
  Observe that from any temporal~$(s,z)$-separator, we can obtain a temporal~$(s,z)$-separator of not larger size that only contains vertices in~$V$.
  Let~$S'$ be a temporal~$(s,z)$-separator in~$\TG'$ of size at most~$k$ only containing vertices in~$S$.
  We claim that~$S'$ is also a temporal~$(s,z)$-separator in~$\TG'$.
  Suppose towards a contradiction that this is not the case.
  Then there is a temporal~$(s,z)$~path~$P$ in~$\TG-S'$.
  Note that we can obtain a temporal~$(s,z)$-path~$P'$ in~$\TG'-S'$ by adding for all consecutive vertices $v$, $w$, where~$v$ appears before~$w$ at time-step~$t$ on~$P$, the vertex~$e_{(v,w),t}$.
  This is a contradiction.
\end{proof}
Throughout the paper we assume that the underlying graph of the temporal input graph $\TG$ is connected and that there is no time-edge between~$s$ and~$z$.
Furthermore, in accordance with \citet{wu2016efficient} we assume that the time-edge set $\TE$ is ordered by ascending time stamps.
%\footnote{If this is not the case, then $\TE$ can be sorted by ascending labels with bucketsort or mergesort in~$\ON(\min\{\tau,|\TE|\log|\TE|\})$ time.}
%

% \toappendix
 \subsection{The Maximum Label is Bounded in the Input Size}
 %	\label{sec:t-bound}
 In the following, we prove that for every temporal graph in an input to~\sproblem{}, we can assume that the number of layers is at most the number of time-edges.
Observe that a layer of a temporal graph that contains no edge is irrelevant for~\nonstrproblem{}.
This also holds true for the strict case.
Hence, we can delete such a layer from the temporal graph.
This observation is formalized in the following two data reduction rules.
\begin{rrule}\label{rr:remove-idle}
	Let~$\TG = (V,\TE,\tau)$ be a temporal graph 
	and let~$[t_1, t_2] \subseteq [1,\tau]$ be an interval where for all~$t \in [t_1, t_2]$ the layer~$G_{t}$ is an edgeless graph.
	Then for all~$(\{v,w\},t') \in \TE$ where~$t' > t_2$ replace~$(\{v,w\},t')$ with~$(\{v,w\},t' - t_2 + t_1 - 1)$ in~$\TE$.
\end{rrule}
\begin{rrule}\label{rr:remove-idle-end}
	Let~$\TG = (V,\TE,\tau)$ be a temporal graph.
	If there is a non-empty interval~$[t_1, \tau]$ where  for all~$t' \in [t_1, \tau]$ the layer~$G_{t'}$ is an edgeless graph, then set~$\tau$ to~$t_1-1$. 
\end{rrule}
\noindent
We prove next that both reduction rules are exhaustively applicable in linear time.
%\begin{restatable}{lemma}{removeidle}
\begin{lem}
	\label{rr-proof:remove-idle}
	\cref{rr:remove-idle,rr:remove-idle-end} do not remove or add any \nonstrpath{s,z} from/to the temporal graph~$\TG=(V,\TE,\tau)$ and can be exhaustively applied in~$\ON(|\TE|)$ time.
\end{lem}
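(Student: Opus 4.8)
The plan is to prove two separate assertions: correctness (the rules preserve exactly the set of \nonstrpaths{s,z}) and the running-time bound (exhaustive application in $\ON(|\TE|)$ time). I would treat \cref{rr:remove-idle} and \cref{rr:remove-idle-end} in turn, since the second is essentially a degenerate tail-case of the first.

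For correctness of \cref{rr:remove-idle}, the key observation is that the rule only relabels time stamps by a fixed additive shift on all time-edges occurring strictly after an edgeless interval $[t_1,t_2]$, while leaving the set of edges (forgetting time stamps) and their relative temporal order completely unchanged. Concretely, I would argue that the map $t' \mapsto t' - (t_2 - t_1 + 1)$ applied to every time-edge with $t' > t_2$ is an order-preserving bijection on the multiset of time stamps actually carried by edges: edges with label at most $t_1 - 1$ keep their label, no edge has a label in $[t_1, t_2]$ by hypothesis, and edges with label exceeding $t_2$ are all shifted down by the same constant. Since a \nonstrpath{s,z} is a sequence of time-edges respecting $t_i \le t_{i+1}$, and a monotone relabeling preserves the relation $\le$ between any two labels, every \nonstrpath{s,z} in $\TG$ maps to a \nonstrpath{s,z} in the reduced graph using the same underlying edges in the same order, and conversely. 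Hence no temporal path is created or destroyed. I would note explicitly that the same monotonicity argument shows the rule likewise preserves \strpaths{s,z} (the relation $<$ is preserved too), which is the remark the surrounding text anticipates. For \cref{rr:remove-idle-end}, truncating $\tau$ to $t_1 - 1$ when all layers in $[t_1,\tau]$ are edgeless removes no time-edge at all, so trivially no path is affected.

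For the running-time bound, the plan is to avoid naively rescanning the temporal graph after each application. Since $\TE$ is assumed sorted by ascending time stamp (as stated just before \cref{rr:remove-idle}), I would make a single left-to-right pass over $\TE$, maintaining a running ``compression offset'' that accumulates the total length of all edgeless gaps seen so far. When moving from one occupied time stamp to the next, the gap between them is precisely an edgeless interval, and the cumulative offset is increased by that gap length; each time-edge's label is then decremented by the current offset. This realizes the net effect of applying \cref{rr:remove-idle} exhaustively in one sweep, touching each time-edge a constant number of times, for $\ON(|\TE|)$ total time. Applying \cref{rr:remove-idle-end} is then just reading the maximal relabeled time stamp and setting $\tau$ to it, which is $\ON(1)$ given the sorted order.

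The main obstacle I anticipate is not any single step but rather the bookkeeping that exhaustive application of \cref{rr:remove-idle} does not interfere with the sorted-order assumption or with subsequent applications: a careless argument might fear that shifting labels merges or reorders edges, or that after one application a fresh edgeless interval appears that requires reprocessing. I would forestall this by emphasizing that the shift is strictly decreasing on labels above the gap and leaves labels below it fixed, so sortedness is preserved and no two distinct time stamps ever collide into contradictory order; and by formulating the single-pass offset computation so that it already accounts for all gaps simultaneously, no second pass is ever needed. This is where I would be most careful to keep the amortized per-edge cost constant rather than allowing a quadratic blowup from repeated relabeling.
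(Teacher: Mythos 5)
Your proposal is correct and follows essentially the same route as the paper: the paper establishes preservation of \nonstrpaths{s,z} by a three-case analysis on where the edgeless interval falls relative to the path's time stamps, which is just an unrolled version of your cleaner observation that the relabeling is an order-preserving bijection on the occupied time stamps, and its running-time argument is exactly your single-pass cumulative-offset sweep over the sorted set~$\TE$. Your explicit remark that the strict relation~$<$ is preserved too is a small bonus consistent with what the surrounding text asserts without proof.
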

\begin{proof}
	First we discuss \cref{rr:remove-idle}.
	Let~$\TG = (V,\TE,\tau)$ be a temporal graph,
	$s,z \in V$, 
	$[t_\alpha,t_\beta] \subseteq [\tau]$ be an interval where for all~$t \in [t_\alpha,t_\beta]$ the layer~$G_t$ is an edgeless graph.
	Let~$P = ((e_1,t_1),\dots,(e_i,t_i),(e_{i+1},t_j),\dots,(e_n,t_n))$ be a \nonstrpath{s,z},
	and let~$\TG'$ be the graph after we applied \cref{rr:remove-idle} once on~$\TG$.
	We distinguish three cases.
	\begin{compactenum}[\bf {Case }1:]
		\item If~$t_\beta > t_n$, then no time-edge of~$P$ is touched by \cref{rr:remove-idle}.
			Hence,~$P$ also exists in~$\TG'$.
		\item If~$t_i < t_\alpha < t_\beta < t_{i+1}$, then there is a \nonstrpath{s,z}~$((e_1,t_1),\dots,(e_i,t_i),$ $(e_{i+1},t_j - t_\beta + t_\alpha - 1),\dots,(e_n,t_n - t_\beta + t_\alpha - 1))$ in~$\TG'$, because~$t_i < t_{i+1} - t_\beta + t_\alpha - 1$.
		\item If~$t_\beta < t_1$, then there is clearly a \nonstrpath{s,z}~$((e_1,t_1- t_\beta + t_\alpha - 1),\dots,$ $(e_n,t_n - t_\beta + t_\alpha - 1))$ in~$\TG'$
	\end{compactenum}
	The other direction works analogously.
	We look at a \nonstrpath{s,z} in~$\TG'$ and compute the corresponding \nonstrpath{s,z} in~$\TG$.

\begin{figure}[t!]
	\begin{subfigure}[c]{0.5\textwidth}
	\centering
	\includegraphics{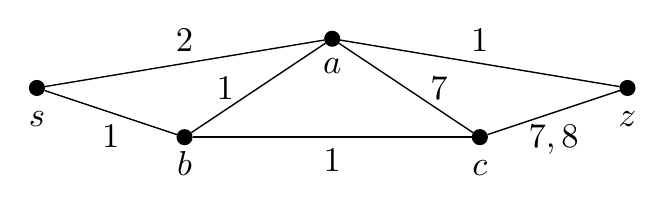}
		\subcaption{\cref{rr:remove-idle} is applicable}
		\label{fig:rr-before}
	\end{subfigure}
	\begin{subfigure}[c]{0.5\textwidth}
	\centering
	\includegraphics{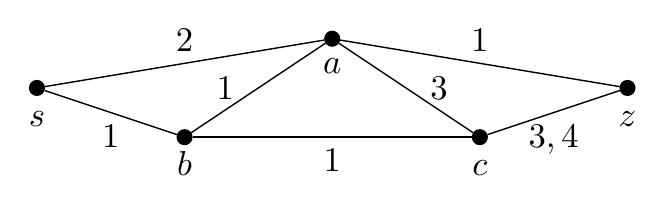}
		\subcaption{\cref{rr:remove-idle} is not applicable}
		\label{fig:rr-after}
	\end{subfigure}
	\caption{
		\cref{fig:rr-before} shows a temporal graph where \cref{rr:remove-idle} is applicable. 
		In particular, layers~$3,4,5,6$ are edgeless. \cref{fig:rr-after} shows the same temporal graph after \cref{rr:remove-idle} was applied exhaustively.
		}
	\label{fig:remove-idle-example}
\end{figure}

	\cref{rr:remove-idle} can be exhaustively applied by iterating over the by time-edges~$(e_i,t_i)$ in the time-edge set~$E$ ordered by ascending labels until the first~$t_1,t_2$ with the given requirement appear. 
	Set~$x_0 := - t_2 + t_1 - 1$.
	Then we iterate further over~$E$ and replace each time-edge~$(e,t)$ with~$(e,t + x_0)$ until the next~$t_1,t_2$ with the given requirement appear.
	Then we set~$x_1 = x_0 - t_2 + t_1 - 1$ and iterate further over~$\TE$ and replace each time-edge~$(e,t)$ with~$(e,t + x_1)$.
	We repeat this procedure until the end of~$\TE$ is reached.
	Since we iterate over~$\TE$ only once, this can be done in~$\ON(|\TE|)$ time.

	\cref{rr:remove-idle-end} can be executed in linear time by iterating over all edges and taking the maximum label as~$t_1$.
Note that the sets~$V$ and~$\TE$ remain untouched by~\cref{rr:remove-idle-end}.
Hence, the application of \cref{rr:remove-idle-end} does not add or remove any \nonstrpath{s,z}. 
\end{proof}
A consequence of \cref{rr-proof:remove-idle} is that the maximum label~$\tau$ can be upper-bounded by the number of time-edges and hence the input size.

\begin{restatable}{lem}{tbound}
	\label{lemma:t-bound}
	Let $\I = (\TG=(V,\TE,\tau),s,z,k)$ be an instance of \sproblem{}.
	There is an algorithm which computes in $\ON(|\TE|)$~time an instance~$\I'=(\TG'=(V,\TE',\tau'),s,z,k)$ of \sproblem{} which is equivalent to~$\I$, where $\tau' \leq |\TE'|$.
\end{restatable}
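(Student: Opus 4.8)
The plan is to produce $\I'$ by applying \cref{rr:remove-idle,rr:remove-idle-end} exhaustively to $\TG$ and then to read the bound $\tau' \leq |\TE'|$ off the resulting temporal graph. By \cref{rr-proof:remove-idle} this exhaustive application takes $\ON(|\TE|)$ time and changes neither the set of \nonstrpaths{s,z} nor the vertices $s,z$. Since both rules merely relabel time-edges or decrease $\tau$, they leave $V$, the budget $k$, and the number of time-edges intact, so $|\TE'| = |\TE|$ and $\I' = (\TG',s,z,k)$ is an instance with the same parameters as $\I$. Because the \nonstrpaths{s,z} are preserved, a vertex set $S$ is a \nonstrsep{s,z} in $\TG$ if and only if it is one in $\TG'$, whence $\I$ and $\I'$ are equivalent instances of \nonstrproblem{}.

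For the strict variant I would verify that the same two rules are safe. The only point needing care is the downward shift in \cref{rr:remove-idle}, which maps every label $t'$ above an edgeless interval $[t_1,t_2]$ to $t' - t_2 + t_1 - 1$, i.e.\ decreases it by $t_2 - t_1 + 1 \geq 1$. No used label lies in $[t_1,t_2]$, so this map is strictly order-preserving on the labels that actually occur; and for a \strpath{s,z} crossing the gap the slack of at least one time step guarantees that a strict inequality $t_i < t_{i+1}$ is preserved (and, reading the rule backwards, recreated). Thus the three-case analysis of \cref{rr-proof:remove-idle} goes through verbatim with $<$ in place of $\leq$, so $\I$ and $\I'$ are equivalent as instances of \strproblem{} as well; together this settles \sproblem{}.

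It remains to establish $\tau' \leq |\TE'|$. The key claim is that after exhaustive application no layer $G_t(\TG')$ with $t \in [\tau']$ is edgeless. Suppose otherwise and let $[t_1,t_2]$ be the maximal edgeless interval containing such a $t$. If $t_2 = \tau'$, then $[t_1,\tau']$ is a nonempty edgeless interval and \cref{rr:remove-idle-end} would still be applicable; if $t_2 < \tau'$, then $G_{t_2+1}(\TG')$ contains an edge, exhibiting a time-edge with label larger than $t_2$, so \cref{rr:remove-idle} would still be applicable. Both cases contradict exhaustiveness. Hence each of the $\tau'$ layers contains at least one time-edge, and since every time-edge $(e,t) \in \TE'$ belongs to exactly the layer indexed by its label $t$, distinct layers are charged to distinct time-edges. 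This yields $\tau' \leq |\TE'| = |\TE|$, as required.

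The main obstacle is the bookkeeping for the strict case: \cref{rr-proof:remove-idle} is phrased only for \nonstrpaths{s,z}, so I must separately confirm that the shift in \cref{rr:remove-idle} preserves strictness in both directions, which rests entirely on the shift amount being at least the number of removed layers. Everything else is a short structural argument resting on the linear-time guarantee already provided by \cref{rr-proof:remove-idle}.
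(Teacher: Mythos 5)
Your proposal is correct and follows essentially the same route as the paper: apply \cref{rr:remove-idle,rr:remove-idle-end} exhaustively via \cref{rr-proof:remove-idle} and then observe that every remaining layer contains a time-edge, giving $\tau' \leq |\TE'|$. Your extra verification that the label shift preserves strictness is a point the paper only asserts informally (its \cref{rr-proof:remove-idle} is stated for non-strict paths), so that added care is welcome but does not change the argument.
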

\begin{proof}
	Let~$\TG = (V,\TE,\tau)$ be a temporal graph, where \cref{rr:remove-idle,rr:remove-idle-end} are not applicable.
	Then for each~$t \in [\tau]$ there is a time-edge~$(e,t) \in \TE$.
	Thus, $\tau \leq \sum_{i=1}^\tau |\set{ (e,t) \in \TE ; t=i}| \leq |\TE|$.
\end{proof}

%%%%%%%%%%%%%%%%%%%%%%%%%%%%%%%%%%%%%%%%%%%%%%%%%%%%%%%%%%%%%%%%%%%%%%%%%%%%%%%%%%%%%%%%%%%%%%%%%%%%%%%
% END Prelims %%%%%%%%%%%%%%%%%%%%%%%%%%%%%%%%%%%%%%%%%%%%%%%%%%%%%%%%%%%%%%%%%%%%%%%%%%%%%%%%%%%%%%%%%
%%%%%%%%%%%%%%%%%%%%%%%%%%%%%%%%%%%%%%%%%%%%%%%%%%%%%%%%%%%%%%%%%%%%%%%%%%%%%%%%%%%%%%%%%%%%%%%%%%%%%%%

%%%%%%%%%%%%%%%%%%%%%%%%%%%%%%%%%%%%%%%%%%%%%%%%%%%%%%%%%%%%%%%%%%%%%%%%%%%%%%%%%%%%%%%%%%%%%%%%%%%%%%%
%%%%%%%%%%%%%%%%%%%%%%%%%%%%%%%%%%%%%%%%%%%%%%%%%%%%%%%%%%%%%%%%%%%%%%%%%%%%%%%%%%%%%%%%%%%%%%%%%%%%%%%
%%%%%%%%%%%%%%%%%%%%%%%%%%%%%%%%%%%%%%%%%%%%%%%%%%%%%%%%%%%%%%%%%%%%%%%%%%%%%%%%%%%%%%%%%%%%%%%%%%%%%%%

\section{Hardness Dichotomy Regarding the Number of Layers}
\label{sec:gh}
\appendixsection{sec:gh}
In this section we settle the complexity dichotomy of both \nonstrproblem{} and \strproblem{} regarding the number~$\tau$ of time steps.
We observe that both problems are strongly related to the following \NP{}-complete~\cite{CORLEY1982157,SchieberBK95} problem:
\problemdef{\lenbndproblem{} (LBS)}
{An undirected graph~$G=(V,E)$, distinct vertices~$s,z\in V$, and~$k,\ell\in \N$.}
{Is there a subset~$S\subseteq V\setminus\{s,z\}$ such that~$|S|\leq k$ and there is no $(s,z)$-path in~$G-S$ of length at most~$\ell$?}
\lenbndproblem{} is \NP-complete even if the lower bound~$\ell$ for the path length is five~\cite{Baier2010} and \Wone{}-hard with respect to the postulated separator size~\cite{Golovach201172}.
We obtain the following, improving a result by~\citet{kempe2000connectivity} who showed \NP-completeness of \nonstrproblem{} and \strproblem{} for all $\tau\geq 12$.

\begin{restatable}{theorem}{nphard}
\label{thm:nphard}
%  The following hold true ($\tau$ denotes the maximum label):
%  \begin{compactenum}[(i)]
\nonstrproblem{} is \NP-complete for every maximum label~$\tau \geq 2$ and \strproblem{} is \NP-complete for every~$\tau \geq 5$.
Moreover, both problems are~\Wone{}-hard when parameterized by the solution size~$k$.
%  \end{compactenum}
  %
\end{restatable}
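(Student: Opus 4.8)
The plan is to reduce from \lenbndproblem{} (LBS), which the excerpt has just recalled to be \NP-complete even when the length bound $\ell=5$, and \Wone-hard with respect to the separator size $k$. Transferring both hardness conclusions (classical \NP-hardness and \Wone-hardness in $k$) simultaneously is the natural route, since an LBS instance already contains a separator-size budget $k$ and a length bound $\ell$, and a length-bounded separator is morally the same object as a temporal separator: in a temporal graph, a temporal path that must respect time stamps is implicitly length-restricted by the number of available layers. The guiding idea is to encode the ``length at most $\ell$'' constraint of LBS as the ``number of layers'' constraint of the temporal graph, so that an $(s,z)$-path of length $\le \ell$ in $G$ corresponds exactly to a temporal $(s,z)$-path in the constructed temporal graph $\TG$, and vice versa. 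Because membership in \NP{} is routine (guess the separator $S$ of size $\le k$, then verify in polynomial time that no temporal $(s,z)$-path survives in $\TG-S$ using a shortest-temporal-path computation), the substance is entirely in the two hardness reductions.

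For the non-strict case with $\tau=2$, I would start from an LBS instance $(G=(V,E),s,z,k,\ell)$ and build a temporal graph on a suitable vertex set that forces any surviving $(s,z)$-walk to traverse $G$ in a bounded number of ``phases.'' The cleanest device is a layered/time-expanded copy of $G$: create $\ell$ (or a constant number of) positional copies of each vertex and place the edges of $G$ into time steps so that a temporal $(s,z)$-path is compelled to move forward through the copies, thereby visiting at most $\ell$ edges. Since in the non-strict model an arbitrary number of edges may be used within a single layer, I expect to need only a very small number of layers; getting $\tau=2$ specifically will require a careful gadget that splits ``early'' and ``late'' portions of every short path across exactly two time stamps, presumably by routing through an auxiliary structure so that any path of length $>\ell$ is impossible to complete within the two layers while every length-$\le\ell$ path is realizable. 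The correctness argument is the standard two-direction equivalence: a length-bounded separator in $G$ lifts to a temporal separator in $\TG$ of the same size (this is where the budget $k'=k$ is preserved, yielding the \Wone-hardness in $k$), and conversely, after arguing one may take the temporal separator to consist only of ``original'' vertices (as in the $\Leftarrow$ direction of \cref{lemma:from-str-to-nonstr}), it projects back to a length-bounded separator in $G$.

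For the strict case, I would leverage the reduction already proved in \cref{lemma:from-str-to-nonstr}, which maps a \strproblem{} instance to a \nonstrproblem{} instance while exactly doubling $\tau$ and preserving $k$. Reading that reduction in the contrapositive spirit, a non-strict construction with $\tau=2$ suggests a strict construction with a constant blow-up in the number of layers; the threshold $\tau\ge 5$ in the statement is exactly the kind of small constant one lands on after inserting subdivision/edge vertices to turn simultaneous non-strict edge uses into strictly increasing time stamps. Concretely I would either directly design a strict gadget in which each ``step'' of a short $(s,z)$-path consumes a distinct time stamp, spending the constant budget of layers to encode the forced forward progress, or apply the machinery of \cref{lemma:from-str-to-nonstr} to a tailored non-strict instance and then verify that the resulting number of layers can be pushed down to $5$. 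In both cases the size budget $k$ is carried through unchanged, so the \Wone-hardness in $k$ propagates to the strict variant as well.

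The main obstacle I anticipate is pinning the number of layers to the exact stated thresholds ($\tau=2$ non-strict, $\tau=5$ strict) rather than to some larger constant: the easy version of the reduction wastes a layer per unit of path length, giving $\tau$ proportional to $\ell$, and compressing this down to a single pair of layers (non-strict) or five layers (strict) is where the real gadget engineering lies. The crux is ensuring that the length constraint $\ell$ of LBS is enforced by the \emph{combinatorics of reachability across few fixed layers} instead of by the number of layers itself — that is, I must arrange that short $(s,z)$-paths and only short ones survive, using a structure (such as repeated positional copies cycling through a constant set of time stamps) that decouples the path-length bound from $\tau$. I would also need to double-check that the underlying graph and the separator budget are not inflated in a way that breaks the \Wone-reduction's parameter-preserving requirement, and that the ``restrict the separator to original vertices'' step goes through in the strict model as well, so that the backward direction of the equivalence holds.
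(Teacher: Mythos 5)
Your plan for the strict variant is essentially the paper's: since a strict temporal path uses at most one edge per time step, taking $\ell$ identical layers $G_1=\dots=G_\ell=G$ turns \lenbndproblem{} directly into \strproblem{} with $\tau=\ell$, and with the known \NP-hardness of \lenbndproblem{} for $\ell=5$ and its \Wone-hardness in $k$ (the reduction preserves $k$) this yields exactly the claimed $\tau\geq 5$ hardness. One correction there: you propose to also obtain the strict hardness by reading \cref{lemma:from-str-to-nonstr} ``in the contrapositive spirit,'' but that lemma reduces \emph{from} the strict \emph{to} the non-strict problem, so it transfers hardness in the opposite direction from what you want; it is in fact the tool the paper uses to push the \Wone-hardness (with $k$ preserved) from the strict variant over to the non-strict one.

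The genuine gap is in the non-strict case with $\tau=2$. Your whole strategy rests on the premise that ``a temporal path that must respect time stamps is implicitly length-restricted by the number of available layers,'' and this is false in the non-strict model: a \nonstrpath{s,z} may traverse arbitrarily many edges within a single layer, so a two-layer temporal graph admits temporal paths of unbounded length, and there is no way to make ``reachability across two fixed layers'' simulate the constraint ``length at most $\ell$'' for a general $\ell$ coming from an \lenbndproblem{} instance. This is precisely the obstacle you flag as ``where the real gadget engineering lies,'' but no amount of positional copies cycling through two time stamps resolves it, and the paper does not attempt it. Instead, the paper reduces from \vertexcover{}: for each vertex $v$ it builds a gadget on $\{s_v,v,z_v\}$ whose time labels (edges into $v$ at times $2$ and $1$ in the ``wrong'' order) ensure the gadget has exactly two incomparable minimal \nonstrseps{s,z}, namely $\{v\}$ of size $1$ and $\{s_v,z_v\}$ of size $2$; edge gadgets $(\{s_v,z_w\},1),(\{s_w,z_v\},1)$ then force at least one endpoint of every edge to pay for the size-$2$ option, so a budget of $n+k$ encodes a vertex cover of size $k$. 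Note that this budget is $n+k$, not $k$, so this reduction alone would not even give the \Wone-hardness you want to carry along; in the paper that part comes separately, from the strict case via \cref{lemma:from-str-to-nonstr}. As written, your proposal therefore leaves the central construction for $\tau=2$ unbuilt and aims it at a target (compressing $\ell$ into two non-strict layers) that cannot be hit.
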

% 
% We get the following straight-forward reduction.
% 
% \begin{restatable}%[\appref{obs:lbstostrprob}]
% {observation}{lbstostrprob}
%  \label{obs:lbstostrprob}
%  There is a polynomial-time reduction from \textsc{LBS} to \strproblem{} that maps any instance~$(G,s,z,k,\ell)$ of \textsc{LBS} to an instance $(\TG=(G_1,\ldots,G_\ell),s,z,k)$ with~$G_i=G$ for all~$i\in[\ell]$ of \strproblem{}.
% \end{restatable}
% % \todo{hm: proof is still a todo!}
% % \appendixproof{obs:lbstostrprob}{\lbstostrprob*}{
% % \begin{proof}
% %  Let $(G=(V,E),s,z,k,\ell)$ be an instance of~\textsc{LBS}.
% %  Construct the instance $(\TG=(V,\TE),s,z,k)$ where $\TG=(G_1,\ldots,G_\ell)$ with~$G_i=G$ for all~$i\in[\ell]$ of \strproblem{} in polynomial time.
% %  This finishes the construction.
% %  
% %  It is not difficult to see that if there is an $(s,z)$-path of length~$x$ in~$G$ then there is a strict temporal $(s,z)$-path in~$\TG$ with~$x$ (different) time-steps, and vice versa.
% %  Hence, the correctness is immediate.
% % \end{proof}
% % }
% 
% \noindent \citet{Baier2010} showed that , and hence \strproblem{} is \NP{}-complete for all~$\tau\geq 5$.
% %
% This at hand, \cref{lemma:from-str-to-nonstr} implies that \nonstrproblem{} is \NP-complete for all~$\tau\geq 10$.
% However, through closer inspection we get that the non-strict variant is already \NP-complete for all~$\tau\geq 2$. This improves a previous result by~\citet{kempe2000connectivity} who showed \NP-completeness of \nonstrproblem{} and \strproblem{} for all $\tau\geq 12$. We summarize in the following.
\noindent 
We remark that our \NP-hardness reduction for \nonstrproblem{} is inspired by \citet[Theorem~3.9]{Baier2010}.
% \appendixproof{thm:nphard}{\nphard*}
% {

  \begin{proof}%[Proof (Construction)]
%   \begin{restatable}%[\appref{obs:lbstostrprob}]

  To show \NP-completeness of \nonstrproblem{} for $\tau=2$ we present a reduction from the \vertexcover{} problem where, given a graph $G=(V,E)$ and an integer $k$, the task is to determine whether there exists a set $V'\subseteq V$ of size at most $k$ such that $G-V'$ does not contain any edge.
%   \problemdef{\vertexcover{}}
% 	  { A graph~$G=(V,E)$ and~$k \in \N$.   }
% 	  {Is there a subset~$V' \subseteq V$ of size at most~$k$ such that for all~$\{v,w\} \in E$ it holds~$\{v,w\} \cap V' \not = \emptyset$?   }
  \paragraph{Construction.}
  Let~$(G=(V,E),k)$ be an instance of \vertexcover{}.
  We say that~$V' \subseteq V$ is a \emph{vertex cover} in~$G$ of size~$k$ if~$|V'| = k$ and~$V'$ is a solution to~$(G=(V,E),k)$.
  We refine the gadget of \citet[Theorem~3.9]{Baier2010} and reduce from \vertexcover{} to \nonstrproblem{}.
  Let~$\I := (G = (V,E),k)$ be a \vertexcover{} instance and~$n := |V|$.
  We construct a \nonstrproblem{} instance~$\I' := (\TG' = (V', \TE', 2),s,z,n+k)$, where 
  $V' := V \cup \set{ s_v,t_v : v \in V } \cup \{s, z \}$
% 	\[
% 		\hat V := V \cup \set{ v_1,v_2 : v \in V } \cup \{s, z \}
% 	\]
  are the vertices and the time-edges are defined as
  \[
	  \begin{split}
		  \TE' :=\ &\overbrace{\set{ (\{s,s_v\},1),(\{s_v,v\},1),(\{v,z_v\},2),(\{z_v,z\},2),(\{s,v\},2),(\{v,z\},1) : v \in V}}^{\text{vertex-edges}}\ \cup\\
			  &\underbrace{\set{ (\{s_v,z_w\},1),(\{s_w,z_v\},1) : \{v,w\} \in E }}_{\text{edge-edges}}.
	  \end{split}
  \]
  Note that~$|V'| = 3\cdot n  + 2$,~$|\TE'| = 6 \cdot |V'| + 2 \cdot |E|$, and $\I'$ can be computed in polynomial~time.
  For each vertex~$v \in V$ there is a \emph{vertex gadget} which consists of three vertices~$s_v,v,z_v$ and six vertex-edges. 
  In addition, for each edge~$\{v,w\} \in E$ there is an \emph{edge gadget} which consists of two edge-edges~$\{s_v,z_w\}$ and~$\{z_v,s_w\}$.
  See \cref{fig:vc-to-strproblem} for an example.
  \begin{figure}[t!]
	  \centering
	  \includegraphics[width=0.75\textwidth]{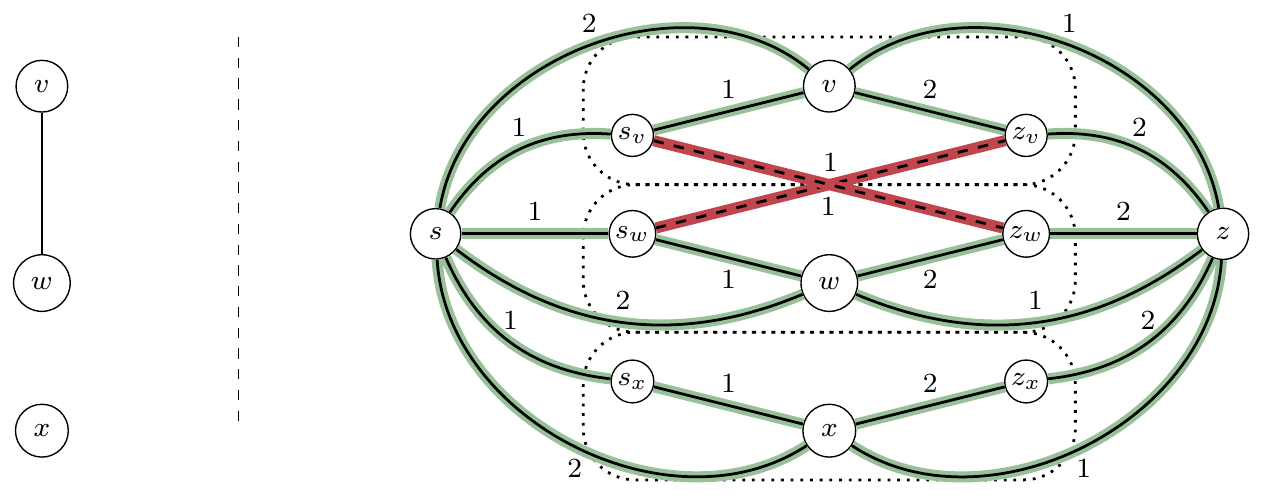}
	  \caption{The \vertexcover{} instance~$(G,1)$ (left) and the corresponding \nonstrproblem{} instance from the reduction of \cref{thm:nphard} (right).
		  The edge-edges are dashed (red), the vertex-edges are solid (green), and the vertex gadgets are in dotted boxes.}
	  \label{fig:vc-to-strproblem}
  \end{figure}
  
\appendixproof{thm:nphard}{\nphard*}
{
	\paragraph{Correctness.}
%   We prove that there is a vertex cover for~$G$ of size at most~$k$ if and only if there is a \nonstrsep{s,z} in~$\hat \TG$ of size at most~$n + k$.
  We prove that $\I$ is a \yes-instance if and only if $\I'$ is a \yes-instance.

  \implone{}
  Let~$X \subseteq V$ be a vertex cover of size~$k' \leq k$ for~$G$.
  We claim that~$S := (V \setminus X) \cup \set{ s_v,z_v ; v \in X }$ is a \nonstrsep{s,z}.
  There are~$|V| - |X| = n - k'$ vertices not in the vertex cover~$X$ and for each of them there is exactly one vertex in~$S$.
  For each vertex in the vertex cover~$X$ there are two vertices in~$S$.
  Hence,~$|S| = n - k' + 2k' \leq n + k$.
  
  First, we consider the vertex-gadget of a vertex~$v \in V$. 
  Note that in the vertex-gadget of~$v$, there are two distinct \nonstrsep{s,z}s $\{ v \}$ and~$\{ s_v, z_v \}$.
  Hence, every \nonstrpath{s,z} in~$\TG' - S$ contains an edge-edge.
  Second, let~$e = \{ v , w \} \in E$ and let~$P_e$ and~$P_e'$ be the \nonstrpaths{s,z} which contain the edge-edges of edge-gadget of~$e$ such that~$V(P_e) = \{ s, s_v, z_w, z \}$ and~$V(P_e') = \{ s, s_w, z_v,z \}$.
  Since~$X$ is a vertex cover of~$G$ we know that at least one element of~$e$ is in~$X$.
%   Without loss of generality, we assume~$v \in X$.
  Thus,~$s_v,z_v \in S$ or~$s_w,z_w \in S$, and hence neither~$P_e$ nor~$P_e'$ exist in~$\hat G - S$.
  It follows that~$S$ is a \nonstrsep{s,z} in~$\TG'$ of size at most~$n+k$, as there are no other \nonstrpath{s,z}s in~$\TG'$.

  \impltwo{}
  Let~$S$ be a \nonstrsep{s,z} in~$\TG'$ of size~$\ell \leq n + k$ and let~$v \in V$.
  Recall that there are two distinct \nonstrsep{s,z}s in the vertex gadget of~$v$, namely~$\{v\}$ and~$\{s_v, z_v\}$, and that all vertices in~$V' \setminus \{s,z\}$ are from a vertex gadget.
  Hence,~$\ell$ is of the form~$\ell=n+k'\leq n+k$.
  We start with a preprocessing to ensure that for vertex gadget only one of these two separators are in~$S$.
  Let~$S_v = S \cap \{ v , s_v, z_w \}$.
  We iterate over~$S_v$ for each~$v \in V$:
% 	\begin{description}
% 		\item[Case 1:] If~$S_v = \{ v \}$ or~$S_v = \{ v_1, v_2 \}$ then we do nothing.
% 		\item[Case 2:] If~$S_v = \{ v, v_1, v_2 \}$ then we remove~$v$ from~$S$ and decrease~$k'$ by one.
% 			One can observe that all \nonstrpaths{s,z} which are visiting~$v$ are still separated by~$v_1$ or~$v_2$.
% 		\item[Case 3:] If~$S_v = \{ v, v_1 \}$ then we remove~$v$ from~$S$ and add~$v_2$.
% 			One can observe that~$S$ is still a \nonstrsep{s,z} of size~$k$ in~$\hat \TG$.
% 		\item[Case 4:] If~$S_v = \{ v, v_2 \}$ then we remove~$v$ from~$S$ and add~$v_1$.
% 			One can observe that~$S$ is still a \nonstrsep{s,z} of size~$k$ in~$\hat \TG$.
% 	\end{description}
  \begin{compactenum}[\bf {Case} 1:]
	  \item If~$S_v = \{ v \}$ or~$S_v = \{ s_v, z_v \}$ then we do nothing.
	  \item If~$S_v = \{ v, s_v, z_v \}$ then we remove~$v$ from~$S$ and decrease~$\ell$ by one.
		  One can observe that all \nonstrpaths{s,z} which are visiting~$v$ are still separated by~$s_v$ or~$w_v$.
	  \item If~$S_v = \{ v, s_v \}$ then we remove~$v$ from~$S$ and add~$z_v$.
		  One can observe that~$S$ is still a \nonstrsep{s,z} of size~$\ell$ in~$\TG'$.
	  \item If~$S_v = \{ v, z_v \}$ then we remove~$v$ from~$S$ and add~$s_v$.
		  One can observe that~$S$ is still a \nonstrsep{s,z} of size~$\ell$ in~$\TG'$.
  \end{compactenum}
  That is a complete case distinction because neither~$\{s_v\}$ nor~$\{z_v\}$ separate all \nonstrpaths{s,z} in the vertex gadget in~$v$.
  Now we construct a vertex cover~$X$ for~$G$ by taking~$v$ into~$X$ if both~$s_v$ and~$z_v$ are in~$S$.
  Since there are~$n$ vertex gadgets in~$\TG'$ each containing either one or two vertices from~$S$, it follows that $|X|=|S| - n = k' \leq k$, 
  
  Assume towards a contradiction that~$X$ is not a vertex cover of~$G$.
  Then there is an edge~$\{v, w\} \in E$ where~$v,w \not \in X$.
  Hence,~$s_v,z_v,s_w,z_w \not \in S$ and~$v,w \in S$.
  This contradicts the fact that~$S$ is a \nonstrsep{s,z} in~$\TG'$, because $P = ((\{s, s_v \},1),(\{ s_v, z_w\},1),$ $(\{z_w, z\},2))$ is a \nonstrpath{s,z} in~$\TG'-S$.
  It follows that~$X$ is a vertex cover of~$G$ of size at most~$k$.
\end{proof}
  \begin{observation}%{lbstostrprob}
 \label{obs:lbstostrprob}
 There is a polynomial-time reduction from \textsc{LBS} to \strproblem{} that maps any instance~$(G,s,z,k,\ell)$ of \textsc{LBS} to an instance $(\TG=(G_1,\ldots,G_\ell),s,z,k)$ with~$G_i=G$ for all~$i\in[\ell]$ of \strproblem{}.
%   \end{restatable}
  \end{observation}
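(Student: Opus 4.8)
The plan is to reduce everything to a single clean correspondence: in a temporal graph all of whose layers coincide with $G$, the \strpaths{s,z} are \emph{exactly} the $(s,z)$-paths of $G$ of length at most~$\ell$ (identified via their vertex sequences). Once this correspondence is established, a vertex set $S\subseteq V\setminus\{s,z\}$ is a \strsep{s,z} in~$\TG$ if and only if $G-S$ has no $(s,z)$-path of length at most~$\ell$, which is precisely the condition defining a solution to \textsc{LBS} of the same size~$k$. Since the reduction merely copies $G$ into $\ell$ identical layers, it is clearly computable in polynomial time, so it remains only to verify the path correspondence.

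First I would prove the ``easy'' direction of the correspondence. Given an $(s,z)$-path $P=(\{v_0,v_1\},\dots,\{v_{\ell'-1},v_{\ell'}\})$ in~$G$ with $v_0=s$, $v_{\ell'}=z$, and length $\ell'\le\ell$, I assign time stamp~$i$ to the $i$-th edge, obtaining the sequence $((\{v_0,v_1\},1),\dots,(\{v_{\ell'-1},v_{\ell'}\},\ell'))$. This is a valid \strpath{s,z} in~$\TG$: each edge lies in its layer because $G_i=G$ for all $i\in[\ell]$, the labels $1<2<\dots<\ell'\le\ell$ are strictly increasing, and the vertices are pairwise distinct because $P$ is a path. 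Conversely, let $Q=((\{u_0,u_1\},t_1),\dots,(\{u_{m-1},u_m\},t_m))$ be a \strpath{s,z} in~$\TG$. Strictness forces $t_1<t_2<\dots<t_m$, and since all labels lie in $\{1,\dots,\ell\}$ we get $m\le\ell$; dropping the time stamps yields an $(s,z)$-path of length $m\le\ell$ in~$G$, where edges survive because every layer equals~$G$ and distinctness of vertices is inherited from the definition of a temporal path.

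Combining the two directions gives the vertex-set equivalence needed for the separator statement, from which the claimed reduction follows immediately. There is no genuine technical obstacle here; the one point that requires care, and that I would emphasize, is \emph{why the strict model is essential}. It is exactly the strictness condition $t_i<t_{i+1}$ together with the bound $\tau=\ell$ that caps the length of any temporal path at~$\ell$, thereby translating the length bound of \textsc{LBS} into a bound on the number of layers. In the non-strict model the same construction would fail, since a \nonstrpath{s,z} could traverse arbitrarily many edges within the single label~$1$, collapsing the length bound entirely; this is precisely why the analogous statement does not yield a reduction to \nonstrproblem{}.
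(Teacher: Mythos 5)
Your proof is correct and matches the paper's intent exactly: the paper states this as an unproved observation, and the justification it implicitly relies on is precisely the correspondence you spell out, namely that strictness plus $\tau=\ell$ caps every strict temporal $(s,z)$-path at length $\ell$, so the strict temporal paths in the $\ell$-fold copy of $G$ are (as vertex sequences) exactly the $(s,z)$-paths of length at most $\ell$ in $G$. Your closing remark on why the construction fails in the non-strict model is also consistent with the paper's discussion of the two models.
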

}

In the remainder of this section we prove that the bound on~$\tau$ is tight in the strict case (for the non-strict case the tightness is obvious). 
This is the first case where we can observe a significant difference between the strict and the non-strict variant of our separation problem.
In order to do so, we have to develop some  tools which we need in subroutines.
In \cref{sec:staticexp}, we introduce a common tool to study reachability in temporal graphs on directed graphs.
This helps us to solve the \textsc{Single-Source Shortest Strict Temporal Paths} efficiently (\cref{lemma:ssssp}).
Note that this might be of independent interest since it improves known algorithms, see \cref{app:adaptation}.
Afterwards, in \cref{sec:str-t-4}, we prove that \strproblem{} can be solved in polynomial time, if the maximum label~$\tau \leq 4$.

\subsection{Strict Static Expansion}\label{sec:staticexp}
A key tool~\cite{berman,kempe2000connectivity,mertzios2013temporal,akrida2017temporal,wu2016efficient} is the time-expanded version of a temporal graph which reduces reachability and other related questions in temporal graphs to similar questions in directed graphs.
Here, we introduce a similar tool for \strpaths{s,z}.
Let~$\TG = (V,\TE,\tau)$ be a temporal graph and let~$V = \{ v_1, \dots, v_{n-2} \}  \cup \{ s, z \}$.
For each $v \in \{v_1,\dots,v_{n-2}\}$, we define the sets~$\phi(v) := \set{ t,t+1 ; t \in[\tau], \exists w : (\{v,w\},t) \in \TE}$
and $\vec{\phi}(v) := \set{ (t,t')\in \phi(v)^2 ; t < t' \land \nexists t'' \in \phi(v) : t< t'' < t' }$.
The \emph{strict static expansion} of~$(\TG,s,z)$ is a directed acyclic graph~$H := (V',A)$ where $V' = \set{ s, z } \cup \set{ u_{t,j} ; j\in[n-2] \land t \in \phi(v_j) }$ and 
% $A = \set{ (u_{t,j},u_{t',j}) ; (t,t') \in \vec{\phi}(v_j) } \cup\ \set{ ( u_{i,j},u_{i+1,j'}), (u_{(i,j'}, u_{i+1,j}) ; (\{ v_j,v_{j'}\},i) \in E }\ \cup\ 
% 			\set{ (s,u_{i+1,j}) ; (\{ s,v_j\},i) \in E }\ \cup$  $\set{ (u_{i,j},z) ; (\{ v_j, z\},i) \in~E }$.
% We say that the set~$\set{ (u_{t,j},u_{t',j}) ; (t,t') \in \vec{\phi}(v_j) } \subseteq A$ is the set of \emph{column-edges} of~$H$.
$A=A' \cup A_{s} \cup A_{z}\cup A_{\rm col}$, 
$A':=\set{ ( u_{i,j},u_{i+1,j'}), (u_{i,j'}, u_{i+1,j}) ; (\{ v_j,v_{j'}\},i) \in \TE}$,
$A_{s}:=\set{ (s,u_{i+1,j}) ; (\{ s,v_j\},i) \in \TE }$,
$A_{z}:=\set{ (u_{i,j},z) ; (\{ v_j, z\},i) \in \TE }$, and
$A_{\rm col}:=\set{ (u_{t,j},u_{t',j}) ; (t,t') \in \vec{\phi}(v_j)\land j\in[n-2]}$ (referred to as \emph{column-edges} of~$H$).
Observe that each \strpath{s,z} in~$\TG$ has a one-to-one correspondence to some~$(s,z)$-path in~$H$. 
We refer to \cref{fig:static-exp} for an example.
\begin{figure}[t!]
	\includegraphics[width=\textwidth]{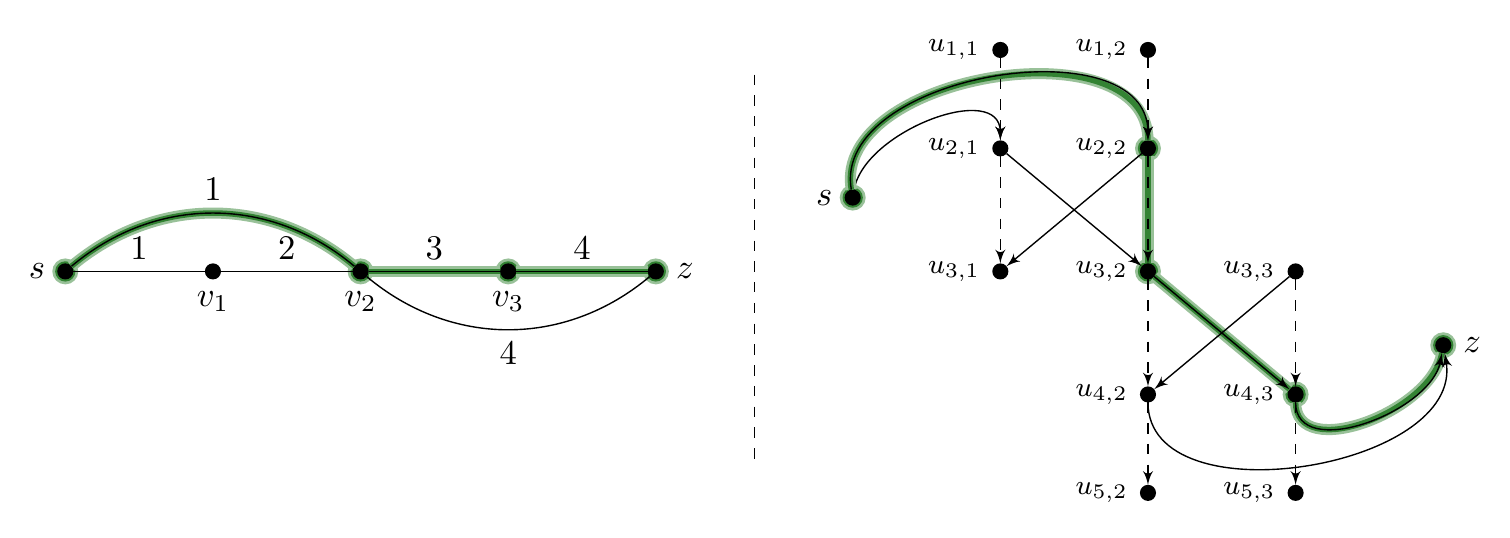}
	\caption{
		A temporal graph $\TG$ (left) and the strict static expansion $H$ for $(\TG,s,z)$ (right).
		One \strpath{s,z} in $\TG$ and its corresponding \npath{s,z} in $H$ are marked (green).
		}
	\label{fig:static-exp}
\end{figure}
\begin{restatable}{lem}{staticexp}
	\label{lemma:const-static-exp}
	Let~$\TG = (V,\TE,\tau)$ be a temporal graph, where~$s,z \in V$ are two distinct vertices.
	The strict static expansion for~$(\TG,s,z)$ can be computed in~$\ON(|\TE|)$ time.
\end{restatable}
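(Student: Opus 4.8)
The plan is to bound the size of the strict static expansion $H=(V',A)$ by $\ON(|\TE|)$ and then construct it in a single pass over $\TE$, spending only $\ON(1)$ time per time-edge by exploiting that $\TE$ is given sorted by ascending time stamp.

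\emph{Size bounds.} Since the underlying graph of $\TG$ is connected, $n\le|\TE|+1$, so merely touching every vertex already costs $\ON(|\TE|)$. For a non-terminal vertex $v_j$, each incident time-edge $(\{v_j,w\},t)$ contributes only the two values $t,t+1$ to $\phi(v_j)$; hence $|\phi(v_j)|\le 2\deg_\TE(v_j)$ and $|V'|=2+\sum_j|\phi(v_j)|\le 2+4|\TE|$, using $\sum_j\deg_\TE(v_j)\le 2|\TE|$. Likewise each time-edge creates at most two arcs of $A'$ and at most one arc of $A_s\cup A_z$, while $|A_{\mathrm{col}}|=\sum_j\max(0,|\phi(v_j)|-1)\le\sum_j|\phi(v_j)|\le 4|\TE|$. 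Thus $|V'|+|A|=\ON(|\TE|)$, so any construction doing $\ON(1)$ work per created vertex and arc is output-linear.

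\emph{Construction.} I would maintain, for every non-terminal vertex $v_j$, the largest time $\mathrm{last}(j)$ currently inserted into $\phi(v_j)$ (initially $0$) together with references $\mathrm{cur}(j)$ and $\mathrm{prev}(j)$ to the two most recently created column-vertices of $v_j$. Scanning $\TE$ in its given ascending order, to process a time-edge at time $i$ I first \emph{advance the frontier} of each of its non-terminal endpoints $v_j$: if $\mathrm{last}(j)<i$ I create $u_{i,j}$ and add the column-edge $(\mathrm{cur}(j),u_{i,j})$ (unless $\mathrm{last}(j)=0$), and then, if still $\mathrm{last}(j)<i+1$, I create $u_{i+1,j}$ and add the column-edge $(u_{i,j},u_{i+1,j})$, updating $\mathrm{prev}(j),\mathrm{cur}(j),\mathrm{last}(j)$ after each creation. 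Because $i$ is the current maximum time, any newly created vertex is the new maximum of $\phi(v_j)$, so its predecessor in $\phi(v_j)$ is exactly the former value of $\mathrm{cur}(j)$; this makes every added column-edge a genuine element of $A_{\mathrm{col}}$. After advancing, the invariant $\mathrm{last}(j)=i+1$, $\mathrm{cur}(j)=u_{i+1,j}$, $\mathrm{prev}(j)=u_{i,j}$ holds for both endpoints, so I can emit the arcs directly: for $(\{v_j,v_{j'}\},i)$ the two $A'$-arcs $(\mathrm{prev}(j),\mathrm{cur}(j'))=(u_{i,j},u_{i+1,j'})$ and $(\mathrm{prev}(j'),\mathrm{cur}(j))$; for $(\{s,v_j\},i)$ the $A_s$-arc $(s,\mathrm{cur}(j))=(s,u_{i+1,j})$; and for $(\{v_j,z\},i)$ the $A_z$-arc $(\mathrm{prev}(j),z)=(u_{i,j},z)$.

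The only delicate point, and the part I would verify most carefully, is the frontier bookkeeping when several time-edges share the same stamp $i$ and the same endpoint $v_j$. Here the sorted order is essential: when a time-$i$ edge incident to $v_j$ is processed, it guarantees $\mathrm{last}(j)\le i+1$, and the advance step then brings $\mathrm{last}(j)$ to exactly $i+1$ and leaves $(\mathrm{prev}(j),\mathrm{cur}(j))=(u_{i,j},u_{i+1,j})$; every later time-$i$ edge at $v_j$ thus reads off these same two references with no recomputation, which is why storing just the two most recent column-vertices suffices. Each time-edge triggers at most two vertex creations and at most two arc insertions, each created column-vertex is charged to one creation, and initializing the per-vertex arrays over the $\ON(|\TE|)$ vertices is $\ON(|\TE|)$; hence the whole construction runs in $\ON(|\TE|)$ time.
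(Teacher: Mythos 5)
Your proof is correct and follows essentially the same approach as the paper: both exploit the ascending order of $\TE$ to do a single linear scan, creating column-vertices on demand and linking consecutive elements of $\phi(v_j)$ by column-edges, with the same $\ON(|\TE|)$ size bounds. The only difference is cosmetic — the paper records the times in per-vertex linked lists and adds the column-edges in a second pass, whereas you add them inline via the frontier pointers $\mathrm{prev}(j),\mathrm{cur}(j)$.
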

%\appendixproof{lemma:const-static-exp}{\staticexp*}{
\begin{proof}
	Let~$\TG = (V,\TE,\tau)$ be a temporal graph, where~$V = \{ v_1, \dots, v_{n-2} \}  \cup \{ s, z \}$ and~$s,z \in V$ are two distinct vertices.
	Note that $|V| \leq |\TE|$ because $\TG$ is connected.
% 	As~\cref{xxx} runs in in~$\ON(|E|)$ time, we assume that~$\tau\leq |E|$.
	We construct the strict static expansion $H= (S,A)$ for~$(\TG,s,z)$ as in four steps follows:
	%First, we sort~$E$ according to the labeling into $\tau$ buckets such that we can iterate over $E$ with non-decreasing labels in~$\ON(|E|)$ time~\cite[Section 8.4]{cormen2009introduction}.
	First, we initiate for each $j\in[n-2]$ an empty linked list.
	Second, we iterate over the set $\TE$ with non-decreasing labels and for each $(\{v_j,v_{j'}\},t)$:
	\begin{compactenum}[(i)]
		\item Add $u_{t,j}, u_{t+1,j}, u_{t,j'}, u_{t+1,j'}$ to $S$ if they do not already exist.
			If we added a vertex $u_{a,b}$ then we push $a$ to the $b$-th linked list.	
		\item Add $(u_{i,j},u_{i+1,j'}), (u_{i,j'}, u_{i+1,j})$ to $A$.
	\end{compactenum}
	This can be done in~$\ON(|\TE|)$ time.
	Observe, that the two consecutive entries in the $i$-th linked list is an entry in $\vec{\phi}(v_i)$.
	Third, we iterate over each linked list in increasing order to add the column-edges to~$A$.
	Note that the sizes of all linked lists sum up to~$\ON(|\TE|)$.
	Last, we add~$s$ and~$z$ to~$S$ as well as the edges in~$A_s$ and~$A_z$ to~$A$.  
	
	Note that the~$|S|$ as well as~$|A|$ can be upper-bounded by~$\ON(|\TE|)$.
% 	Hence, we can also add the column-edges in $\ON(|E|)$ time.
	We employed a constant number of procedures each running in~$\ON(|\TE|)$ time. 
	Thus, $H$ can be computed in~$\ON(|\TE|)$ time.
\end{proof}
% In this section we have to find several shortest strict paths in a temporal graph.
% Thus, we consider the single-source shortest path problem on temporal graphs.

% \subparagraph{Single-Source Shortest Strict Temporal Paths.} 

As a subroutine hidden in several of our algorithms, we need to solve the \textsc{Single-Source Shortest Strict Temporal Paths} problem on temporal graphs:
find shortest strict paths from a source vertex $s$ to all other vertices in the temporal graph.
Herein, we say that a \strpath{s,z} is \emph{shortest} if there is no \strpath{s,z} of length~$\ell' < \ell$.
Indeed, we provide a linear-time algorithm for this.
We believe this to be of independent interest; it improves (with few adaptations to the model; 
for details see \cref{app:adaptation}) 
% for details we refer to the long version)
previous results by \citet{wu2016efficient}, but in contrast to the algorithm of \citet{wu2016efficient} our subroutine cannot be adjusted to the non-strict case. 
%We say a \strpath{s,z} in $\TG=(V,\TE,\tau)$ has length $\ell$ if it visits $\ell$ vertices, including~$s$ and~$z$. 

%In the \textsc{Single-Source Shortest Strict Temporal Paths} problem, we have to find shortest strict paths from a source vertex $s$ to all other vertices in the temporal graph.
%Formally, we have the following problem. 
%\taskdef{\textsc{Single-Source Shortest Strict Temporal Paths}}
%	{A temporal graph~$\TG = (V,\TE,\tau)$, a vertex~$s$}
%	{Find the shortest \strpath{s,v} in~$\TG$, for all~$v \in V \setminus \{s \}$.}\todo{hm: SSSSTP vllt inline definieren}

\begin{restatable}{proposition}{ssssp}
	\label{lemma:ssssp}
	\textsc{Single-Source Shortest Strict Temporal Paths} is solvable in~$\Theta(|\TE|)$ time. %where~$|\TE|$ is the number of time-edges.%overfull box with ~..
\end{restatable}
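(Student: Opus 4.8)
The plan is to compute shortest strict temporal paths by running a breadth-first-style traversal on the strict static expansion $H$ constructed in \cref{lemma:const-static-exp}, exploiting the one-to-one correspondence between \strpaths{s,z} in $\TG$ and $(s,z)$-paths in $H$. By \cref{lemma:const-static-exp}, $H=(V',A)$ is a directed acyclic graph with $|V'|,|A|\in\ON(|\TE|)$ and it can be built in $\ON(|\TE|)$ time. The key observation is that the length of a \strpath{s,v} in $\TG$ (its number of time-edges) corresponds to the number of \emph{non-column} arcs (arcs in $A'\cup A_s\cup A_z$) on the corresponding path in $H$, since column-edges in $A_{\mathrm{col}}$ only advance the time coordinate within a fixed vertex-column and do not traverse a time-edge of $\TG$. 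So the task reduces to finding, for each original vertex $v_j$, the minimum number of non-column arcs on any path from $s$ to a copy $u_{t,j}$ (or to $z$), over all time coordinates~$t$.

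First I would set up a $0/1$-weighted shortest-path computation on $H$, where column-edges get weight $0$ and all other arcs get weight $1$; the sought quantity for $v_j$ is then $\min_t \mathrm{dist}_H(s,u_{t,j})$. Because $H$ is a DAG, I would process its vertices in a fixed topological order — and here the structure is especially convenient, since the vertices $u_{t,j}$ are naturally layered by their time coordinate $t$, so iterating over time steps $t=1,\dots,\tau'$ and within each step handling the two arc types already yields a valid topological sweep. Propagating the distance label along each arc once (taking the minimum over incoming arcs) gives the correct distances in a single pass. To recover an actual shortest strict path for each vertex, I would additionally store a predecessor pointer whenever a label is improved, and reconstruct the path by back-tracking and projecting $H$-vertices $u_{t,j}$ back to the original vertices $v_j$, dropping column-edges.

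The running time is the crux of the $\Theta(|\TE|)$ claim. Since $|V'|+|A|\in\ON(|\TE|)$ and each arc is relaxed exactly once during the topological sweep, the traversal costs $\ON(|\TE|)$; combined with the $\ON(|\TE|)$ construction of $H$ this gives the $\ON(|\TE|)$ upper bound. For the matching lower bound, I would note that any algorithm must at least read the input, and since (by connectedness and \cref{lemma:t-bound}) the instance size is $\Theta(|\TE|)$, we get $\Omega(|\TE|)$, yielding the tight $\Theta(|\TE|)$ bound. The main obstacle I anticipate is ensuring that the $0$-weight column-edges do not force a more expensive shortest-path routine (such as Dijkstra with its logarithmic overhead): the point is that the DAG layering lets us avoid any priority queue entirely, so that mixing $0$- and $1$-weight arcs still admits a linear-time single-pass evaluation. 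I would therefore argue carefully that the time-layered ordering of $H$ is a genuine topological order respecting both non-column and column arcs, so that each label is final when first settled — this is exactly the point where a naive adaptation of a generic shortest-path algorithm would incur the extra logarithmic factor present in \citet{wu2016efficient}.
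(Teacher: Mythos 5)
Your proposal is correct and follows essentially the same route as the paper: build the strict static expansion, assign weight $0$ to column-edges and $1$ to all other arcs, and run a single topological-order sweep over the resulting DAG, which costs $\ON(|\TE|)$ on top of the $\ON(|\TE|)$ construction. The only cosmetic difference is the justification of the matching lower bound (you argue via input size, the paper via the fact that a shortest strict path can itself have length $|\TE|$), and both are valid.
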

% \noindent On a high level, the algorithm of \cref{lemma:ssssp} constructs a directed acyclic graph from the temporal graph and then solves \textsc{Single-Source Shortest Paths} on the constructed directed acyclic graph in linear time (see \citet[Section 24.2]{cormen2009introduction}).
\appendixproof{lemma:ssssp}{\ssssp*}{
The following proof makes use of a \emph{strict static expansion} of a temporal graph. See \Cref{sec:staticexp} for more details.
\begin{proof}
% 	\todo{doublecheck}
	By \cref{lemma:const-static-exp},  we compute the strict static expansion~$H = (S,A)$ of~$(\TG,s,z)$ in~$\ON(|\TE|)$ time and define a weight function
	\[
		\omega : A \rightarrow \{0,1\}, (x,y) \mapsto 
		\begin{cases}
			0, \text{if } (x,y) \text{ is a column-edge,}\\
			1, \text{otherwise.}
		\end{cases}
	\]
	
	Observe that~$H$ with~$\omega$ is a weighted directed acyclic graph and that the weight of an \npath{s,z} in~$H$ with~$\omega$ is equal to the length of the corresponding \strpath{s,z} in~$\TG$.
	Hence, we can use an algorithm, which makes use of the topological order of~$S$ on~$H$, to compute for all~$v \in S$ a shortest \npath{s,v} in~$H$ in~$\ON(|S|+|A|)=\ON(|\TE|)$ time (cf.~\citet[Section 24.2]{cormen2009introduction}).

	Now we iterate over $S$ and construct the shortest \strpath{s,w_j} in $\TG$ from the shortest \npath{s,u_{t,j}} in $H$, where $w_j \in V$, $t \in \{1,\dots,\tau\}$ and $u_{t,j} \in S$.
	This can be done in $\ON(|\TE|)$ time because $|S| \in \ON(|\TE|)$.
	Consequently, the overall running time is $\ON(|\TE|)$.
	Since the shortest \strpath{s,z} in $\TG$ can have length $|\TE|$, this algorithm is asymptotically optimal.
\end{proof}}
	\subsection{Adaptation of \cref{lemma:ssssp} for the Model of \citet{wu2016efficient}}\label{app:adaptation}
	\citet{wu2016efficient} considered a model where the temporal graph is directed and a time-edge $e$ has a traversal time $\phi(e) \in \N \cup \{0\}$.
	In the context of strict temporal path $\phi(e)$ is always one.
	They excluded the case where $\phi(e) = 0$, but pointed out that their algorithms can be adjusted to allow~$\phi(e) = 0$.
	However, this is not possible for our algorithm, because then the strict static expansion can contain cycles.
	Hence, we assume that $\phi(e) > 0$ for all directed time-edges $e$.
	
	Let $\TG = (V,\TE,\tau)$ be a directed temporal graph. 
	We denote a directed time-edge from~$v$ to~$w$ in layer $t$ by $((v,w),t) \in \TE$.
	First, we initiate $\tau$ many linked lists.
	Without loss of generality we assume that $\tau \leq |\TE|$, see \cref{lemma:t-bound}.
	Second, we construct a directed temporal graph $\TG' = (V',\TE',\tau)$, where $V=V'$ and $\TE'$ is empty in the beginning.
	Then we iterate over the time-edge set $\TE$ by ascending labels.
	If $((v,w),t) \in \TE$ has $\phi((v,w))=1$ then we add~$((v,w),t)$ to~$\TE'$.
	If $((v,w),t) \in \TE$ has $\phi((v,w))>1$ then we add a new vertex~$x_{(v,w)}$ to~$V'$ and add time-edge~$((v,x_{(v,w)}),t)$ to $\TE'$ and~$((x_{(v,w)},w),t')$ to the~$t'$-th linked list, where~$t'=t+\phi((v,w))-1$.
	We call $((v,x_{(v,w)}),t)$ the \emph{original edge} of $(v,w)$ and $((x_{(v,w)},w),t')$ the \emph{connector edge} of $(v,w)$.
	If we reach a directed time-edge with label~$t$ for the first time, then we add all directed time-edges from the~$t$-th linked list to~$\TE'$.
	Observe that for each \strpath{s,z} in $\TG$ there is a corresponding \strpath{s,z} in $\TG'$, additionally we have that~$\TE'$ is ordered by ascending labels and that~$\TG'$ can be constructed in~$\ON(|\TE|)$ time.

	To construct a strict static expansion for a directed temporal graph $\TG$, we modify the edge set~$A' := \set{ (u_{i,j},u_{i+1,j'}),j') ; ((v_j,v_{j'}),i) \in \TE }$, where~~$((v_j,v_{j'}),i) \in \TE$.
	Finally, we adjust the weight function $\omega$ from the algorithm of \cref{lemma:ssssp} such that $\omega(e)=0$ if $e$ is a column-edges of correspond to a connector-edges, and $\omega(e)=\phi(e)$ otherwise.
	Observe that for a \strpath{s,z} $P=(e_1,\dots,e_\ell)$ of traversal time $\sum_{i=1}^{\ell} \phi(e_i)$ the corresponding \npath{s,z} in the strict static expansion is of weight of the traversal time of~$P$.

\noindent Our algorithm behind~\cref{thm:str-t-4} executes the following steps:
\begin{compactenum}
\item As a preprocessing step, remove unnecessary time-edges and vertices from the graph.
\item Compute an auxiliary graph called \emph{directed path cover graph} of the temporal graph.
\item Compute a separator for the directed path cover graph.
\end{compactenum}
In the following, we explain each of the steps in more detail.

The preprocessing \emph{reduces} the temporal graph such that it has the following properties.
% \todo{RN: What is the idea of proof?}
%The algorithm requires a preprocessing step that \emph{reduces} the input graph. This is formally defined as follows.
% \begin{defin}
% 	\label{def:reduced}
	A temporal graph $\TG=(V,\TE,\tau)$ with two distinct vertices $s,z \in V$ is \emph{reduced} if
	\begin{inparaenum}[(i)]
		\item the underlying graph~$\ug{\TG}$ is connected,
		\item for each time-edge $e \in \TE$ there is a \strpath{s,z} which contains~$e$, and
		\item \label{itm:len-2} there is no \strpath{s,z} of length at most two in~$\TG$.
	\end{inparaenum}
% \end{defin}
\noindent This preprocessing step can be performed in polynomial time:
\begin{restatable}{lem}{reduceinstance}
	\label{lemma:reduced-instance}
	Let $\I=(\TG=(V,\TE,\tau),s,z,k)$ be an instance of \strproblem{}.
	In $\ON(k\cdot|\TE|)$ time, one can either decide~$\I$ % whether $\I$ is a \yes- or a \no-instance 
	or construct an instance~$\I'=(\TG'=(V',\TE',\tau),s,z,k')$ of \strproblem{} such that $\I'$ is equivalent to $\I$, $\TG'$ is reduced, $|V'| \leq |V|$, $|\TE'| \leq |\TE|$, and $k' \leq k$.
\end{restatable}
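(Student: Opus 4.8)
The plan is to bring the instance into the \emph{reduced} form by establishing its three defining properties in the order (iii), (ii), (i), all the while maintaining an equivalent instance together with a recorded set $D \subseteq V \setminus \{s,z\}$ of vertices that are \emph{forced} into every separator. Throughout I use the standing assumptions from \cref{sec:prem} that $\ug{\TG}$ is connected and that there is no time-edge between $s$ and $z$, so in particular no \strpath{s,z} of length one exists.

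First I would enforce property (iii). A \strpath{s,z} of length two has the form $s \to v \to z$, and its only internal vertex $v$ lies in $V \setminus \{s,z\}$; hence $v$ must belong to \emph{every} \strsep{s,z}. The algorithm therefore repeats the following: using \cref{lemma:ssssp}, compute a shortest \strpath{s,z}; if its length is at most two, take its (unique) middle vertex $v$, add it to $D$, delete it from $\TG$, and decrease the budget by one. If the budget drops below zero, more than $k$ vertices are forced and we decide that $\I$ is a no-instance; if no \strpath{s,z} remains, we decide that $\I$ is a yes-instance, since $D$ already separates $s$ from $z$. Deleting a vertex cannot create the edges $\{s,v'\}$ or $\{v',z\}$ needed for a fresh length-two path, so no short path is ever introduced and the loop ends with property (iii) after at most $k+1$ iterations. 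As each iteration costs $\ON(|\TE|)$ by \cref{lemma:ssssp}, this phase is exactly where the factor $k$ in the running time $\ON(k \cdot |\TE|)$ originates.

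Next I would enforce properties (ii) and (i) on the resulting graph in a single $\ON(|\TE|)$ pass. Building the strict static expansion $H$ (\cref{lemma:const-static-exp}), a time-edge lies on some \strpath{s,z} exactly when the corresponding arc of $H$ lies on an $(s,z)$-path of $H$, i.e. its tail is reachable from $s$ and $z$ is reachable from its head; as $H$ is a DAG, both forward reachability from $s$ and backward reachability to $z$ are computable in linear time, so I delete every time-edge failing this test to obtain property (ii). Deleting such useless edges removes no \strpath{s,z} (every edge of such a path is, by definition, useful), so the family of \strpaths{s,z} is unchanged and property (iii) survives. Finally I delete all now-isolated vertices; since every surviving edge lies on a \strpath{s,z} passing through $s$ and $z$, the remaining $\ug{\TG}$ is connected, yielding property (i). The bounds $|V'|\le|V|$, $|\TE'|\le|\TE|$ and $k'\le k$ hold because every step only removes vertices or edges, or decrements the budget.

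For the equivalence of $\I$ and $\I'=(\TG',s,z,k'=k-|D|)$ one argues in both directions: since each $v \in D$ is forced, any \strsep{s,z} $S$ of $\TG$ satisfies $D \subseteq S$, and $S \setminus D$ is a separator of $\TG'$ of size $|S|-|D|$; conversely, for any \strsep{s,z} $S'$ of $\TG'$, the set $S' \cup D$ separates $s$ from $z$ in $\TG$, because every \strpath{s,z} in $\TG$ either meets $D$ or lies entirely in $\TG - D$, where by the path-preserving cleanup it is also a path of $\TG'$ and hence meets $S'$. I expect the main obstacle to be this correctness argument rather than the time bound: one must verify carefully that removing a forced vertex never spawns a new short path and that the edge-cleanup leaves the set of \strpaths{s,z} intact, so that properties (ii) and (iii) coexist in the final instance and the budget bookkeeping $k' = k - |D|$ matches exactly the forced part of any solution.
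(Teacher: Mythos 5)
Your proof is correct and uses the same two ingredients as the paper's: reachability (forward from $s$, backward to $z$) in the strict static expansion to discard time-edges that lie on no strict temporal $(s,z)$-path, and repeated shortest-path computations via \cref{lemma:ssssp} to delete the forced middle vertices of length-two paths, with the same budget bookkeeping $k' = k - |D|$ and the same yes/no early exits. The only difference is the order of the two phases: the paper runs the edge cleanup first and the length-two removal second, whereas you do the reverse. Your order is in fact the more robust one---since the cleanup only deletes time-edges it cannot create new strict temporal $(s,z)$-paths, so property (iii) survives it, while in the paper's order one must additionally observe (or re-run the cleanup to guarantee) that deleting the forced vertices does not leave behind time-edges violating property (ii).
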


\appendixproof{lemma:reduced-instance}{\reduceinstance*}
{
The following proof makes use of a \emph{strict static expansion} of a temporal graph. See \Cref{sec:staticexp} for more details.
\begin{proof}
	First, we remove all time-edges which are not used by a \strpath{s,z}. 
	Let~$\I = (\TG=(V,\TE,\tau)s,z,k)$ be an instance of \strproblem{}.
	We execute the following procedure.
	\begin{compactenum}[(i)]
		\item Construct the strict static expansion~$H = (S,A)$ of~$(\TG,s,z)$.
		\item Perform a breadth-first search in~$H$ from~$s$ and mark all vertices in the search tree as \emph{reachable}.
		      Let~$R(s) \subseteq S$ be the reachable vertices from~$s$.
		\item Construct~$H' := (R(s),A')$, where~$A' := \set{ (w,v) : (v,w) \in A \text{ and } v,w \in R(s) }$.
		      Observe that~$H'$ is the reachable part of~$H$ from~$s$, where all directed arcs change their direction.
	        \item If~$z \notin R(s)$, then our instance~$\I$ is a \yes-instance. 
		\item Perform a breadth-first search from~$z$ in~$H'$ and mark all vertices in the search tree as reachable. 
		      Let~$R(z) \subseteq R(s)$ be the reachable set of vertices from~$z$.
		      In the graph~$H[R(s) \cap R(z)] = H[R(z)]$, all vertices are reachable from~$s$ and from each vertex the vertex~$z$ is reachable.
		\item Output the temporal graph~$\TG' := (V', \TE',\tau)$, where
		     ~$V' := \set{ v_i \in V ; \exists j : x_{j,i} \in R(z) }$ 
		      and~$\TE' := \set{ (\{v_j, v_{j'}\},i) : (x_{(i-1),j}, x_{i,j'}) \in E(H[R(z)]) }$.
	\end{compactenum}
	One can observe that~$\TG'$ is a temporal subgraph of~$\TG$ and that~$\ug{\TG'}$ is connected. 
	Note that all subroutines are computable in $\ON(|\TE|)$ time (see \cref{lemma:const-static-exp}).
	Consequently,~$\TG'$ can be computed in~$\ON(|\TE|)$ time.

	%We are about to show that \cref{rr:remove-unused-v,rr:remove-unused-e} are not applicable in~$G'$.
	%Let~$v_j\in V'$ and assume towards a contradiction that there is no \strpath{s,z}~$P$ in~$G'$ such that~$v_j \in V(P)$.
	%Since~$v_j \in V$, there is an~$i$ such that there is an \npath{s,x_{i,j}} in~$H$.
	%Note that, because of the column-edges,~$R(s)$ contains all~$x_{i',j}$ where~$i < i' \leq \tau$.
	%Furthermore, there must be an~$x_{i',j} \in R(z)$ where~$i \leq i' \leq \tau$, also because of~$v_j \in V'$ 
	%and hence there is an \npath{s,z}~$P'$ in~$H$ such that~$x_{i',j} \in V(P')$.
	%This is a contradiction because then there is \strpath{s,z} in~$G$ as well as in~$G'$---\cref{rr:remove-unused-v} is not applicable.
	
	We claim that for each time-edge in $\TE'$ there is a \strpath{s,z} in $\TG'$ which contains it.
	Let~$e = (\{v_j,v_{j'}\},i) \in \TE'$ and assume towards a contradiction that there is no \strpath{s,z}~$P$ in~$\TG'$ which contains~$e$.
	From the construction of~$\TG'$ we know that~$(x_{(i-1),j},x_{i,j'}) \in E(H[R(z)])$ or~$(x_{(i-1),j'},x_{i,j}) \in E(H[R(z)])$.
	Hence, there is an \npath{s,z} in~$H$ which contains either~$(x_{(i-1),j},x_{i,j'})$ or~$(x_{(i-1),j'},x_{i,j})$.
	Thus, there is a corresponding \strpath{s,z} in~$\TG$ as well as in~$\TG'$ which contains $(\{v_j,v_{j'}\},i)$.
	This is a contradiction.

	Furthermore, we claim that a \strpath{s,z}~$P$ in~$\TG$ if and only if $P$ is a \strpath{s,z} in~$\TG'$.
	Let $P$ be a \strpath{s,z} in $\TG$.
	Since~$P$ is a \strpath{s,z} in~$\TG$ there is a corresponding \npath{s,z}~$P'$ in~$H$.
	Note that~$P'$ is a witness that all vertices in~$V(P')$ are in~$R(s)$ as well as in~$R(z)$.
	It follows from the construction of~$\TG'$ that~$P$ is also a \strpath{s,z} in~$\TG'$.
	The other direction works analogously.

	Let $P$ be a \strpath{s,z} in $\TG'$ of length two, where $V(P)=\{s,z,v\}$.
	One can observe that each \strsep{s,z} must contain $v$.
	By \cref{lemma:ssssp}, we can compute the shortest \strpath{s,z} $P_s$ in $\TG'$ in $\ON(|\TE|)$ time and check whether $P_s$ is of length two.
	If this is the case, then we remove vertex $v \in V(P_s) \setminus \set{s,z}$ and decrease $k$ by one.
	Note that $\I$ is a \no-instance if we can find $k+1$ vertex-disjoint \strpath{s,z}s of length two and that, this can be done in $\ON(k \cdot |\TE|)$ time.

	If we have not decided yet whether $\I$ is a \yes- or \no-instance, then we construct the \strproblem{} instance $\I'=(\TG'=(V',\TE',\tau),s,z,k')$, where $k'$ is $k$ minus the number of vertex-disjoint \strpath{s,z} of length two we have found.
	
	Finally, observe that $\I'$ is a \yes-instance if and only if $\I$ is a \yes-instance, $\TG'$ is reduced, and that we only removed vertices and time-edges.
	Thus, $|V'| \leq |V|$ and $|\TE'| \leq |\TE|$.

\end{proof}
}

\subsection{Efficient Algorithm for \strproblem{} with Few Layers}
\label{sec:str-t-4}
Now we are all set to show the following result.
\begin{restatable}{theorem}{taufour}
	\label{thm:str-t-4}
	\strproblem{} for maximum label~$\tau\le4$ can be solved in~$\ON(k \cdot |\TE|)$ time, where~$k$ is the solution size.
\end{restatable}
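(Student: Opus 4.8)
The plan is to realize the three-step scheme announced just above the statement: reduce the instance, build the \emph{directed path cover graph}, and separate it by a flow computation. First I would call \cref{lemma:reduced-instance} to spend $\ON(k\cdot|\TE|)$ time and either decide the input instance $\I$ directly or replace it by an equivalent reduced instance. Reducedness is what controls the possible path lengths: condition~\condRef{itm:len-2} rules out \strpaths{s,z} of length at most two, while the strict increase of time stamps together with $\tau\le 4$ means every \strpath{s,z} visits a strictly increasing sequence of at most four labels and hence has length at most four. So in the reduced graph every \strpath{s,z} has length exactly three or four; this is the structural fact the rest of the argument exploits.

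The heart of the proof is to recognize the task as a length-bounded vertex-separation problem and to apply the classical min-max theorem of \citet{lovasz1978mengerian}, which states that for $(s,z)$-paths of length at most four the minimum number of vertices destroying all such paths equals the maximum number of internally vertex-disjoint such paths. I would encode the \strpaths{s,z} of the reduced graph into the directed path cover graph $D$: each vertex of $\TG$ becomes a single cuttable unit (split into an in/out pair linked by a unit-capacity arc), and the arcs of $D$ are arranged across the at most four admissible positions so that directed $(s,z)$-paths of $D$ correspond one-to-one to \strpaths{s,z} of $\TG$, all of length at most four. Routing everything through $D$ pays off because a minimum $(s,z)$-cut of $D$ then translates back to a \strsep{s,z} of $\TG$ of the same size, and by the Lovász equality its size equals the maximum number of disjoint \strpaths{s,z}.

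Algorithmically I would compute a maximum flow in $D$, which equals the maximum number of internally vertex-disjoint $(s,z)$-paths of $D$ and hence, by the translated min-max equality, the size of a minimum \strsep{s,z}. Since only the threshold ``size at most $k$'' matters, it suffices to run at most $k+1$ rounds of augmenting-path search rather than computing the flow to optimality: if augmentation still succeeds after $k+1$ rounds we have exhibited $k+1$ disjoint \strpaths{s,z} and answer \no, and otherwise the saturated cut yields the separator. As with the static expansion in \cref{lemma:const-static-exp}, the graph $D$ has size $\ON(|\TE|)$, each augmenting search costs $\ON(|\TE|)$, so the flow phase runs in $\ON(k\cdot|\TE|)$; together with the reduction this gives the claimed bound.

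The step I expect to be the main obstacle is proving that the Lovász min-max property genuinely transfers to the strict temporal setting through $D$, i.e.\ that $D$ faithfully represents both directions. One direction, that a \strsep{s,z} induces an $(s,z)$-cut of $D$ of the same size, is routine; the delicate direction is turning a minimum cut of $D$ back into a vertex set of $\TG$ of the same size that hits every \strpath{s,z}. The difficulty is that one underlying vertex may occur in several \strpaths{s,z} at different labels, so the construction of $D$ must collapse all temporal occurrences of a vertex into a single cuttable unit while still keeping every $(s,z)$-path of $D$ of length at most four---this length bound is precisely what makes Lovász applicable and what fails at $\tau=5$. Here the strict model helps, since the acyclicity of the strict static expansion (\cref{lemma:const-static-exp}) and the shortest-path machinery of \cref{lemma:ssssp} provide the clean layered structure that the non-strict model lacks.
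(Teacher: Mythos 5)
Your algorithmic shell matches the paper's: reduce via \cref{lemma:reduced-instance}, build the directed path cover graph~$D$, and find a minimum vertex \nsep{s,z} in~$D$ with at most~$k+1$ augmenting-path rounds, each in~$\ON(|\TE|)$ time. The problem is that the entire technical content of the theorem sits in the step you defer as ``the main obstacle'': proving that a vertex set is a \strsep{s,z} in the reduced temporal graph if and only if it is an \nsep{s,z} in~$D$ (\cref{lemma:tau-4-mengerian} in the paper). You neither give a concrete construction of~$D$ nor prove this equivalence, so the proposal as written is an outline of the paper's announced three-step scheme rather than a proof. Moreover, the property you aim for---a \emph{one-to-one} correspondence between \npaths{s,z} of~$D$ and \strpaths{s,z} of~$\TG$---is stronger than what the paper establishes and is the wrong target: since a single vertex can occur at different positions and time labels in different \strpaths{s,z}, a single-copy-per-vertex digraph cannot in general represent the temporal paths bijectively. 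The paper instead classifies each vertex~$x$ by the pair~$(i,j)$ of shortest \strpath{s,x} and \strpath{x,z} lengths, defines arcs of~$D$ only between compatible classes, and then proves two one-directional statements: every \emph{chordless} \strpath{s,z} induces an \npath{s,z} in~$D$ on the same vertex set, and every \npath{s,z} surviving in~$D-L$ for a \strsep{s,z}~$L$ forces, via a case analysis on path length three versus four, an actual \strpath{s,z} in~$\TG-L$, a contradiction.

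That case analysis is exactly where reducedness and~$\tau\le 4$ are consumed, and your proposal does not identify the decisive structural fact: in a reduced temporal graph with~$\tau=4$ the class~$V_{(1,1)}$ is empty (no vertex is at strict temporal distance one from both~$s$ and~$z$), which is what pins every internal vertex of a surviving \npath{s,z} in~$D$ to time labels that can be assembled into a genuine strictly increasing \strpath{s,z}. For~$\tau=5$ one can build a reduced temporal graph with~$V_{(1,1)}\neq\emptyset$ (\cref{fig:non-merg-tau-5}), which is the concrete reason the construction breaks---not merely that the Lov\'asz length bound of four is exceeded. Relatedly, your framing via the min--max theorem of \citet{lovasz1978mengerian} is a red herring for the algorithm: once the separator equivalence with the static digraph~$D$ is established, ordinary Menger/flow machinery on~$D$ suffices, and no temporal analogue of the min--max duality needs to be invoked. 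To complete your argument you would need to (i) define~$D$ precisely via the~$V_{(i,j)}$ classification computed with \cref{lemma:ssssp} in both time directions, (ii) prove~$V_{(1,1)}=\emptyset$, and (iii) carry out the forward and backward path-reconstruction arguments; none of these is routine.
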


\citet{lovasz1978mengerian} showed that the minimum size of an \nsep{s,z} for paths of length at most four in a graph is equal to the number of vertex-disjoint \npaths{s,z} of length at most four in a graph.
We adjust their idea to strict temporal paths on temporal graphs.
The proof of \citet{lovasz1978mengerian} implicitly relies on the transitivity of connectivity in static graphs.
This does not hold for temporal graphs; hence, we have to put in some extra effort to adapt their result to the temporal case. 
To this end, we define a 
directed auxiliary graph.
%\emph{directed auxiliary graph}.

%The idea behind \cref{lemma:tau-4-mengerian} is similar to the idea of the proof of \citet{lovasz1978mengerian} and based on the following auxiliary graph.
%However, since the connectivity by strict paths between vertices is not transitive, we have to cover a lot more cases than \citet{lovasz1978mengerian}.
%\todo{Z: this could be a bit more enthusiastic}
\begin{figure}[t!]
  \centering
  \includegraphics[width=1\textwidth]{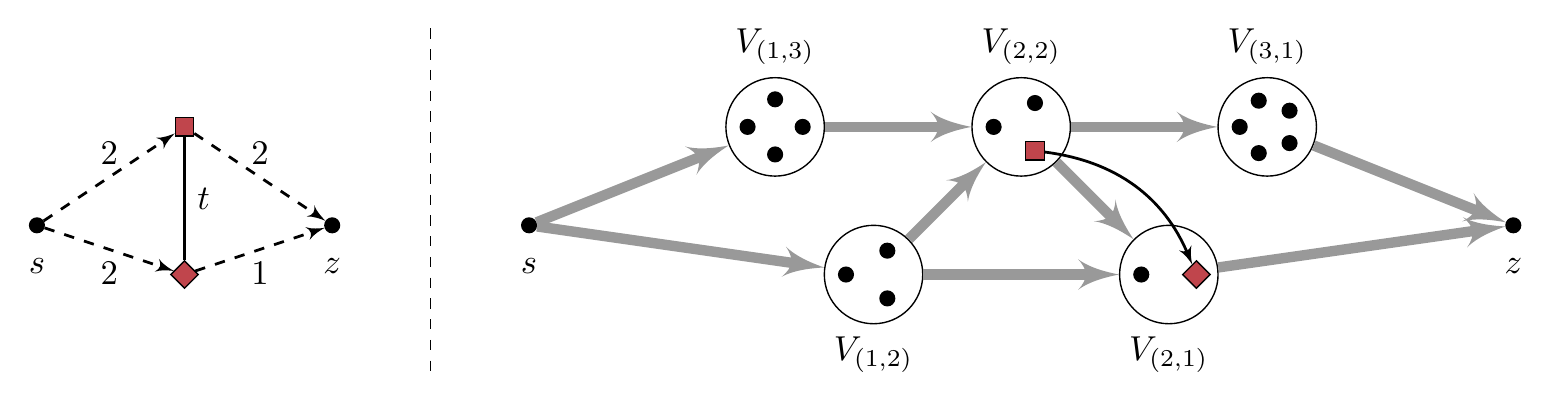}
  \caption{The left side depicts an excerpt of a reduced temporal graph with maximum time-edge label~$\tau=4$. 
      Dashed arcs labeled with a number $x$ indicate a shortest strict temporal path of length~$x$. 
      The right side depicts the directed path cover graph~$D$ from~$s$ to~$z$ of the reduced temporal graph. A gray arc from vertex set~$V_{(i,j)}$ to vertex set~$V_{(i',j')}$ denotes that for two vertices~$v \in V_{(i,j)}$ and~$w \in V_{(i',j')}$ there can be an arc from~$v$ to~$w$ in~$D$. Take as an example the square-shaped vertex in $V_{(2, 2)}$ and the diamond-shaped vertex in~$V_{(2,1)}$.}
  \label{fig:dpcg}
\end{figure}
\begin{defin}[Directed Path Cover Graph]
	\label{def:dpcg}
	Let~$\TG=(V,\TE,\tau=4)$ be a reduced temporal graph with~$s,z \in V$.
	The \emph{directed path cover graph} from~$s$ to~$z$ of~$\TG$  is a directed graph~$D = (V,\vec{E})$ such that~$(v,w) \in \vec{E}$ if and only if
% 	\begin{compactenum}[(i)]%
	\begin{inparaenum}[(i)]%
		\item $v,w \in V$,
		\item $(\{v,w\},t) \in \TE$ for some~$t\in[\tau]$, and
		\item $v \in V_{(i,j)}$ and~$w \in V_{(i',j')}$ such that~$i < i'$, ~$v \in V_{(2,2)}$ and~$w \in V_{(2,1)}$,~$v=s$ and~$w\in V_{(1,j)}$, or~$w=z$ and~$v\in V_{(i,1)}$ for some~$i,j\in\{2,3\}$.
% 	\end{compactenum}
	\end{inparaenum}
	Herein, a vertex~$x \in V$ is in the set~$V_{(i,j)}$ if the shortest \strpath{s,x} is of length~$i$ and the shortest \strpath{x,z} is of length~$j$.
\end{defin}
\noindent
\cref{fig:dpcg} depicts a generic directed path cover graph of a reduced temporal graph with~$\tau=4$. 
Note that due to 
% \cref{def:reduced} 
the definition
of reduced temporal graphs, one can prove that the set $V_{(1, 1)}$ is always empty, and hence not depicted in~\cref{fig:dpcg}. 
This is a crucial property that allows us to prove the following. 
% Next, we show that the directed path cover graph of a temporal graph with maximum time-edge label four has the following crucial property. 
	
% \todo{RN: Idea of proof?} 
\begin{restatable}{lem}{mengerian}
	\label{lemma:tau-4-mengerian}
	Let~$\TG=(V,\TE,\tau=4)$ be a reduced temporal graph with~$s,z \in V$.
	Then the directed path cover graph~$D$ from $s$ to $z$ of $\TG$ can be computed in~$\ON(|\TE|)$ time and~$S \subseteq V\setminus \{s,z\} $ is a \strsep{s,z} in~$\TG$ if and only if~$S$ is an~\nsep{s,z} in~$D$.
\end{restatable}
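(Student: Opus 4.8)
The plan is to prove the statement in two parts. First I would establish the running-time bound. By \cref{lemma:ssssp}, for each vertex $x \in V$ I can compute in $\ON(|\TE|)$ time the length of a shortest \strpath{s,x} (run the single-source routine from $s$) and the length of a shortest \strpath{x,z} (run it from $z$ on the temporal graph with time reversed, so that strict paths into $z$ become strict paths out of $z$). With these two values for every vertex I can assign each $x$ to its class $V_{(i,j)}$ in total $\ON(|\TE|)$ time, since $\tau=4$ means $i,j \in \{1,2,3,4\}$ are bounded by a constant. Once the classification is known, I build $D=(V,\vec E)$ by scanning each time-edge $(\{v,w\},t)\in\TE$ once and testing the constant-time orientation condition of \cref{def:dpcg} (checking whether $v,w$ fall into the prescribed class-pairs, or involve $s$ or $z$), adding the corresponding arc. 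This is a single pass over $\TE$, so $D$ is computed in $\ON(|\TE|)$ time overall.

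For the equivalence $S$ is a \strsep{s,z} in $\TG$ iff $S$ is an \nsep{s,z} in $D$, I would argue both directions by relating \strpaths{s,z} in $\TG$ to directed \npaths{s,z} in $D$. The key structural fact to exploit is that $\TG$ is \emph{reduced}: by \cref{itm:len-2} of \cref{def:reduced} there is no \strpath{s,z} of length at most two, and together with $\tau=4$ every \strpath{s,z} has length at most four (its length is bounded by the number of layers). Hence every \strpath{s,z} has length exactly three or four, so it passes through at most two internal vertices. The crucial observation, which the excerpt flags, is that $V_{(1,1)}=\emptyset$: a vertex in $V_{(1,1)}$ would yield a \strpath{s,z} of length two via $s$ and $z$ being at strict-distance one on each side, contradicting reducedness. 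I would make precise how a length-three or length-four \strpath{s,z} forces its internal vertices into the specific classes named in \cref{def:dpcg} (for a length-four path, the middle arc goes from $V_{(2,2)}$ to $V_{(2,1)}$; the endpoints connect through $V_{(1,j)}$ and $V_{(i,1)}$), so that the path's internal vertices together with the $s,z$-arcs form a directed \npath{s,z} in $D$, and conversely every directed \npath{s,z} in $D$ lifts back to a genuine \strpath{s,z} in $\TG$ on the same internal vertices.

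\textbf{Forward direction} ($\Leftarrow$): Suppose $S$ is not a \strsep{s,z} in $\TG$; then there is a \strpath{s,z} $P$ in $\TG-S$. Using the length/class analysis above, I extract from $P$ its internal vertices, show they lie in the classes demanded by \cref{def:dpcg}, and verify that the consecutive class-transitions along $P$ are exactly the ones that create arcs in $D$; this produces a directed \npath{s,z} in $D-S$, so $S$ is not an \nsep{s,z} in $D$. \textbf{Backward direction} ($\Rightarrow$): Conversely, given a directed \npath{s,z} in $D-S$, I read off its internal vertices and must reconstruct an actual \strpath{s,z} in $\TG-S$. Here each arc $(v,w)\in\vec E$ comes from some time-edge $(\{v,w\},t)\in\TE$, and I must choose the time stamps along the path so that they strictly increase; the orientation conditions ($i<i'$, the specific $V_{(2,2)}\to V_{(2,1)}$ transition, etc.) are precisely designed to guarantee that a consistent strictly-increasing choice of time stamps exists.

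The main obstacle I expect is the backward direction, specifically verifying that the class-based orientation rules in $D$ genuinely certify the existence of a \emph{strict} (strictly time-increasing) temporal path rather than merely a temporal walk, and handling the case analysis on path length three versus four. The subtlety is that in the temporal setting connectivity is not transitive (as the paragraph before \cref{def:dpcg} warns), so I cannot freely concatenate sub-paths; I must show that the very specific definition of $D$ — which uses the \emph{global} shortest-distance classes $V_{(i,j)}$ rather than local adjacency — encodes enough timing information to rule out the time-ordering conflicts that would otherwise break strictness. Establishing $V_{(1,1)}=\emptyset$ and pinning down exactly which internal-vertex class sequences can occur for each admissible path length will be the heart of making this case analysis finite and correct.
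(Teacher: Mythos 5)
Your overall architecture matches the paper's: compute shortest strict temporal distances from $s$ and (via time reversal) to $z$ in $\ON(|\TE|)$ time using \cref{lemma:ssssp}, classify the vertices into the sets $V_{(i,j)}$, build $D$ in one pass over $\TE$, and prove the equivalence by mapping \strpaths{s,z} to directed $(s,z)$-paths in $D$ and back. However, two of your intermediate claims are wrong as stated, and each conceals a missing idea. First, your justification of $V_{(1,1)}=\emptyset$ does not work: a vertex $v\in V_{(1,1)}$ only gives you time-edges $(\{s,v\},t_1)$ and $(\{v,z\},t_2)$, and these form a \strpath{s,z} of length two only if $t_1<t_2$; reducedness merely tells you that no such ordering exists, i.e.\ $t_2\le t_1$, which is not yet a contradiction. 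The paper's argument must invoke the second reducedness property (every time-edge lies on \emph{some} \strpath{s,z}, necessarily of length at least three) together with $\tau=4$: if $t_2\le 2$, the path through $(\{v,z\},t_2)$ would need arrival time at least three but arrives at $t_2$; if $t_2\ge 3$, the path through $(\{s,v\},t_1)$ departs at $t_1\ge 3$ and would arrive at time at least five. This timing argument is exactly what breaks for $\tau=5$ (cf.\ \cref{fig:non-merg-tau-5}), so it cannot be waved away.

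Second, your claim that a length-four \strpath{s,z} forces its internal vertices into the classes prescribed by \cref{def:dpcg} is false for arbitrary strict paths: the second internal vertex $v_2$ of such a path may also be adjacent to $s$ in layer one, placing it in $V_{(1,\cdot)}$ rather than $V_{(2,2)}$, so the arc $(v_1,v_2)$ is absent from $D$ and the path does not project to a directed $(s,z)$-path. The paper repairs this by working with \emph{chordless} strict paths (those whose vertex set does not induce a shorter \strpath{s,z}): it suffices for $S$ to hit all chordless paths, since any \strpath{s,z} in $\TG-S$ contains one on a subset of its vertices, and chordlessness is precisely what pins the internal vertices into the prescribed classes. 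Without this (or an equivalent minimality device) your forward direction fails. Your reading of the backward direction --- that reducedness forces the time stamps so that the class constraints certify a genuinely strict path --- is the right intuition and matches the paper's case analysis, but it likewise rests on repeatedly applying the ``every time-edge lies on a \strpath{s,z}'' property rather than on the arc orientations alone.
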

\appendixproof{lemma:tau-4-mengerian}{\mengerian*}{
We introduce some notation we will use in the following proof.
The \emph{departure time} (\emph{arrival time}) of a \nonstrpath{s,z}~$P = ((e_1,t_1),\dots,(e_\ell,t_\ell))$ is~$t_1$ ($t_{\ell}$), the \emph{traversal time} of~$P$ is~$t_{\ell} - t_1$, and the length of~$P$ is~$\ell$.
\begin{proof}

	Let~$\TG=(V,\TE,\tau=4)$ be a reduced temporal graph with~$s,z \in V$.
	In order to compute the directed path cover graph $D=(V,\vec{E})$ from $s$ to $z$ of $\TG$, we need to know the shortest \strpath{s,v} and the shortest \strpath{v,z} in $\TG$ for all $v \in V$.
	For~$s$ this can be done in $\ON(|\TE|)$ time by \cref{lemma:ssssp}.
	For~$z$, consider the bijection~$\lambda:[\tau]\to[\tau]$ with~$\lambda(t)=\tau+1-t$ and the helping graph~$\overleftarrow{\TG}:=(V,\overleftarrow{\TE},\tau)$ of~$\TG$ where~$\overleftarrow{\TE}:=\{(\{v,w\},\lambda(t))\mid (\{v,w\},t)\in\TE\}$.
	Note that~$\overleftarrow{\TG}$ can be constructed in~$O(|\TE|)$ time.
	Moreover, it holds true that~$P$ is a shortest \strpath{v,z} in~$\TG$ if and only if~$\overleftarrow{P}$ is a shortest \strpath{z,v} in $\overleftarrow{\TG}$, where~$\overleftarrow{P}$ is~$P$ where each temporal edge~$(\{x,y\},t)$ in~$P$ is replaced by~$(\{x,y\},\lambda(t))$.
	If~$P$ is not a shortest~\strpath{v,z} in~$\TG$, then there is a shorter \strpath{v,z}~$P'$ in~$\TG$.
	Then~$\overleftarrow{P'}$ is a \strpath{z,v} in $\overleftarrow{\TG}$ shorter than~$\overleftarrow{P}$.
	The other direction is proven analogously.
	It follows that computing the shortest \strpath{v,z} in~$\TG$ for all~$v \in V$ can be done in $\ON(|\TE|)$ time by employing \cref{lemma:ssssp} in~$\overleftarrow{\TE}$.
	
	Now we iterate over the time-edge set $\TE$ and check for each $(\{v,w\},t)$ whether there is an arc between $v$ and $w$ in $\vec{E}$.
	Hence, the directed path cover graph $D=(V,\vec{E})$ from $s$ to $z$ of $G$ can be computed in $\ON(|\TE|)$.	 

	We say that a \strpath{s,z}~$P$ of length~$n$ in~$\TG$ is \emph{chordless} if~$\TG[V(P)]$ does not contain a \strpath{s,z}~$P'$ of length~$n-1$.
	If such a~$P'$ exists, we call~$P'$ a \emph{chord} of~$P$.
	Obviously, for a set~$X$ of vertices, if~$\TG-X$ does not contain any chordless \strpath{s,z}, then~$\TG-X$ does not contain any \strpath{s,z}.

	Now, we discuss the appearance of the directed path cover graph~$D$.
	Assume towards a contradiction that there is a vertex~$v \in V_{(1,1)}$.
	Since $\TG$ is reduced, there is no \strpaths{s,z} of length two.
	Thus, if there is a vertex~$v \in V_{(1,1)}$, then there must be time-edges~$(\{s,v\},t_1)$ and~$(\{v,z\},t_2)$ such that~$t_2 \leq t_1$.
	We distinguish two cases.
	\begin{description}
		\item[Case 1:] Let~$t_2 \leq 2 \leq t_1$.
			Since $\TG$ is reduced, we know that there is a \strpath{s,z}~$P$ which contains~$(\{v,z\},t_2)$.
			The \strpath{s,z}~$P$ must be of length at least three, because $\TG$ is reduced
			Observe that the length of a \strpath{s,z} is a lower bound for the arrival time.
			Hence, the arrival time of~$P$ is at least three.
			This is a contradiction because~$(\{v,z\},t_2)$ is the last time-edge of~$P$ and, therefore,~$t_2 \leq 2$ is equal to the arrival time.
			Consequently,~$v \notin V_{(1,1)}$.
		\item[Case 2:] Let~$3 \leq t_2 \leq t_1$.
			Since $\TG$ is reduced, we know that there is a \strpath{s,z}~$P$ which contains~$(\{s,v\},t_1)$.
			The \strpath{s,z}~$P$ must be of length at least three because $\TG$ is reduced.
			Observe that~$(\{s,v\},t_1)$ is the first time-edge of~$P$.
			Hence, the departure time of~$P$ is at least three.
			Since,~$P$ is of length at least three, the arrival time of~$P$ is at least five.
			This is a contradiction because the maximum label is four.
			Consequently,~$v \notin V_{(1,1)}$.
	\end{description}
	Observe that the sets~$V_{(1,3)},V_{(2,2)},V_{(3,1)},V_{(1,2)},V_{(2,1)}$ are a partition of~$V\setminus\{s,t\}$, because all \strpaths{s,z} are of length at most four and~$V_{(1,1)} = \emptyset$. 
	In \cref{fig:dpcg}, we can see the appearance of the directed path cover graph~$D$ from~$s$ to~$z$ of~$\TG$.

	We claim that for each chordless \strpath{s,z}~$P$ in~$\TG$, there is an \npath{s,z}~$P_D$ such that~$V(P)=V(P_D)$.
	Let~$P$ be a chordless \strpath{s,z} in~$\TG$.
	Since $\TG$ is reduced,~$P$ is of length three or four.
	\begin{description}
		\item[Case 1:] Let the length of~$P$ be three and~$V(P)= \{s,v_1,v_2,z\}$ such that~$v_1$ is visited before~$v_2$.
			Since there is no \strpath{s,z} of length two,~$P$ forces~$v_1$ into~$V_{(1,2)}$ and~$v_2 \in V_{(2,1)}$.
			It is easy to see that~$v_1 \in V_{(1,2)}$, just by the existence of~$P$.
% 			That~$v_1 \in V_{(1,2)}$ is, just by the existence of~$P$, easy to see.
			For~$v_2 \in V_{(2,1)}$, we need that there is no \strpath{s,v_2} of length one.
			
			Assume towards a contradiction that there would be a \strpath{s,v_2} of length one or in other words there is a time-edge~$(\{s,v_2\},t) \in \TE$.
			Then~$v_2\in V_{(1,1)}$, which contradicts the fact that~$V_{(1,1)}=\emptyset$, and hence~$v_2 \in V_{(2,1)}$.
			
% 			All labels of the time-edge between~$v_2$ and~$z$ are at least three thus~$t$ is at least three, otherwise there is a \strpath{s,z} of length two.
% 			There must be a \strpath{s,z}~$P'$ which uses the time-edge~$(\{s,v_2\},t)$, otherwise $\TG$ is not reduced.
% 			Furthermore,~$P'$ must be of length at least three, otherwise $\TG$ is not reduced.
% 			Thus,~$P'$ has departure time at most two.
% 			The time-edge~$(\{s,v_w\},t)$ is the first time-edge in~$P'$ because it is incident with~$s$.
% 			Hence,~$t < 2$.
% 			This contradicts the existence of a \strpath{s,v_2} of length one, and hence~$v_2 \in V_{(2,1)}$.

			From \cref{def:dpcg}, we know that there is an \npath{s,z}~$P_D$ such that~$V(P) = V(P_D)$.
		\item[Case 2:] Let the length of~$P$ be four.
			Thus, it looks as follows. 
			\begin{equation*}
				P = ((\{s,v_1\},1),(\{v_1,v_2\},2),(\{v_2,v_3\},3),(\{v_3,z\},4)).
			\end{equation*}
			It immediately follows that the shortest \strpath{s,v_1} is of length one.
			Furthermore, the shortest \strpath{s,v_2} is of length two, otherwise~$P$ would not be chordless.

			Now, we claim that~$v_3 \in V_{(2,1)} \cup V_{(3,1)}$.
			Assume towards a contradiction that~$v_3 \notin V_{(2,1)} \cup V_{(3,1)}$.
			Observe that~$v_3 \notin V_{(2,2)}$ because~$(\{v_3,z\},4)$ is a shortest \strpath{v_3,z} of length one.
			Thus, there is a time-edge~$(\{s,v_3\},t) \in \TE$. 
			This time-edge must be part of a \strpath{s,z}, because $\TG$ is reduced.
			Since~$v_3 \neq z$, we have~$t\leq 3$.
			Note that~$((\{s,v_3\},t\leq3),(\{v_3,z\},4))$ would be a \strpath{s,z} of length two.
			Hence,~$(\{s,v_3\},t) \notin \TE$ and~$v_3 \in V_{(2,1)} \cup V_{(3,1)}$.

			From \cref{def:dpcg} we know that there is an \npath{s,z}~$P_D$ such that~$V(P) = V(P_D)$.
	\end{description}

	Let~$S \subseteq (V \setminus \{s,z\})$ be an \nsep{s,z} in~$D$.
	Until now, we know that~$S$ is also a \strsep{s,z} in~$\TG$, because for each chordless \strpath{s,z} in~$\TG$ there is an \npath{s,z} in~$D$.
	Let~$L \subseteq (V \setminus \{s,z\})$ be a \strsep{s,z} in~$\TG$.
	It remains to be shown that~$L$ is an \nsep{s,z} in~$D$.

	Assume towards a contradiction that~$L$ is not an \nsep{s,z} in~$D$.
	Thus, there is an \npath{s,z}~$P_D$ in~$D-L$.
	The length of~$P_D$ is either three or four, see \cref{fig:dpcg}.
	\begin{description}
		\item[Case 1:] Let the length of~$P_D$ be three and~$V(P_D) = \{s, v_1, v_2, z\}$ such that~$v_1$ is visited before~$v_2$.
			Thus,~$v_1 \in V_{(1,2)}$ and~$v_2 \in V_{(2,1)}$, see \cref{fig:dpcg}.
			From \cref{def:dpcg}, we know that there are time-edges~$(\{s,v_1\},t_1),(\{v_1,v_2\},t_2),(\{v_2,z\},t_3) \in \TE$.
			Note that each time-edge in~$\TG$ must be part of a \strpath{s,z} and that there are no \strpaths{s,z} of length two.
			Hence,~$t_1 \in \{1,2 \}$,~$t_2 \in \{2,3\}$, and~$t_3 \in \{3,4\}$.
% 			Since~$((\{s,v_1\},t_1),(\{v_1,v_2\},t_2),(\{v_2,z\},t_3))$ is not a \strpath{s,z} in~$\TG$ by our assumption, we have either~$t_1=t_2=2$ or~$t_2=t_3=3$.
% 
% 			\begin{description}
% 				\item[Case 1.1:] Let~$t_1=t_2=2$.
% 					The time-edge~$(\{v_1,v_2\},t_2=2)$ must be part of a \strpath{s,z}, and therefore we have either~$(\{s,v_2\},1) \in \TE$ or~$(\{s,v_1\},1) \in \TE$.
% 					Time-edge~$(\{s,v_2\},1)$ cannot exist since~$v_2 \in V_{(2,1)}$.
% 					Hence,~$(\{s,v_1\},1) \in \TE$,~$t_1=1\neq2$,~$((\{s,v_1\},t_1),(\{v_1,v_2\},t_2),(\{v_2,z\},t_3))$ is a \strpath{s,z} in~$\TG$, and~$\{v_1,v_2\} \cap L \neq \emptyset$.
% 					This contradicts the existence of~$P_D$ in~$D-L$.
% 				\item[Case 1.2:] Let~$t_2=t_3=3$.
% 					There is a \strpath{s,z} which contains the time-edge~$(\{v_1,v_2\},t_3=3)$.
% 					Since~$t_3=3$, there is either a~$(\{v_1,z\},4) \in \TE$ or~$(\{v_2,z\},4) \in \TE$.
% 					The time-edge~$(\{v_1,z\},4)$ cannot exist because~$v_1 \in V_{(1,2)}$.
% 					Hence~$(\{v_2,z\},4) \in \TE$,~$t_3 = 4 \neq3$,~$((\{s,v_1\},t_1),(\{v_1,v_2\},t_2),(\{v_2,z\},t_3))$ is a \strpath{s,z} in~$\TG$, and~$\{v_1,v_2\} \cap L \neq \emptyset$.
% 					This contradicts the existence of~$P_D$ in~$D-L$.
% 
% 			\end{description}
			We distinguish the cases~$t_2=2$ and~$t_2=3$.

			\begin{description}
				\item[Case 1.1:] Let~$t_2=2$.
					The time-edge~$(\{v_1,v_2\},t_2=2)$ must be part of a \strpath{s,z}, and therefore we have either~$(\{s,v_2\},1) \in \TE$ or~$(\{s,v_1\},1) \in \TE$.
					Time-edge~$(\{s,v_2\},1)$ cannot exist since~$v_2 \in V_{(2,1)}$.
					Hence,~$(\{s,v_1\},1) \in \TE$,~$t_1=1$, and~$((\{s,v_1\},t_1),(\{v_1,v_2\},t_2),(\{v_2,z\},t_3))$ is a \strpath{s,z} in~$\TG$.
					Moreover,~$\{v_1,v_2\} \cap L \neq \emptyset$.
					This contradicts the fact that~$L$ is \strsep{s,z} in~$\TG$.
				\item[Case 1.2:] Let~$t_2=3$.
					There is a \strpath{s,z} which contains the time-edge~$(\{v_1,v_2\},t_2=3)$.
					Since~$t_3\geq 3$, there is either a~$(\{v_1,z\},4) \in \TE$ or~$(\{v_2,z\},4) \in \TE$.
					The time-edge~$(\{v_1,z\},4)$ cannot exist because~$v_1 \in V_{(1,2)}$.
					Hence~$(\{v_2,z\},4) \in \TE$,~$t_3 = 4$, and~$((\{s,v_1\},t_1),(\{v_1,v_2\},t_2),(\{v_2,z\},t_3))$ is a \strpath{s,z} in~$\TG$.
					Moreover,~$\{v_1,v_2\} \cap L \neq \emptyset$.
					This contradicts the fact that~$L$ is \strsep{s,z} in~$\TG$.

			\end{description}

		\item[Case 2:] Let the length of~$P_D$ be four, and~$V(P_D) = \{s,v_1,v_2,v_3,z\}$ such that~$v_1$ is visited before~$v_2$ and~$v_2$ is visited before~$v_3$.
			Considering \cref{fig:dpcg}, one can observe that~$v_1 \in V_{(1,3)} \cup V_{(1,2)}$ and~$v_2 \in V_{(2,2)}$.
			From \cref{def:dpcg}, we know that there are time-edges~$(\{s,v_1\},t_1),(\{v_1,v_2\},t_2),$ $(\{v_2,v_3\},t_3),(\{v_3,z\},t_4) \in \TE$.
			There is a \strpath{s,z}~$P$ containing $(\{v_1,v_2\},t_2)$ because $\TG$ is reduced. 
			Since~$v_2 \in V_{(2,2)}$, we know that~$P$ is of length four.

			Now assume towards a contradiction that~$P$ visits~$v_2$ before~$v_1$.
			Hence,~$(\{v_1,v_2\},t_2)$ is the third time-edge in~$P$.
			The first two time-edges of~$P$ are a \strpath{s,v_2} of length two and arrival time at least two.
			Thus,~$t_2 \geq 3$.
			Observe that there is no \strpath{v_1,z} of length one and hence there is no time-edge~$(\{v_1,z\},4)$, because~$v_1 \in V_{(1,2)} \cup V_{(1,3)}$.
			This contradicts~$P$ visiting~$v_2$ before~$v_1$. 
			Consequently, the \strpath{s,z}~$P$ visits~$v_1$ before~$v_2$ and ~$(\{v_1,v_2\},t_2)$ is the second time-edge in~$P$.
			Hence,~$t_2=2$ and there is a time-edge~$(\{s,v_1\},1)$ in~$P$.
			This implies~$t_1=1$.

			The vertex~$v_3 \in V_{(2,1)} \cup V_{(3,1)}$, see \cref{fig:dpcg}.
			There is a \strpath{s,z} which contains~$(\{v_2,v_3\},t_3)$, because $\TG$ is reduced. 
			Since~$v_3 \in V_{(2,1)} \cup V_{(3,1)}$ and~$v_2 \in V_{(2,2)}$, all \strpaths{s,v_2} and \strpaths{s,v_3} have length at least two and thus also arrival time at least two.
			Hence,~$t_3 \geq 3$.
			Because~$v_2 \neq z \neq v_3$ and there is a \strpath{s,z} which contains~$(\{v_2,v_3\},t_3)$, we have~$t_3 < 4 \impl t_3=3$ and there is either a time-edge~$(\{v_2,z\},4) \in \TE$ or~$(\{v_3,z\},4) \in \TE$.
			The time-edge~$(\{v_2,z\},4)$ cannot exist, otherwise there would be a \strpath{v_2,z} of length one, but~$v_2 \in V_{(2,2)}$.
			This implies that~$(\{v_3,z\},4) \in \TE$ and~$t_4=4$.
			Hence, the time-edge sequence~$((\{s,v_1\},t_1),(\{v_1,v_2\},t_2),$ $(\{v_2,v_3\},t_3),(\{v_3,z\},t_4))$ forms a \strpath{s,z} and~$\{v_1,v_2,v_3 \} \cap L \neq \emptyset$.
% 			This contradicts the existence of~$P_D$ in~$D-L$.
			This contradicts the fact that~$L$ is \strsep{s,z} in~$\TG$.
	\end{description}

	In summary,~$P_D$ cannot exist in~$D-L$.
	Consequently,~$L$ is an \nsep{s,z} in~$D$.
\end{proof}
}
% \toappendix{
% One can observe that \cref{lemma:tau-4-mengerian} does not hold when the maximum label~$\tau > 4$.
% We refer to \cref{fig:non-merg-tau-5} for an example.
\noindent \Cref{fig:non-merg-tau-5} shows that if $\tau=5$, then we can construct a reduced temporal graph where the set $V_{(1, 1)}$ is not empty. 
This indicates why our algorithm fails for $\tau=5$.
\begin{figure}[t!]
	\centering
	\includegraphics[width=0.5\textwidth]{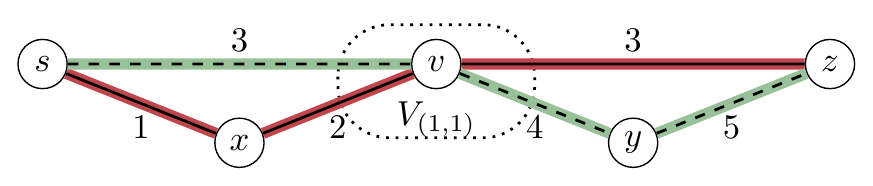}
	\caption{
		A reduced temporal graph with maximum label~$\tau=5$ where the vertex set~$V_{(1,1)}$ of the directed path cover graph is not empty. The solid (red) and dashed (green) edges are strict temporal paths and show that edges~$(\{s, v\},3)$ and~$(\{v, z\},3)$ are not removed when the graph is reduced.
% 		(\cref{def:reduced}). 
		Furthermore, $v$ is not removed since~$((\{s, v\},3),(\{v, z\},3))$ is not a strict temporal path.
	}
	\label{fig:non-merg-tau-5}
\end{figure}
% }

Finally, with \cref{lemma:reduced-instance,lemma:tau-4-mengerian} we can prove \cref{thm:str-t-4}.
%are all set to prove that \strproblem{} can be solved in polynomial-time when the maximum time-edge label~$\tau$ is at most four.
%We can use the directed path cover graph and \cref{lemma:tau-4-mengerian} to solve \strproblem{} in polynomial-time when the maximum time-edge label~$\tau$ is at most four.
% \appendixproof{thm:str-t-4}{\taufour*}{
  \begin{proof}[Proof of \cref{thm:str-t-4}]
	  Let~$\I := (\TG=(V,\TE,\tau=4),s,z,k)$ be an instance of \strproblem{}. 
	  First, apply \cref{lemma:reduced-instance} in~$\ON(k\cdot |\TE|)$~time to either decide~$\I$ or to obtain an instance~$\I'=(\TG'=(V',\TE',\tau),s,z,k')$ of \strproblem{}. %because we could reduce $G$ by \cref{lemma:reduced-instance} in $\ON(k \cdot |\TE|)$ time.
	  In the second case, compute the directed path cover graph~$D$ of~$\TG'$ from~$s$ to~$z$ in~$\ON(|\TE'|)$ time (by \cref{lemma:tau-4-mengerian}).
	  Next, check whether~$D$ has an \nsep{s,z} of size at most~$k'$ in~$\ON(k' \cdot |\TE'|)$ time by a folklore result~\cite{ford1956maximal}.
	  By \cref{lemma:tau-4-mengerian},~$D$ has an \nsep{s,z} of size~$k'$ if and only if $\TG'$ has a \strsep{s,z} of size~$k'$.
    	  Since by~\cref{lemma:reduced-instance} we have that~$\TG'$ is reduced,~$|V'|\leq |V|$, $|\TE'|\leq |\TE|$, and~$k'\leq k$, the overall running time is~$\ON(k\cdot|\TE|)$.
  \end{proof}
% }

%%%%%%%%%%%%%%%%%%%%%%%%%%%%%%%%%%%%%%%%%%%%%%%%%%%%%%%%%%%%%%%%%%%%%%%%%%%%%%%%%%%%%%%%%%%%%%%%%%%%%%%
%%%%%%%%%%%%%%%%%%%%%%%%%%%%%%%%%%%%%%%%%%%%%%%%%%%%%%%%%%%%%%%%%%%%%%%%%%%%%%%%%%%%%%%%%%%%%%%%%%%%%%%
%%%%%%%%%%%%%%%%%%%%%%%%%%%%%%%%%%%%%%%%%%%%%%%%%%%%%%%%%%%%%%%%%%%%%%%%%%%%%%%%%%%%%%%%%%%%%%%%%%%%%%%

% \paragraph{Parameterized hardness for solution size and maximum time label.}
% \section{On Temporal Graphs with Few Layers}
\section{On Temporal Graphs with Planar Underlying Graph}
\label{sec:nonstr}
\appendixsection{sec:nonstr}

% Consider the situation where a commuter wants to reach her working place~($z$) from home~($s$) by public transport. 
% There is a certain time when she needs to be at work and a certain time span that she is willing to spend traveling from her home to work. 
% Hence, considering the available transportation network, only a certain time interval is relevant. 
% If we restrict the transportation network to this time interval, then it is reasonable to assume only few layers to be present. 
% In this scenario, if there is no \nonstrsep{s,z} of size $k$, then the commuter can reach her working place in time, even if $k$ transport hubs are blocked. 

In this section, we study our problems on \emph{planar temporal graphs}, that is, temporal graphs that have a planar underlying graph.
We show that both \nonstrproblem{} and \strproblem{} remain \NP-complete on planar temporal graphs.
On the positive side, we show that on planar temporal graphs with  a constant number of layers, \strproblem{} can be solved in~$\ON(|\TE|\cdot\log|\TE|)$ time.

In order to prove our hardness results, we first prove \NP-hardness for \lenbndproblem{} on planar graphs---a result which we consider to be of independent interest; note that \NP-completeness on planar graphs was only known for the edge-deletion variant of \lenbndproblem{} on undirected graphs~\cite{FHNN18} and weighted directed graphs~\cite{PanS16}.%
\begin{restatable}%[\appref{lemma:np-hard-planar}]%
	{theorem}{npplanar}
	\label{lemma:np-hard-planar}
	\lenbndproblem{} on planar graphs is \NP-hard. 
\end{restatable}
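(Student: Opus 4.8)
The plan is to reduce from \lenbndproblem{} (LBS) on general graphs, which is already \NP-hard (indeed \NP-complete even for fixed $\ell=5$ by \citet{Baier2010}). The main difficulty is that a direct reduction does not preserve planarity: LBS is typically proven hard using gadgets that create ``twin'' vertices or high-degree hubs encoding weights, and these destroy planarity. So the heart of the proof is to design a \emph{planarity-preserving} gadget that simulates whatever non-planar structure the source reduction relies on—exactly the ``grid-like vertex gadget'' the introduction advertises as a technical contribution.

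Concretely, I would proceed as follows. First I would fix a known \NP-hardness reduction into (directed or weighted) \lenbndproblem{} that has a clean combinatorial structure—a natural candidate is the reduction establishing hardness for small $\ell$, or a reduction from a planar-friendly source problem such as \textsc{Planar Vertex Cover} or \textsc{Planar 3-SAT}. Starting from a planar source instance is the cleanest route, because then the only planarity to worry about is the one I introduce myself. Second, I would lay out the instance on a grid: place variable/clause (or vertex/edge) gadgets on lattice points and route the required $s$--$z$ paths along grid lines. The key device is the grid vertex gadget: whenever the intended reduction would need a single vertex $v$ to be ``used'' by many paths simultaneously (the role normally played by twins or a weighted vertex), I replace $v$ by a small grid of vertices wired so that (a) deleting the whole block costs a controlled, uniform amount toward the budget $k$, and (b) every short $(s,z)$-path that must ``pass through $v$'' is forced through this block, while crossings of unrelated path-routes are handled by explicit planar crossing gadgets that add a fixed, known length to any path traversing them.

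Third, I would set the budget $k$ and the length bound $\ell$ as functions of the source parameters so that the two directions of correctness go through. In the forward direction, a solution to the source instance is translated into a vertex set $S$ in $G$ whose removal kills every short $(s,z)$-path; here I must verify that the grid gadgets and crossing gadgets inflate all surviving $(s,z)$-path lengths beyond $\ell$ exactly when the source constraint is satisfied. In the backward direction, given an \lenbndproblem{}-separator of size at most $k$, I would argue it can be \emph{normalized}—pushed into the canonical ``full block'' form inside each gadget without increasing its size (a local exchange argument per gadget)—and then read off a solution to the source instance. The length bound $\ell$ must be chosen large enough that legitimate paths survive in the ``no''-case yet small enough that the crossing detours genuinely block unwanted routes.

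The main obstacle, and where I would spend the most care, is the \emph{length accounting}. Because \lenbndproblem{} is about path \emph{length}, every gadget I insert to enforce planarity lengthens paths, and these extra lengths must be made completely uniform across all routes so that whether a path is short (at most $\ell$) or long (exceeding $\ell$) depends only on the intended logical condition and not on accidental geometric detail. I expect this to require padding every gadget and every grid edge so that all ``direct'' routes have identical length, together with a crossing gadget whose traversal cost is the same for both directions and independent of how many crossings a route makes—otherwise paths through congested regions of the grid would be penalized differently and the correspondence would break. Once the gadgets are calibrated so that length depends purely on the combinatorial choices, the two correctness directions should follow from the standard gadget-by-gadget exchange arguments sketched above.
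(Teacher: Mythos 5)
Your central gadget idea --- replacing a vertex by a planar grid block to simulate the ``twins'' that would otherwise destroy planarity --- is indeed the key device in the paper's proof. But as written your plan has two concrete gaps. First, you never commit to a source problem, and the route you spend most of your effort on (laying a non-planar reduction onto a grid and inserting crossing gadgets) is the one that is likely to fail: in a \emph{vertex}-separator problem two routes that cross share vertices of the crossing gadget, so a single deletion can cut both routes and, worse, paths can turn at a crossing and create unintended short $(s,z)$-paths; no calibration of ``uniform crossing cost'' fixes that, and no crossing gadget for length-bounded separation is known. The paper avoids crossings entirely by reducing from a source that is \emph{already planar}: the edge-weighted \textsc{Planar Length-Bounded $(s,z)$-Cut} of~\citet{FHNN18}, with costs in $\{1,k+1\}$, maximum degree~$6$, and $s,z$ on the outer face. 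Your own aside that ``starting from a planar source instance is the cleanest route'' points in exactly this direction, but naming \textsc{Planar Vertex Cover} or \textsc{Planar 3-SAT} as candidates would reintroduce the crossing problem when you route the $(s,z)$-paths.

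Second, your calibration of the grid gadget is inverted. You ask that ``deleting the whole block costs a controlled, uniform amount toward the budget,'' and your backward-direction normalization pushes the separator into a canonical full-block form. In the actual proof the $(2k+2)\times(2k+2)$ grid with six connector sets of size $k+1$ is designed so that it can \emph{never} be profitably attacked: between any two connector sets there are $k+1$ vertex-disjoint paths (via Menger), so after removing any $k$ vertices the block is still traversable within $k'=(2k+2)^2-1$ steps. Consequently a minimum solution avoids all vertex gadgets and all cost-$(k+1)$ edge gadgets (the latter being $k+1$ parallel long paths), and every deletion is forced onto a cost-$1$ edge gadget, which is a single long path --- this is what turns edge deletion in the source into vertex deletion in the target. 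The length accounting you worry about is then a single global calibration, $\ell' = 2+(\ell+1)k'+\ell\bigl((\ell+1)k'+1\bigr)$, ensuring a path of length at most $\ell'$ traverses at most $\ell$ edge gadgets; no per-gadget uniformity beyond the bound $k'$ on intra-block detours is needed. Without these two corrections --- the planar edge-deletion source and the ``undeletable block'' role of the grid --- the two directions of your correctness argument do not go through.
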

%Roughly, the idea behind \cref{lemma:np-hard-planar} is to reduce from an \NP-complete planar edge-weighted edge-deletion variant of \lenbndproblem{} which has constant vertex degree.
%Since the degree is constant, one can replace a vertex with a grid-like gadget.
%

% \begin{proof}[Proof (Construction)]
\begin{proof}
	We give a many-one reduction from the \NP-complete~\cite{FHNN18} edge-weighted variant of \lenbndcutproblem{}, referred to as \textsc{Planar \lenbndcutproblem{}}, where the input graph~$G=(V,E)$ is planar, has edge costs~$c:E\to\{1,k+1\}$, has maximum degree~$\Delta=6$, the degree of~$s$ and~$z$ is three, and $s$ and $z$ are incident to the outer face. %, to \lenbndproblem{}.
	Since the maximum degree is constant, one can replace a vertex with a planar grid-like gadget.
	
	Let~$\I := (G=(V,E,c),s,z,\ell,k)$ be an instance of \textsc{Planar \lenbndcutproblem{}}, and we assume~$k$ to be even\footnote{If~$k$ is odd, since~$s$ and~$z$ are incident to the outer face, then we can add a path of length~$\ell-1$ with endpoints~$s$ and $z$ and set the budget for edge deletions to~$k+1$.}.
	We construct an instance~$\I' := (G',s',z',\ell',k)$ of \lenbndproblem{} as follows (refer to~\cref{fig:np-planar} for an illustration).

	\smallskip\noindent\emph{Construction.}
	For each vertex~$v \in V$, we introduce a \emph{vertex-gadget}~$G_v$ which is a grid %
	of size~$(2k+2) \times (2k+2)$, that is, a graph with vertex set~$~\{u^v_{i,j}\mid i,j\in[2k+2]\}$ and edge set~$\{\{u^v_{i,j},u^v_{i',j'}\}\mid|i-i'|+|j-j'|=1\}$.
	There are six pairwise disjoint subsets~$C_v^1,\dots,C_v^6 \subseteq V(G_v)$ of size~$k+1$ that we refer to as \emph{connector sets}.
	As we fix an orientation of~$G_v$ such that $u^v_{1,1}$ is in the top-left, there are two connector sets on the top of~$G_v$, two on the bottom of~$G_v$, one on the left of~$G_v$, and one on the right of~$G_v$. 
	Formally, $C_v^1=\{u^v_{1,k+2},\ldots,u^v_{1,2k+2}\}$, $C_v^2=\{u^v_{k/2,2k+2},\ldots,u^v_{3k/2,2k+2}\}$, $C_v^3=\{u^v_{2k+2,k+2},\ldots,u^v_{2k+2,2k+2}\}$, $C_v^4=\{u^v_{2k+2,1},\ldots,u^v_{2k+2,k+1}\}$, $C_v^5=\{u^v_{k/2,1},\ldots,u^v_{3k/2,1}\}$, and $C_v^6=\{u^v_{1,1},\ldots,u^v_{1,k+1}\}$.

	\begin{figure}
	  \includegraphics[width=\textwidth]{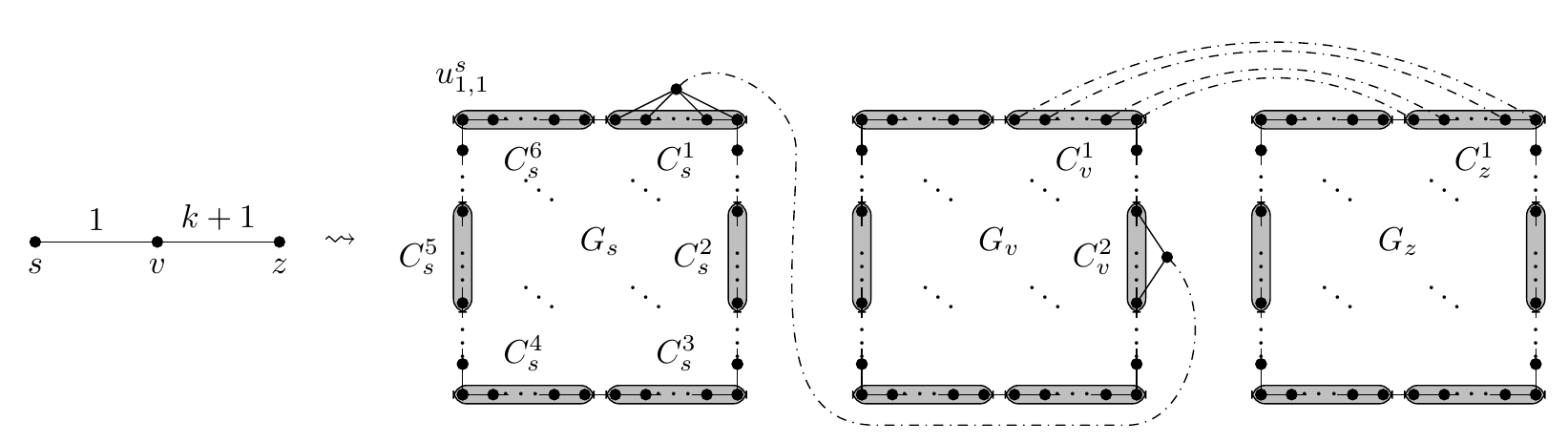}
	  \caption{A simple planar graph~$G$ (left) with edge costs (above edges) and the obtained graph~$G'$ in the reduction from~\cref{lemma:np-hard-planar}. 
	      The connector sets are highlighted in gray. 
	      The edge-gadgets are indicated by dash-dotted lines.}
	      \label{fig:np-planar}
	\end{figure}

	Note that all \npaths{x,y} are of length at most~$k' := (2k+2)^2-1$, for all~$x,y \in V(G_v)$, because there are only~$(2k+2)^2$ vertices in~$V(G_v)$.

	Let~$\phi(G)$ be a plane embedding of~$G$.
	We say that an edge~$e$ incident with vertex~$v \in V$ is \emph{at position~$i$ on~$v$} if~$e$ is the $i$th edge incident with~$v$ when counted clockwise with respect to~$\phi(G)$.
	For each edge~$e=\{v,w\}$, we introduce an \emph{edge-gadget}~$G_{e}$ that differs on the weight of~$e$, as follows.
	Let~$e$ be at position~$i \in \{1, \dots, 6\}$ on~$v$ and at position~$j \in \{1, \dots, 6\}$ on $w$.
	If~$c(e)=1$, then~$G_{e}$ is constructed as follows.
	Add a path consisting of~$(\ell +1)\cdot k' - 1$ vertices and connect one endpoint with each vertex in~$C_v^i$ by an edge and connect the other endpoint with each vertex in~$C_w^j$ by an edge. 
	If~$c(e)=k+1$, then~$G_{e}$ is constructed as follows.
	We introduce a planar matching between the vertices in~$C_v^i$ and~$C_w^j$.
	That is, for instance, we connect vertex~$u^v_{1,k+2+p}$ with vertex~$u^w_{1,2k+2-p}$ for each~$p\in\{0,\ldots,k\}$, if $i=j=1$, or we connect vertex~$u^v_{1,1+p}$ with vertex~$u^w_{2k+2,3k/2-p}$ for each~$p\in\{0,\ldots,k\}$, if $i=6$ and $j=2$ (we omit the remaining cases).
	Then, replace each edge by a path of length at least~$(\ell +1)\cdot k' +1$ where its endpoints are identified with the endpoints of the replaced edge.
	Hence, a path between two vertex-gadgets has length at least~$(\ell +1)\cdot k' + 1$.
	
	Next, we choose connector sets~$C_s^{i'}$ and~$C_z^{j'}$ such that no vertex~$v \in C_s^{i'} \cup C_z^{j'}$ is adjacent to a vertex from an edge-gadget.
	Such~$i'$ and~$j'$ always exist because the degrees of~$s$ and~$z$ are both three.
	Now, we add two special vertices~$s'$ and~$z'$ and edges between~$s'$ and each vertex in~$C_s^{i'}$, as well as between~$z'$ and each vertex in~$C_z^{j'}$.

	Finally, we set 
	$
	    \ell' := 2 + (\ell+1) \cdot k' + \ell \left( (\ell+1)\cdot k' +1 \right).
	$
	Note that 
	$G'$ can be computed in polynomial~time.
	Moreover, one can observe that~$G'$ is planar by obtaining an embedding from~$\phi$.
	This concludes the description of the construction.
% \end{proof}
% \appendixproof{lemma:np-hard-planar}{\npplanar*}
% {
% \begin{proof}[Proof (Correctness)]
   
	\smallskip\noindent\emph{Correctness.}
	We claim that~$\I$ is a \yes-instance if and only if~$\I'$ is a \yes-instance.

	\implone{}
	Let~$\I$ be a \yes-instance.
	Thus, there is a solution~$C \subset E$ with~$c(C)\leq k$ such that there is no \npath{s,z} of length at most~$\ell$ in~$G - C$.
	We construct a set~$S \subset V(G')$ of size at most~$k$ by taking for each~$\{v,w\} \in C$ one arbitrary vertex from the edge-gadget~$G_{\{v,w\}}$ into~$S$.
	Note that since~$c(C)\leq k$, each edge in~$C$ is of cost one.
	
	Assume towards a contradiction that there is a shortest \npath{s',z'}~$P'$ of length at most~$\ell'$ in~$G' - S$.
	Since a path between two vertex-gadgets has length at least~$(\ell +1)\cdot k' + 1$, we know that~$P'$ goes through at most~$\ell$ edge-gadgets.
	Otherwise~$P'$ would be of length at least 
	$
	    2 + (\ell + 1) \cdot \left[(\ell + 1) \cdot k' + 1 \right]
	    = 2 + (\ell + 1)\cdot k' + \ell \cdot \left[(\ell + 1) \cdot k' + 1 \right] + 1 = \ell'+1.		
	$
	Now, we reconstruct an \npath{s,z}~$P$ in~$G$ corresponding to~$P'$ by taking an edge~$e \in E$ into~$P$ if~$P'$ goes through the edge-gadget~$G_e$.
	Hence, the length of~$P$ is at most~$\ell$.
	This contradicts that there is no \npath{s,z} of length at most~$\ell$ in~$G - C$.
	Consequently, there is no \npath{s',z'} of length at most~$\ell'$ in~$G'-S$ and~$\I'$ is a \yes-instance.

	\impltwo{}
	Let~$\I'$ be a \yes-instance.
	Thus, there is a solution~$S \subseteq V(G')$ of minimum size (at most~$k$) such that there is no \npath{s',z'} of length at most~$\ell'$ in~$G' - S$.
	Since~$S$ is of minimum size, it follows from the following claim that~$V(G_v)\cap S=\emptyset$ for all~$v\in V$.
	\begin{claim}
	  \label{claim:planar-np}
	  Let~$G_v$ be a vertex-gadget and~$i, j \in \{1,\dots, 6 \}$ with~$i \neq j$.
	  Then, for each vertex set~$S \subseteq V(G_v)$ of size at most~$k$ it holds that there are~$v_1 \in C_v^i \setminus S$ and~$v_2 \in C_v^j \setminus S$ such that there is a \npath{v_1,v_2} of length at most~$k'$ in~$G_v-S$.
	\end{claim}
	\begin{proof}[Proof of \cref{claim:planar-np}]
	  %Let~$G_v$ be a vertex-gadget and~$C_v^i, C_v^j$ two connector sets of~$G_v$,
	  	  Let~$C_v^i, C_v^j$ two connector sets of a vertex-gadget~$G_v$, where~$i,j \in \{1,\dots, 6 \}$ and~$i \not = j$.
	  We add vertices~$a$ and $b$ and edges~$\{a,a'\}$ and~$\{b,b'\}$ to~$G_v$, where~$a' \in C_v^i$ and~$b' \in C_v^j$.
	  There are~$6 \choose 2$ different cases in which~$i \not = j$.
	  It is not difficult to see that in each case there are~$k+1$ vertex-disjoint \npaths{a,b}.
	  The claim then follows by Menger's Theorem~\cite{Menger1927}.
	\end{proof}
	Note that by minimality of~$S$, it holds that~$V(G_{e}) \cap S=\emptyset$ for all~$e\in E$ with~$c(e)=k+1$.
	We construct an edge set~$C \subseteq E$ of cost at most~$k$ by taking~$\{v,w\} \in E$ into~$C$ if there is a~$y \in V(G_{\{v,w\}}) \cap S$.

	Assume towards a contradiction that there is a shortest \npath{s,z}~$P$ of length at most~$\ell$ in~$G - C$.
	We reconstruct an \npath{s',z'}~$P'$ in~$G'$ which corresponds to~$P$ as follows.
	First, we take an edge~$\{s', v \} \in E(G')$ such that~$v \in C_s^{i'} \setminus S$.
	Such a~$v$ always exists, because~$|C_s^{i'}| = k+1$ and~$|S| \leq k$.
	Let~$\{s,w\} \in E$ be the first edge of~$P$ and at position~$i$ on~$w$.
	Then we add a \npath{v,v'}~$P_s$ in~$G_s - S$, such that~$v' \in C_s^i \setminus S$.
	Due to~\cref{claim:planar-np}, such a \npath{v,v'}~$P_s$ always exists in~$G_s - S$ and is of length at most~$k'$.

	We take an edge-gadget~$G_e$ into~$P'$ if~$e$ is in~$P$.
	Recall, that an edge-gadget is a path of length~$(\ell+1) \cdot k'+1$.
	Due to~\cref{claim:planar-np}, we can connect the edge-gadgets~$G_{\{v_1,v_2\}},G_{\{v_2,v_3\}}$ of two consecutive edges~$\{v_1,v_2\},\{v_2,v_3\}$ in~$P$ by a path of length at most~$k'$ in~$G_{v_2}$.
	Let~$\{v_p,z\}$ be the last edge in~$P$,~be at position~$j$ on~$z$,~$v \in C_z^j$, and~$v' \in C_z^{j'}$.
	We add a \npath{v,v'} of length~$k'$ in~$G_z - S$ (\cref{claim:planar-np}).
	Note that~$P'$ visits at most~$\ell+1$ vertex-gadgets and~$\ell$~edge-gadgets.
	The length of~$P'$ is at most
	$
		2 + (\ell+1)\cdot k' + \ell \left[ (\ell+1) \cdot k' + 1 \right] = \ell.
	$
	This contradicts that~$S$ forms a solution for~$\I'$.
	It follows that there is no \npath{s,z} of length at most~$\ell$ in~$G-C$ and~$\I$ is a \yes-instance.
	%
% 	\paragraph{Kernelization lower bound.}
% 	Note that unless~$\NP \subseteq \coNP / \poly$, \lenbndcutproblem{} does not admit a polynomial kernel when parameterized by~$k+\ell+\tw$~\cite{till}.
% 	Observe that the given reduction is polynomial parameter transformation with respect to~$k+\ell+\tw$.
% 	This is not difficult to see for~$k$ and~$\ell$.
% 	To see this in the case of treewidth, consider a tree decomposition~$\T$ of~$\tw(G)$ for~$G$.
% 	We construct a tree decomposition~$\T'$ for~$G'$ of width in~$(k+\ell+\tw(G))^{\ON(1)}$.
% 	First, for each edge~$\{v,w\}\in E$, we add to every bag containing both~$v$ and~$w$ the at most~$(k+1)((\ell+1)k'+1)$ vertices from the edge-gadget~$G_{\{v,w\}}$.
% 	This step increases the size of each bag by at most~${tw(G)\choose 2}\cdot (k+1)((\ell+1)k'+1)$.
% 	Second, for every vertex~$v\in V$, we replace~$v$ in each bag containing~$v$ by all $(2k+2)^2$ vertices from~$G_v$.
% 	This step increases the size of each bag by at most~$\tw(G)\cdot(2k+2)^2-\tw(G)$.
% 	Finally, we add~$s'$ and~$z'$ to every bag.
% 	It is not difficult to see that~$\T'$ is a tree decomposition of~$G'$.
% 	In total, the width of~$\T'$ is at most~${tw(G)\choose 2}\cdot (k+1)((\ell+1)k'+1)+\tw(G)\cdot(2k+2)^2+2\in (k+\ell+\tw(G))^{\ON(1)}$.
	%
\end{proof}
% }

From the proofs of~\cref{thm:nphard,lemma:from-str-to-nonstr} (planarity-preserving reductions for the underlying graph), together with~\cref{lemma:np-hard-planar} we get the following:
\begin{cor}
  \label{cor:sprobhardonplanar}
 Both~\nonstrproblem{} and~\strproblem{} on planar temporal graphs are \NP{}-complete.
\end{cor}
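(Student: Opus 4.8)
The plan is to chain the reductions already established in the excerpt, checking at each link that planarity of the underlying graph is preserved. Membership in~\NP{} holds exactly as in the general case covered by \cref{thm:nphard}: a candidate \ssep{s,z} of size at most~$k$ is a polynomial-size certificate, and the absence of a (strict) temporal $(s,z)$-path in the remaining temporal graph can be tested in polynomial time (for instance via reachability in the corresponding static expansion). It therefore remains to establish \NP-hardness of both variants on temporal graphs whose underlying graph~$\ug{G}(\TG)$ is planar.

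For the strict variant I would invoke \cref{obs:lbstostrprob}, which reduces \lenbndproblem{} to \strproblem{} by mapping an instance~$(G,s,z,k,\ell)$ to the temporal graph~$\TG=(G_1,\dots,G_\ell)$ with~$G_i=G$ for every~$i\in[\ell]$. The key point is that every layer of~$\TG$ is a copy of~$G$, so the underlying graph~$\ug{G}(\TG)$ is exactly~$G$; in particular, if~$G$ is planar then so is~$\ug{G}(\TG)$. Composing this reduction with the \NP-hardness of \lenbndproblem{} on planar graphs (\cref{lemma:np-hard-planar}) yields \NP-hardness of \strproblem{} on planar temporal graphs.

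For the non-strict variant I would feed the planar strict instance produced above into the reduction of \cref{lemma:from-str-to-nonstr} and verify that it preserves planarity of the underlying graph. In that reduction each time-edge~$(\{v,w\},t)$ is replaced by the two edge-vertices~$e_{(v,w),t},e_{(w,v),t}$ together with four time-edges whose underlying edges form a $4$-cycle on the vertices~$v,e_{(v,w),t},w,e_{(w,v),t}$; equivalently, the edge~$\{v,w\}$ is replaced by two internally vertex-disjoint paths of length two between~$v$ and~$w$. Such a replacement is a planar operation, and it remains planar even when an edge of~$G$ carries several time-stamps (as it does here, since every layer equals~$G$): the resulting gadgets are several parallel doubled length-two paths between the same endpoints, which can be nested one inside the other within a small neighbourhood of the original edge in a plane embedding of~$\ug{G}(\TG)$. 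Hence the underlying graph of the constructed non-strict instance is planar, and \cref{lemma:from-str-to-nonstr} carries the hardness over to \nonstrproblem{} on planar temporal graphs.

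The one step demanding genuine care is this final planarity check: one must ensure that replacing a single planar edge by the several parallel gadgets arising from its distinct time-stamps does not destroy planarity. This is routine once one recalls that edge subdivision and the addition of parallel multi-edges both preserve planarity, but it is the only place where the combinatorics of the reduction of \cref{lemma:from-str-to-nonstr}, rather than a black-box citation, must actually be inspected.
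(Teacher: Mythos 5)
Your proposal is correct and follows essentially the same route as the paper: the paper's one-line proof cites exactly the chain you describe, namely \cref{lemma:np-hard-planar} composed with the layer-copying reduction \cref{obs:lbstostrprob} (which leaves the underlying graph equal to $G$) and then with the planarity-preserving subdivision gadget of \cref{lemma:from-str-to-nonstr}. Your explicit check that the parallel length-two paths arising from multiple time stamps on one edge can be nested in a plane embedding is precisely the detail the paper leaves implicit.
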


%%%%%%%%%%%%%%%%%%%%%%%%%%%%%%%%%%%%%%%%%%%%%%%%%%%%%%%%%%%%%%%%%%%%%%%%%%%%%%%%%%%%%%%%%%%%%%%%%%%%%%%
%%% MSO
%%%%%%%%%%%%%%%%%%%%%%%%%%%%%%%%%%%%%%%%%%%%%%%%%%%%%%%%%%%%%%%%%%%%%%%%%%%%%%%%%%%%%%%%%%%%%%%%%%%%%%%
% \subparagraph{Planar Temporal Graphs.}
In contrast to the case of general temporal graphs, \strproblem{} on planar temporal graphs is efficiently solvable if the maximum label~$\tau$ is any constant.
To this end, we introduce monadic second-order (\MSO) logic and the optimization variant of Courcelle's Theorem \cite{ARNBORG1991308,courcelle2012graph}.

\emph{Monadic second-order (\MSO) logic} consists of a countably infinite set of (individual) variables, unary relation variables, and a vocabulary.
A \emph{formula} in monadic second-order logic on graphs is constructed from the vertex variables, edge variables, an incidence relation between edge and vertex variables ${\rm inc}(e,x)$, the relations~$=, \land, \lor, \neg$, and the quantifiers~$\forall$ and~$\exists$.
We will make use of many folklore shortcuts, for example: $\not=$, $\subseteq$ , $v \in V$ to check whether $v$ is a vertex variable, and $e \in E$ to check whether $e$ is an edge variable.
The \emph{treewidth} $\tw(G)$ of a graph $G$ measures how tree-like a graph is.
In this paper we are using the following theorem and the treewidth of a graph as a black box.
For  formal definitions of treewidth and \MSO{}, refer to \citet{courcelle2012graph}.
\begin{theorem}[\citet{ARNBORG1991308,courcelle2012graph}]
	\label{thm:mso-opt}
	There exists an algorithm that, given
	\begin{compactenum}[(i)]
		\item an \MSO{} formula $\rho$ with free monadic variables $X_1,\dots,X_r$,
		\item an affine function $\alpha(x_1,\dots, x_r)$, and
		\item a graph $G$,
	\end{compactenum}
	finds the minimum (maximum) of~$\alpha(|X_1|,\dots,|X_r|)$ over evaluations of~$X_1,\dots,X_r$ for which formula~$\rho$ is satisfied on $G$.
	The running time is~$f(|\rho|,\tw(G)) \cdot |G|$, where~$|\rho|$ is the length of~$\rho$, and~$\tw(G)$ is the treewidth of~$G$.
\end{theorem}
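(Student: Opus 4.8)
Since this is the optimization variant of Courcelle's theorem, the plan is to lift the classical automata-theoretic proof of \MSO{} model checking on graphs of bounded treewidth so that it additionally tracks the cardinalities of the free set variables and optimizes the affine objective. First I would compute a tree decomposition of $G$ of width $w=\tw(G)$ (for instance via Bodlaender's algorithm, whose running time is of the form $g(w)\cdot|G|$), make it nice, and then encode the pair consisting of the graph and its decomposition as a finite term: a rooted labeled tree over an alphabet whose size depends only on $w$, where each node carries the isomorphism type of its bag together with the intra-bag adjacencies and the overlap with its parent bag.

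Next I would invoke the fundamental correspondence between \MSO{} and finite tree automata (Doner; Thatcher--Wright; in the graph setting, Courcelle). I would enlarge the alphabet so that each tree node also records, for every free monadic variable $X_1,\dots,X_r$, which elements of its bag are placed into each $X_i$. An assignment to $X_1,\dots,X_r$ then corresponds bijectively to such an enriched labeling, and one obtains a bottom-up finite tree automaton $\mathcal{A}_\rho$ that accepts exactly the enriched terms whose encoded assignment satisfies $\rho$ on $G$. The number of states of $\mathcal{A}_\rho$ is bounded by a function $f(|\rho|,w)$.

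For the optimization I would run a dynamic program over the nice tree decomposition whose table at each node is indexed by a state $q$ of $\mathcal{A}_\rho$ and stores the optimum (minimum or maximum) of the partial objective accumulated so far, over all partial assignments on the already-processed part of $G$ that drive $\mathcal{A}_\rho$ into state $q$. Because $\alpha$ is affine in the $|X_i|$, its value decomposes additively over the vertices, so each vertex contributes exactly once---most cleanly accounted for at forget nodes---while introduce and join nodes merely update the state through the transition function of $\mathcal{A}_\rho$. Correctness follows from a standard exchange argument: two partial assignments reaching the same state $q$ are interchangeable in every extension, hence only the optimal partial value per state can participate in a global optimum. At the root I would read off the optimum over the accepting states of $\mathcal{A}_\rho$.

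The main obstacle will be the \MSO-to-automaton step: constructing $\mathcal{A}_\rho$ and bounding its size $f(|\rho|,w)$ (which is unavoidably non-elementary in $|\rho|$) requires the inductive translation of subformulae through the Boolean connectives and the two quantifier types, with set quantification handled by projection followed by determinization---this is where essentially all of the technical work sits. Granting this translation, the running time is immediate: the decomposition has $\ON(|G|)$ nodes, each table has at most $f(|\rho|,w)$ entries, and every introduce, forget, or join update costs a function of $f(|\rho|,w)$, for a total of $f(|\rho|,\tw(G))\cdot|G|$, as claimed.
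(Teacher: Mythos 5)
Your sketch is correct and is essentially the standard automata-theoretic proof (tree decomposition, term encoding, \MSO{}-to-tree-automaton translation with the free set variables folded into the alphabet, then a per-state dynamic program for the affine objective) found in the cited references of \citet{ARNBORG1991308} and Courcelle--Engelfriet. The paper itself gives no proof of this theorem --- it explicitly uses it as a black box --- so there is nothing to contrast with beyond noting that your argument matches the one in the sources it cites.
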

We employ \cref{thm:mso-opt} to solve \strproblem{} on planar temporal graphs.
\begin{restatable}{proposition}{planarfpt}
\label{thm:planarfpt}
	\strproblem{} on planar temporal graphs can be solved in $\ON(|\TE|\cdot\log|\TE|)$ time, if the maximum label~$\tau$ is constant.
\end{restatable}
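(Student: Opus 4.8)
The plan is to reduce \strproblem{} on planar temporal graphs with constant $\tau$ to a single graph-theoretic separation problem expressible in \MSO{}, and then invoke the optimization version of Courcelle's Theorem (\cref{thm:mso-opt}). The key observation is that the \emph{strict static expansion} $H$ from \cref{sec:staticexp} encodes all \strpaths{s,z} as ordinary $(s,z)$-paths, and that by \cref{lemma:const-static-exp} it has size $\ON(|\TE|)$ and is computable in linear time. First I would build $H$, but I must work with an undirected incidence structure so that \MSO{} on graphs (with the ${\rm inc}(e,x)$ relation) can reason about it; since $H$ is a DAG, I would orient it by, e.g., carrying the layer/column ordering as additional unary predicates or encoding direction via an auxiliary labelling, so that ``$(s,z)$-path in $H$'' becomes an \MSO{}-definable reachability statement.

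The second step is bounding treewidth. Here I would exploit that the underlying graph $\ug{G}$ is planar and that $\tau$ is constant. Each vertex $v$ contributes at most $\ON(\tau)$ copies $u_{t,j}$ to $H$, and the arcs of $H$ only connect copies in consecutive layers (or within a column of a single original vertex). Intuitively, $H$ is a ``blow-up'' of $\ug{G}$ by a factor depending only on $\tau$, so $\tw(H) \le g(\tau)\cdot\tw(\ug{G})$ for some function $g$. Since $\ug{G}$ is planar, $\tw(\ug{G}) = \ON(\sqrt{|V|})$ in the worst case, which is \emph{not} bounded---so a naive planarity argument is insufficient. The hard part will be getting genuinely bounded (constant) treewidth. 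I expect the intended route is instead to invoke the planar separator / Baker-style layering: one either decomposes the planar graph into pieces of bounded treewidth, or more directly observes that bounded $\tau$ together with planarity does not by itself bound treewidth, so the $\log$ factor in the running time must come from a different source.

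Reconciling this with the stated running time $\ON(|\TE|\cdot\log|\TE|)$ suggests the actual argument is not a direct single Courcelle application on all of $H$, but rather a divide-and-conquer over the planar structure. Concretely, I would use the fact that a planar graph admits a balanced separator of size $\ON(\sqrt{n})$, recurse, and on each piece---whose relevant ``interface'' can be made to have bounded treewidth after the layering trick---apply \cref{thm:mso-opt} with the constant formula length (depending only on $\tau$) so that the per-piece cost is linear in the piece size. The recursion then contributes the single logarithmic factor. The \MSO{} formula $\rho$ I would write asserts that a chosen vertex set $X$ is a separator: ``for every set $Y$ of vertices and edges forming a directed $s$-$z$ path in $H$, some vertex of $Y$ lies in $X$,'' which is expressible because reachability and path-existence in a fixed-layer DAG are \MSO{}-definable; the affine objective is simply $\alpha(|X|) = |X|$, minimized subject to $\rho$.

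The main obstacle, then, is twofold: first, formalizing the \MSO{} sentence $\rho$ so that it faithfully captures \emph{strict} temporal separation via the static expansion $H$ (this requires care that the encoding of directions and the column-edges is itself \MSO{}-definable, and that the free set variable ranges over original vertices $V$, not over the expansion copies); and second, engineering the treewidth bound so that the formula can be evaluated in (near-)linear time despite planar graphs having unbounded treewidth. I expect the second to be the crux, resolved either by a layering/sphere-cut decomposition giving bounded-treewidth pieces or by the planar-separator recursion that explains the $\log|\TE|$ factor. Once both are in place, combining \cref{lemma:const-static-exp} (linear-time construction of $H$ of size $\ON(|\TE|)$), the bounded formula length ($|\rho|$ depends only on the constant $\tau$), and \cref{thm:mso-opt} yields the claimed $\ON(|\TE|\cdot\log|\TE|)$ running time.
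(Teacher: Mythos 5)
There is a genuine gap, and you have in fact located it yourself: the treewidth bound. Your proposal never establishes bounded treewidth, and neither of the escape routes you sketch works. A planar-separator divide-and-conquer does not apply, because a minimum \strsep{s,z} is a global object --- you cannot solve the problem independently on balanced pieces and combine the answers --- and Baker-style layering yields approximation schemes, not exact algorithms for an \MSO{}-optimization problem. Your attribution of the $\log|\TE|$ factor to such a recursion is consequently also wrong: in the paper that factor comes merely from the data structure used to aggregate, for each edge of the underlying graph, the set of layers in which it appears into a single label in $[2^\tau-1]$; the Courcelle step itself is linear.

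The missing idea is elementary once seen: a \strpath{s,z} uses strictly increasing time labels, so it has length at most $\tau$. Hence every vertex at distance more than $\tau$ from~$s$ in the underlying graph can be deleted after a single breadth-first search, and what remains is a planar graph of diameter at most $2\tau$; the treewidth of a planar graph is at most three times its diameter, which is a constant for constant~$\tau$. The paper then works directly on this edge-labeled underlying graph $L(\TG)$ rather than on the strict static expansion~$H$, writing the formula ${\rm path}(S)$ with $\tau+1$ explicitly quantified vertices $x_1,\dots,x_{\tau+1}$ --- this also sidesteps your (legitimate) worry about making the free set variable range over original vertices rather than over expansion copies, and about encoding arc directions and column-edges in \MSO{}. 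Your static-expansion route could in principle be salvaged, since $H$ restricted to the relevant ball inherits the same bounded-diameter structure, but without the bounded-path-length observation neither your version nor any other obtains the constant treewidth that \cref{thm:mso-opt} requires.
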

\begin{proof}
	Let $\I = (\TG =(V,\TE,\tau),s,z,k)$ be an instance of \strproblem{}, where the underlying graph of $\TG$ is planar.

	We  define the optimization variant of \strproblem{} in monadic second order logic (\MSO{}) and show that it is sufficient to solve the \strproblem{} on a temporal graph where we can upper-bound the treewidth of the underlying graph by a function only depending on~$\tau$.

	We define the edge-labeled graph $L(\TG)$ to be the underlying graph~$\ug{G}$ of $\TG$ with the edge-labeling $\omega : E(\ug{G}) \rightarrow [2^\tau-1]$ with~$\omega(\{v,w\}) = \sum_{i=1}^{\tau} \indic_{\{v,w\} \in E_i} \cdot 2^{i-1}$, where $\indic_{\{v,w\} \in E_i}=1$ if and only if~$(\{v,w\},i)\in\TE$, and $0$ otherwise.
	Observe that in binary representation, the $i$-th bit of~$\omega(\{v, w\})$ is one if and only if $\{v,w\}$ exists at time point~$i$. 
	To compute~$L(\TG)$ we first iterate over the time-edge set $\TE$ and add an edge $\{v,w\}$ to the edge set $E$ of the underlying graph $L(\TG)$ whenever we find a time-edge~$\big(\{v,w\},t\big) \in \TE$.
	Furthermore, we associate with each edge in $E$ an integer~$\omega_{\{v,w\}}$ which is initially zero, and gets an increment of $2^{t-1}$ for each time-edge~$\big(\{v,w\},t\big) \in \TE$.
	With the appropriate data structure, this can be done in~$\ON(|\TE|\cdot\log|\TE|)$~time. 
	Observe, that in the end $\omega_{\{v,w\}} = \omega(\{v,w\})$. 
	Finally, we copy the vertex set of $\TG$, and hence compute $L(\TG)$ in~$\ON(|V|+|\TE|\cdot\log|\TE|) = \ON(|\TE|\cdot\log|\TE|)$ time.

	\citet{ARNBORG1991308} showed that it is possible to apply \cref{thm:mso-opt} to graphs in which edges have labels from a fixed finite set, either by augmenting the graph logic to incorporate predicates describing the labels, or by representing the labels by unquantified edge set variables. 
	Since \cref{thm:mso-opt} gives a linear-time algorithm with respect to the input size, we can conclude that if the size of that label set depends on~$\tau$, then \cref{thm:mso-opt} still gives a linear-time algorithm for arbitrary but fixed~$\tau$, the size of the \MSO{} formula, and the treewidth.

	We will define the optimization variant of \strproblem{} in \MSO{} on~$L(\TG)$.
	First, the \MSO{} formula ${\rm layer}(e,t) :=~\bigvee_{i=1}^\tau \bigvee_{j\in \sigma(i, 2^\tau-1)} \big(t = i~\land \omega(e) = j\big) $ checks whether an edge $e$ is present in the layer $t$, where $\sigma(i,j) := \set{ x \in [j] ; i\text{-th bit of }x\text{ is 1} }$.
	Note that the length of ${\rm layer}(e,t)$ is upper-bounded by a some function in $2^{\ON(\tau)}$.
	Second, we can write an \MSO{} formula ${\rm tempadj}(v,w,t) :=~\exists_{e \in E}\big({\rm inc}(e,v) \land {\rm inc}(e,w) \land {\rm layer}(e,t)\big)$ to determine whether two vertices~$v$ and~$w$ are adjacent at time point $t$.
	Since the length of ${\rm layer}(e,t)$ is upper-bounded by some function in~$2^{\ON(\tau)}$, the length of ${\rm tempadj}(v,w,t)$ is upper-bounded by some function in~$2^{\ON(\tau)}$ as well.
	Third, there is an \MSO{} formula $${\rm path}(S) := \exists_{x_1,\dots,x_{\tau+1} \in V \setminus S}\Big( x_1 = s \land x_{\tau+1}=z \land \bigwedge_{i=1}^{\tau} \big( x_i = x_{i+1} \lor {\rm tempadj}(x_i,x_{i+1},i) \big)\Big)$$ to check whether there is a \strpath{s,z} which does not visit any vertex in~$S$.
	Observe, that the length of ${\rm path}(S)$ is upper-bounded by some function in~$2^{\ON(\tau)}$.
	Finally, with \cref{thm:mso-opt} we can solve the optimization variant of \strproblem{} by the formula~$\phi(S):=S \subseteq (V\setminus \{ s, z\}) \land \neg {\rm path}(S)$ and the affine function~$\alpha(x) = x$ in $f(\tau,\tw(\ug{G})) \cdot |V|$ time, where~$f$ is some computable function.
	This gives us an overall running time of $\ON(f(\tau,\tw(\ug{G})) \cdot |\TE|\cdot\log|\TE|)$ to decide~$\I$ for \strproblem{}.

	Note that every \strpath{s,z} in $\TG$ and its induced \npath{s,z} in $\ug{G}$ have length at most $\tau$.
	Hence, we can remove all vertices from $\ug{G}$ which are not reachable from~$s$ by a path of length at most $\tau$.
	We can find the set of vertices $N$ which is reachable by a path of length $\tau$ by a breadth first search from~$s$ in linear time.
	Therefore, it is sufficient to solve the \strproblem{} on $\TG[N \cup \{s\}]$.
	Observe, that $L(\TG[N \cup \{s\}])$ has diameter at most $2\cdot\tau$ and is planar.
	The treewidth of a planar graph is at most three times the diameter (see \citet{flum2006parameterized}).
	Thus, (the optimization variant of) \strproblem{}  can be solved in~$f'(\tau) \cdot |\TE|\cdot\log|\TE|$~time, where~$f'$ is a computable function.
	For any constant~$\tau$, this yields a running time of~$\ON(|\TE|\cdot\log|\TE|)$. 
\end{proof}
%%%%%%%%%%%%%%%%%%%%%%%%%%%%%%%%%%%%%%%%%%%%%%%%%%%%%%%%%%%%%%%%%%%%%%%%%%%%%%%%%%%%%%%%%%%%%%%%%%%%%%%
%%%%%%%%%%%%%%%%%%%%%%%%%%%%%%%%%%%%%%%%%%%%%%%%%%%%%%%%%%%%%%%%%%%%%%%%%%%%%%%%%%%%%%%%%%%%%%%%%%%%%%%
%%%%%%%%%%%%%%%%%%%%%%%%%%%%%%%%%%%%%%%%%%%%%%%%%%%%%%%%%%%%%%%%%%%%%%%%%%%%%%%%%%%%%%%%%%%%%%%%%%%%%%%

\section{On Temporal Graphs with Small Temporal Cores}
\label{sec:strict}
% \appendixsection{sec:strict}

%\todo[inline]{TF: needs motivations, PZ: communication network with hotspots; combination of a streets + airplanes}

% In several real-world scenarios it is a reasonable assumption that an instance of~\sproblem{} has a small number of vertices that are adjacent to \emph{temporal} edges, that is, edges that appear in some but not all layers. 
% Consider for instance the following setting: An employee at a university in Berlin uses her bike every day to ride from her apartment~($s$) to the university campus~($z$). 
% To cover large distances, there is the possibility to take the bike onto the city train. 
% So the transportation network consists of a potentially large number of traffic nodes connected by bike lanes (static edges, that is, they appear in every layer) and a manageable number of city railway station that are potentially connected by both bike lanes and railway lines (temporal edges, as city trains are not always available). 
% %A temporal \nsep{s,z} upper-bounds the minimum number of traffic hubs that need to be blocked to prevent the employee from reaching the university by bike.
% If a \nonstrsep{s,z} of size $k$ does not exist, then the employee can reach the university by bike, even if $k$ traffic hubs are blocked.
% % \todo[inline]{HM: non-strict motivation still missing..}

In this section, we investigate the complexity of deciding \sproblem{} on temporal graphs where the number of vertices whose incident edges change over time is small. 
We call the set of such vertices the \emph{temporal core} of the temporal graph. 
\begin{definition}[Temporal core]
  For a temporal graph $\TG = (V,\TE,\tau)$, the vertex 
%   set~$W=V((\bigcup_{i=1}^\tau E_i) \setminus (\bigcap_{i=1}^\tau E_i)) \subseteq V$ 
  set~$W=\{v\in V\mid \exists \{v,w\}\in (\bigcup_{i=1}^\tau E_i) \setminus (\bigcap_{i=1}^\tau E_i)\} \subseteq V$ 
  is called the \emph{temporal core}.
\end{definition}
%In urban scenarios the temporal graph is often composed of a public transport system and an 
A temporal graph is often composed of a public transport system and an 
ordinary street network.
Here, the temporal core consists of vertices involved in the public transport system.

For~\strproblem{}, we can observe that the hardness reduction described in the proof of \cref{thm:nphard} produces an instance with an empty temporal core. 
% Hence, we get the following result.%\todo{TF:reference must be Prop.3.1}
% \begin{observation}
% \label{obs:hardcore}
% \strproblem{} is \NP{}-complete for all~$|W|\geq 0$ and~$\tau\geq 5$.
% \end{observation}
% 
% In stark contrast, \nonstrproblem{} (the non-strict version) can be solved in the same polynomial running time for any fixed bound on the size of the temporal core of the input graph. 
% In stark contrast, we show that \nonstrproblem{} on temporal graphs with constant-sized temporal core can be solved in polynomial running time where the degree of the polynomial does not depend on the size of the temporal core. 
 In stark contrast, we show that \nonstrproblem{} is fixed-parameter tractable when parameterized by the size of the temporal core\footnote{Note that we can compute the temporal core in $\ON(|\TE|\log|\TE|)$~time.}. 
 We reduce an instance to \multiwaycut{} (NWC) in such a way that we can use an above lower bound \FPT-algorithm due to~\citet{cygan2013multiway} for \multiwaycutAcr{} as a subprocedure in our algorithm for \nonstrproblem{}.
Note that the above lower bound parameterization is crucial to obtain the desired \FPT-running time bound. % (see step~\condRef{alg:iv} below).
% We use an above lower bound FPT-algorithm by~\citet{cygan2013multiway} for \multiwaycut{} as a subprocedure in our algorithm for \nonstrproblem{}. 
%We want to point out that the above lower bound parameterization is crucial here to obtain the desired \FPT-running time bound. % (see step~\condRef{alg:iv} below).
 Recall the definition of \multiwaycutAcr{}:
  \problemdef{\multiwaycut{} (NWC)}
  {An undirected graph~$G=(V,E)$, a set of terminals~$T \subseteq V$, and an integer~$k$. }
  {Is there a set~$S \subseteq (V\setminus T)$ of size at most~$k$ such there is no \npath{t_1,t_2} for every distinct~$t_1,t_2 \in T$?}
  We remark that Cygan et al.'s algorithm can be modified to obtain a solution~$S$.
Formally, we show the following.

% \begin{theorem}
% 	\label{thm:tempcorelintime}
% 	\nonstrproblem{} is solvable in linear-time if the temporal core is of constant size. \todo{TF: is it possible or do we need polynomial time?}
% \end{theorem}

% \cref{thm:tempcorelintime} follows directly from \cref{lemma:fpt-edge-deletion} the~$W$ being of constant size.
% One can observe that we can upper-bound the number of time-edges we have to remove from the temporal graph such that in the remaining temporal graph each layer is identical.
% Let~$\G = (V,\E,\tau)$ be a temporal graph and let~$M \subseteq \E$ be a set of time-edges such that in~$\G - M$ each layer is identical.
% One can observe, that we can solve the \nonstrproblem{} on~$\G - M$ in polynomial time by the folklore max-flow algorithm.
% We call~$|M|$ the \emph{distance to static graphs by time-edge deletion}.  
% If~$\G$ is~$\lambda$-steady, we can generate a time-edge set~$M := \bigcup_{t=1}^{\tau-1} E_t \triangle E_{t+1}$ such that in~$\G - M$ each layer is identical and~$|M| \leq \lambda \cdot \tau$.
% Thus in order to prove \cref{thm:fpt-steady-tau} it suffices to show that \nonstrproblem{} is fixed-parameter tractable when parameterized by the distance to static graphs by time-edge deletion.

\begin{theorem}
	\label{thm:fptcore}
	\nonstrproblem{} can be solved in~$2^{|W| \cdot (\log|W| + 2)}\cdot |V|^{\ON(1)}+\ON(|\TE|\log|\TE|)$ time, where~$W$ denotes the temporal core of the input graph.%, $|V|$ the number of vertices, and $|\TE|$ the number of time-edges. 
\end{theorem}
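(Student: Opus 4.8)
The plan is to exploit that, by definition of the temporal core~$W$, every edge not incident to a vertex of~$W$ is \emph{permanent}, i.e., present in every layer; equivalently, all time-sensitive edges are incident to~$W$, and any sub-walk of a \nonstrpath{s,z} that avoids~$W$ lives entirely in the static intersection graph~$\bigcap_{i=1}^{\tau}E_i$. As preprocessing I would compute~$W$ and sort the time-edges in~$\ON(|\TE|\log|\TE|)$ time, which accounts for the additive term of the claimed bound. Following the paper's outline I would then split on the size of~$W$: if~$W=\emptyset$ the whole temporal graph is static and \nonstrproblem{} coincides with static $(s,z)$-separation, solvable in polynomial time, so the work concentrates on the case~$W\neq\emptyset$.

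For the non-trivial case the idea is to guess how an optimal separator and the surviving \nonstrpaths{s,z} interact with the core, and then remove all temporal reasoning from the remaining subproblem. Concretely, I would assign each core vertex one label from a set of size~$4|W|$: a \emph{rank} in~$[|W|]$ encoding its position in the non-decreasing temporal order in which a path may pass through the core, together with two status bits recording whether the vertex is placed into the separator and on which side of the cut it is meant to lie. This yields~$(4|W|)^{|W|}=2^{|W|(\log|W|+2)}$ branches, matching the leading factor of the running time. The crucial effect of fixing this data is that, once the order and the sides of all surviving core vertices (the time-carrying ``hubs'') are known, the feasibility of a path segment between two consecutive hubs collapses to plain static reachability in the permanent graph: no dynamic bookkeeping of times is needed anymore, and the reduced instance asks only to delete non-core vertices (with the budget decreased by the number of core vertices put into the separator).

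Within a branch I would build a static \multiwaycutAcr{} instance~$(G',T,k')$: the graph~$G'$ is obtained from the underlying graph restricted to the surviving vertices by inserting exactly the order-permitted permanent connections between consecutive hubs, and the terminal set~$T$ consists of~$s$, $z$, and the surviving hubs, so that a node multiway cut of~$G'$ disjoint from~$T$ is precisely a set of non-core vertices blocking every order-consistent temporal $(s,z)$-path. The quantitative heart of the argument is to equip this instance with a lower bound~$LB$ (of the kind underlying the algorithm of~\citet{cygan2013multiway}, related to the isolating terminal cuts / the half-integral LP value) such that the budget exceeds it by only~$\ON(|W|)$: every unit of slack between~$k'$ and~$LB$ can be charged to one of the~$\le|W|$ hubs. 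Then the above-lower-bound parameter~$p:=k'-LB$ is~$\ON(|W|)$, Cygan et al.'s algorithm decides the instance in~$2^{\ON(|W|)}\cdot|V|^{\ON(1)}$ time, and multiplying by the number of branches gives the stated~$2^{|W|(\log|W|+2)}\cdot|V|^{\ON(1)}+\ON(|\TE|\log|\TE|)$ bound.

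The main obstacle I anticipate is this last step together with the correctness of the reduction. One must design the terminal set and the order-induced edges so that (a) node multiway cuts of~$G'$ are in bijection with order-consistent temporal separators of the branch, and (b) the multiway-cut lower bound lies within an additive~$\ON(|W|)$ of the budget; without (b) the reduction would yield only an \XP{}-type guarantee rather than fixed-parameter tractability in~$|W|$, so identifying a lower bound with this bounded gap is the delicate part. Verifying that the per-hub rank and status bits really make each segment's feasibility a purely static question—so that the potentially large parameter~$\tau$ never re-enters—is what makes the scheme succeed for the non-strict model; the contrast with \cref{thm:nphard} shows that the same linearisation provably breaks down in the strict case.
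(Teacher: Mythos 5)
Your high-level plan (guess the interaction with the core, reduce to \multiwaycutAcr{}, invoke the above-lower-bound algorithm of \citet{cygan2013multiway}) matches the paper's outline, but two of your concrete steps contain genuine gaps.

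First, the guessing scheme. You assign each core vertex a \emph{rank} recording the temporal order in which ``a path'' traverses the core, and you then make $s$, $z$ and all surviving hubs terminals of the \multiwaycutAcr{} instance. This is internally inconsistent: a \nonstrsep{s,z} need not disconnect all pairs of core vertices from one another, so forcing every surviving hub to be a separate terminal over-constrains the cut and rejects valid separators; conversely, if within a branch you only require ``order-consistent'' paths to be blocked, a \nonstrpath{s,z} that visits core vertices in a different order may survive, since different paths can traverse the core in different orders and a separator must kill all of them. For the non-strict model the temporal order is in fact the wrong abstraction. The paper instead guesses $S_W=S\cap W$ together with a \emph{partition} $\{W_1,\dots,W_r\}$ of $W\setminus S_W$ (intended as the trace on $W$ of the connected components of $\widehat{G}-S$, where $\widehat{G}$ is the graph of permanent edges), contracts each part $W_i$ to a single terminal $w_i$, and only requires distinct parts to be separated. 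Correctness rests on the fact you state but never exploit: every edge with an endpoint outside $W$ is permanent, so any maximal core-avoiding segment of a \nonstrpath{s,z} lies in $\widehat{G}$ and can be traversed at any time step; a surviving temporal path would therefore exhibit two core vertices from different parts that are still connected in $\widehat{G}$ minus the cut, contradicting the multiway cut. The branch count is $2^{|W|}\cdot B_{|W|}\le 2^{|W|}\cdot 2^{|W|\log|W|}$, and one must also first ensure $s,z\in W$ (by attaching dummy neighbours), a step you omit.

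Second, the above-lower-bound gap. You assert that ``every unit of slack between $k'$ and $LB$ can be charged to one of the $\le|W|$ hubs'' but give no mechanism, and without a further argument the gap $k'-LB$ can be arbitrarily large (for instance when $k$ is much larger than $|W|$). The paper closes this with a separate observation: let $L$ be a minimum \nsep{s,z} in $\widehat{G}-(W\setminus\{s,z\})$; if $k\ge |W\setminus\{s,z\}|+|L|$ then $(W\setminus\{s,z\})\cup L$ is already a \nonstrsep{s,z} and one answers yes immediately, and otherwise $k-|L|<|W|$, while $|L|\le b$ for the isolating-cut lower bound $b$ of \citet{cygan2013multiway}, so their parameter $k-b$ is below $|W|$ and the \multiwaycutAcr{} call costs $2^{|W|}\cdot|V|^{\ON(1)}$ time. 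This preliminary case distinction is the missing ingredient that turns your sketch into an \FPT{} running time; the charging claim does not substitute for it.
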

\begin{proof}
   Let instance~$\I = (\TG = (V,\TE,\tau),s,z,k)$ of \nonstrproblem{} with temporal core~$W\subseteq V$ be given.
%   Without loss of generality, we assume that~$s,z \in W$, otherwise we increment~$k$ by one, add new vertex~$v$ to~$V$, and a new layer~$\tau+1$ to~$\TG$ which is a copy of layer~$\tau$ with an extra \npath{s,z} of length two over~$v$.
  Without loss of generality, we can assume that~$s,z \in W$, as otherwise we add two vertices one being incident only with~$s$ and the other being incident only with~$z$, both only in layer one.
  Furthermore, we need the notion of a \emph{maximal static subgraph}~$\widehat{G}$ of a temporal graph $\TG=(V,\TE)$: 
  It contains all edges that appear in every layer, more specifically~$\widehat{G}=(V,\widehat{E})$ with~$\widehat{E} = \bigcap_{i\in[\tau]}E_i$. 
  Our algorithm works as follows.
  \begin{compactenum}[(1)]
  \item Guess a set~$S_W \subseteq (W \setminus \{ s,z \})$ of size at most~$k$.\label{alg:i}
  \item Guess a number $r$ and a partition~$\{ W_1,\dots,W_r \}$ of~$W \setminus S_W$ such that~$s$ and~$z$ are not in the same~$W_i$, for some~$i \in [r]$.\label{alg:ii}
  \item Construct the graph~$G'$ by copying~$\widehat{G} - W$ and adding a
  vertex~$w_i$ for each part $W_i$. Add edge sets $\{\{v,w_i\} \mid v\in
  N_{\widehat{G}}(W_i)\setminus W\}$ for all $i\in[r]$ and for all $i, j\in[r]$ add
  an edge~$\{w_i,w_j\}$ if $N_{\widehat{G}}(W_i)\cap W_j \neq\emptyset$. % by a
  % single vertex~$w_i$ such that~$N_{G-S_w}(W_i) = N_{G'}(w_i)$.
  \label{alg:iii}
  \item Solve the \multiwaycutAcr{} instance~$\I' = (G', \{w_1,\dots,w_r\},k-|S_W|)$.\label{alg:iv}
  % The algorithm of \citet{cygan2013multiway} can give such a solution.
  \item If a solution~$S'$ is found for~$\I'$ and~$S' \cup S_W$ is a solution for~$\I$, then output \yes.\label{alg:v}
  \item If all possible guesses in \condRef{alg:i} and \condRef{alg:ii} are considered without finding a solution for~$\I$, then output \no.\label{alg:vi}
  \end{compactenum}
  See \cref{fig:wormhole} for a visualization of the constructed graph $G'$.
%   , and a set~$W \subseteq V$ which, intuitively speaking, acts like a wormhole in~$G$,
%   because we can travel in time and space through~$W$.
%   The temporal graph~$\G[M]$ to describes how one can travel through~$W$.
%   Consider \cref{fig:wormhole} for an example. 
  \begin{figure}
\begin{center}
	  \begin{tikzpicture}[yscale=0.66]

	  \usetikzlibrary{calc, patterns}
	  \def\xr{1}
	  \def\yr{1}

	  \begin{scope}
	  \draw[fill=gray!20] (0,0) ellipse (3 and 2);
	  \clip (0,0) ellipse (3 and 2);

	\node at (0.2,1.2)[circle, fill=ourgreen!25!white, minimum height=1, minimum width=1, draw]{\phantom{$kW_2$}};
	\node at (0.2,1.2)[circle, fill=ourgreen!75!white, minimum height=1, minimum width=1, draw]{$W_2$};
	\node at (-2.1,0.25)[circle, fill=lipicsyellow!25!white, minimum height=1, minimum width=1, draw]{\phantom{$kW_1$}}; 
	\node at (-2.1,0.25)[circle, fill=lipicsyellow, minimum height=1, minimum width=1, draw]{$W_1$};
	  \node at (-2.35,0)[circle, fill, draw, scale=1/2, label=-115:{$s$}]{};
	\node at (-0.4,-1.25)[circle, fill=lipicsyellow!25!white, minimum height=1, minimum width=1, draw]{\phantom{$kW_1$}};
	 \node at (-0.4,-1.25)[circle, fill=lipicsyellow, minimum height=1, minimum width=1, draw,label={[label distance=-20]180:{$W_1$}}]{\phantom{$W_1$}};
	 \node at (2.25,0.35)[circle, fill=ourblue!25!white, minimum height=1, minimum width=1, draw]{\phantom{$kW_3$}};
	 \node at (2.25,0.35)[circle, fill=ourblue, minimum height=1, minimum width=1, draw]{$W_3$};
	\node at (2.35,0)[circle, fill, draw, scale=1/2, label=-75:{$z$}]{};
	\node at (0.7,-0.25)[circle, fill=ourblue!25!white, minimum height=1, minimum width=1, draw]{\phantom{$kW_3$}};
	  \node at (0.7,-0.25)[circle, fill=ourblue, minimum height=1, minimum width=1, draw, label={[label distance=-20]0:{$W_3$}}]{\phantom{$W_3$}};
	  \draw[fill=white] (-0.2,-2.1) -- (0.2,-2.1) to [out=90,in=-90] (0.9,2.1) --  (0.5,2.1) to [out=-90,in=45] (0,0.25) to [out=135,in=-90] (-0.5,2.1) -- (-0.9,2.1) to [out=-90,in=90](-0.2,-2) -- cycle;
	  \draw[fill=ourred, fill opacity=0.33] (-0.2,-2.1) -- (0.2,-2.1) to [out=90,in=-90] (0.9,2.1) --  (0.5,2.1) to [out=-90,in=45] (0,0.25) to [out=135,in=-90] (-0.5,2.1) -- (-0.9,2.1) to [out=-90,in=90](-0.2,-2) -- cycle;
	  \begin{scope}
	  \clip (-0.2,-2.1) -- (0.2,-2.1) to [out=90,in=-90] (0.9,2.1) --  (0.5,2.1) to [out=-90,in=45] (0,0.25) to [out=135,in=-90] (-0.5,2.1) -- (-0.9,2.1) to [out=-90,in=90](-0.2,-2) -- cycle;
	  \draw[pattern=north west lines,draw=none] (0,0) circle (3);
   	  \node at (0.2,1.2)[circle, minimum height=1, minimum width=1, draw]{\phantom{$kW_2$}};
	  \node at (0.2,1.2)[circle,minimum height=1, minimum width=1, draw]{$W_2$};
   	  \node at (-0.4,-1.25)[circle, minimum height=1, minimum width=1, draw]{\phantom{$kW_1$}};
	  \node at (-0.4,-1.25)[circle,minimum height=1, minimum width=1, draw]{\phantom{$W_1$}};
   	  \node at (0.7,-0.25)[circle, minimum height=1, minimum width=1, draw]{\phantom{$kW_3$}};
	  \node at (0.7,-0.25)[circle,minimum height=1, minimum width=1,draw]{\phantom{$W_3$}};
	  \end{scope}
	  \end{scope}
	  \node at (0,-2.5)[circle, pattern=north west lines, minimum height=1, minimum width=1, draw]{};
	  \node at (0,-2.5)[circle, fill=ourred, fill opacity=0.33, minimum height=1, minimum width=1, draw, label=0:{$S$}]{};
	  \draw[line width=1pt] (0,0) ellipse (3 and 2);

	  \begin{scope}[xshift=7.5 cm]
	  \draw[fill=gray!20] (0,0) ellipse (3 and 2);
	  \clip (0,0) ellipse (3 and 2);
	\node (w3) at (0.2,1.2)[circle, fill=ourgreen!25!white, minimum height=1, minimum width=1, draw]{\phantom{$kW_2$}};
	  \node at (0.2,1.2)[circle, fill=white, minimum height=1, minimum width=1, draw]{\phantom{$W_2$}};
	 \node (w1)at (-2.1,0.25)[circle, fill=lipicsyellow!25!white, minimum height=1, minimum width=1, draw]{\phantom{$kW_1$}}; 
	\node at (-2.1,0.25)[circle, fill=white, minimum height=1, minimum width=1, draw]{\phantom{$W_1$}};
	 \node (w2) at (-0.4,-1.25)[circle, fill=lipicsyellow!25!white, minimum height=1, minimum width=1, draw]{\phantom{$kW_1$}};
	 \node at (-0.4,-1.25)[circle, fill=white, minimum height=1, minimum width=1, draw]{\phantom{$W_1$}};
	 \node  (w4)  at (2.25,0.35)[circle, fill=ourblue!25!white, minimum height=1, minimum width=1, draw]{\phantom{$kW_3$}};
	 \node at (2.25,0.35)[circle, fill=white, minimum height=1, minimum width=1, draw]{\phantom{$W_3$}};
	\node (w5) at (0.7,-0.25)[circle, fill=ourblue!25!white, minimum height=1, minimum width=1, draw]{\phantom{$kW_3$}};
	  \node at (0.7,-0.25)[circle, fill=white, minimum height=1, minimum width=1, draw]{\phantom{$W_3$}};
% 	  	  \draw[white,fill=white] (-0.2,-2.1) -- (0.2,-2.1) to [out=90,in=-90] (0.9,2.1) --  (0.5,2.1) to [out=-90,in=45] (0,0.25) to [out=135,in=-90] (-0.5,2.1) -- (-0.9,2.1) to [out=-90,in=90](-0.2,-2) -- cycle;
	  \begin{scope}
	  \clip (-0.2,-2) -- (0.2,-2) to [out=90,in=-90] (0.9,1.95) --  (0.5,1.98) to [out=-90,in=45] (0,0.25) to [out=135,in=-90] (-0.5,1.98) -- (-0.9,1.95) to [out=-90,in=90](-0.2,-2) -- cycle;
			
	  \node at (0.2,1.2)[circle, pattern=north west lines, minimum height=1, minimum width=1, draw]{\phantom{$W_1$}};
	  \node at (0.2,1.2)[circle, fill=ourred, fill opacity=0.33, minimum height=1, minimum width=1, draw]{\phantom{$W_1$}};
	  \node at (-0.4,-1.25)[circle, pattern=north west lines, minimum height=1, minimum width=1, draw]{\phantom{$W_1$}};
	  \node at (-0.4,-1.25)[circle,fill=ourred, fill opacity=0.33, minimum height=1, minimum width=1, draw]{\phantom{$W_1$}};
	  \node at (0.7,-0.25)[circle, pattern=north west lines, minimum height=1, minimum width=1, draw]{\phantom{$W_1$}};
	  \node at (0.7,-0.25)[circle, fill=ourred, fill opacity=0.33, minimum height=1, minimum width=1, draw]{\phantom{$W_1$}};
	  \end{scope}

	  \node (v2) at (0.2,1.2)[circle, fill=ourgreen!75!white, scale=2/3, draw,label={[label distance=-5]135:{$w_2$}}]{};
	  \node (v1) at (-1.25,-0.5)[circle, fill=lipicsyellow, scale=2/3, draw,label={[label distance=-4]225:{$w_1$}}]{};
	  \node (v3) at (1.4,0.4)[circle, fill=ourblue, scale=2/3, draw,label={[label distance=-2]90:{$w_3$}}]{};

	  \draw (v1) -- (w1.north);
	  \draw (v1) -- (w1.south west);
	  \draw (v1) -- (w1.east);
	  \draw (v1) -- (w2.north west);
	  \draw (v1) -- (w2.west);
	  \draw (v1) -- (w2.south);
	  \draw (v2) -- (w3.north);
	  \draw (v2) -- (w3.west);
	  \draw (v2) -- (w3.south west);
	  \draw (v3) -- (w4.north);
	  \draw (v3) -- (w4.east);
	  \draw (v3) -- (w4.south);
	  \draw (v3) -- (w5.north);
	  \draw (v3) -- (w5.east);
	  \draw (v3) -- (w5.south);

	  \end{scope}
	  
	  \begin{scope}[xshift=7.5 cm]
	    \draw[line width=1pt] (0,0) ellipse (3 and 2);	  
	    \node at (0,-2.5)[circle, pattern=north west lines, minimum height=1, minimum width=1, draw]{};
	    \node at (0,-2.5)[circle, fill=ourred, fill opacity=0.33, minimum height=1, minimum width=1, draw, label=0:{$S_W$}]{};
	  \end{scope}
	  \end{tikzpicture}
	  \end{center}
	  \caption{
	  Illustration of the idea behind the proof of~\cref{thm:fptcore}.
	  \emph{Left-hand side}:
	  Sketch of a temporal graph $\TG$ (enclosed by the ellipse) with temporal~$(s,z)$ separator~$S$ (red hatched) and induced partition~$\{S_W,W_1,W_2,W_3\}$ of the temporal core $W$, where~$S_W=W\cap S$.
	  The outer rings of~$W_1,W_2,W_3$ contain the open neighborhood of the sets.
	  \emph{Right-hand side}:
	  Sketch of the constructed graph $G'$ (enclosed by the ellipse). The partition $\{S_W,W_1,W_2,W_3\}$ is guessed in steps \condRef{alg:i} and \condRef{alg:ii}. % in the proof) of~$S_W$ and partition~$\{W_1,W_2,W_3\}$ of the temporal core~$W$ without~$S_W$.
	  The vertices~$w_1,w_2,w_3$ with edges to the neighborhood of $W_1,W_2,W_3$, respectively, are created in step~\condRef{alg:iii}. %towards solving the \multiwaycut{} instance with terminal set~$T=\{w_1,w_2,w_3\}$ (see~\condRef{alg:iv}).
	  }
	  \label{fig:wormhole}
  \end{figure}
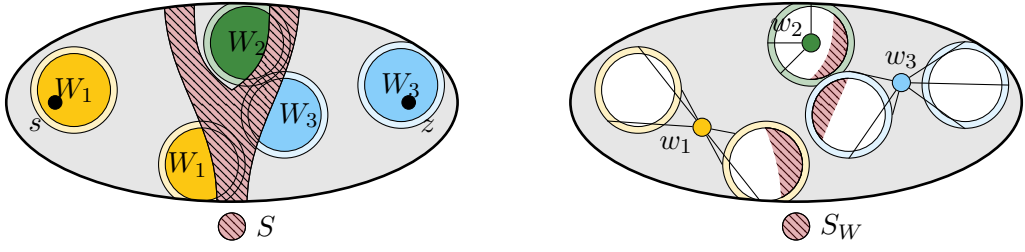
  Since we do a sanity check in step~\condRef{alg:v} it suffices to show that if~$\TG$ has a \nonstrsep{s,z} of size at most~$k$,
  then there is a partition~$\{ S_W, W_1,\dots,W_r \}$ of~$W$ where $s$ and $z$ are in different parts such that
  \begin{inparaenum}[(i)]	
	  \item\label{pr:mwc-hassol} the \multiwaycutAcr{} instance~$\I'$ has a solution of size at most~$k - |S_W|$, and
	  \item\label{pr:anysolworks} if~$S'$ is a solution to~$\I'$, then~$S_W \cup S'$ is a \nonstrsep{s,z} in~$\TG$.
  \end{inparaenum}
  
% We defer the correctness proof to \Cref{thm:fptcore:correctness}.
% \toappendix{
% \subsection{Correctness Proof of \cref{thm:fptcore}}\label{thm:fptcore:correctness}
% \begin{proof}[Correctness]
%   \smallskip\noindent\emph{Correctness:}
  Let~$S$ be a \nonstrsep{s,z} of size at most~$k$ in~$\TG$.
  First, we set~$S_W = S \cap W$.
  Let~$C_1,\dots,C_r$ be the connected components of~$\widehat{G} - S$ with $C_i \cap W\neq \emptyset$ for all~$i\in [r]$.
  Now we construct a partition~$\{ S_W, W_1,\dots, W_r \}$ of $W$ such that~$W_i = W \cap C_i$ for all~$i \in [r]$. % and $W^\star = W\setminus (\bigcup_{i\in[r]}W_i\cup S_W)$.
  It is easy to see that $s$ and $z$ are in different parts of this partition. Observe that for~$i,j \in [r]$ with~$i \neq j$ the vertices~$v \in W_i$ and~$u \in W_j$ are in different connected components of~$\widehat{G}-S$.
  Hence,~$w_1,\dots,w_r$ are in different connected components of~$G' - (S \setminus S_W)$.
  Thus~$S \setminus S_W$ is a solution of size at most~$k - |S_W|$ of the \multiwaycutAcr{} instance~$\I' = (G', \{w_1,\dots,w_r\},k-|S_W|)$, proving~\condRef{pr:mwc-hassol}.

  For the correctness, it remains to prove~\condRef{pr:anysolworks}.
  Let~$S'$ be a solution of size at most~$k - |S_W|$ of the \multiwaycutAcr{} instance~$\I'$.
  We need to prove that~$S'\cup S_W$ forms a temporal $(s,z)$-separator in~$\TG$.
  Clearly, if~$S'=S\setminus S_W$, we are done by the arguments before.
  Thus, assume~$S'\neq S\setminus S_W$.
  Since~$S'$ is a solution to~$\I'$, we know that~$w_1,\dots,w_r$ are in different connected components of~$G' - S'$.
  Hence, for~$i, j \in [r]$ with~$i \neq j$ the vertices~$v \in W_i$,~$u \in
  W_j$ are in different connected components of~$\widehat{G}-(S'\cup S_W)$. %In particular, we have that $S=S'\cup S_W$ is an \nsep{s,z} in $\ug{G}$. It follows that $S$ is a \nonstrsep{s,z} of size at most~$k$ in~$\TG$.\todo{hm: this shorter version of this direction of the correctness should also work, right?\\ $\rightarrow$ please check!}

 Now assume towards a contradiction that there is a \nonstrpath{s,z}~$P$ in~$\TG - (S' \cup S_W)$.
%   Since~$S$ is a \nonstrsep{s,z}, we have that~$V(P) \cap
%   (S \setminus S_W)\neq\emptyset$. 
%   Let $v$ be the furthest vertex in
%   $P$ that is reachable from $s$ in $\TG - S$. It follows that the
%   vertex $v^\star$ that is visited by $P$ directly after $v$ is contained in $(S
%   \setminus S_W)$.
  Observe that~$\{s,z\} \subseteq V(P) \cap W$.
  Hence, we have two different vertices~$u_1,u_2 \in V(P) \cap W$
  such that there is no \nonstrpath{u_1,u_2} in~$\TG - S$ and all
  vertices that are visited by $P$ between~$u_1$ and~$u_2$ are contained in $V\setminus W$: Take the furthest vertex in $P$ that is also contained in~$W$ and is reachable by a temporal path from~$s$ in~$\TG - S$ as~$u_1$, and take the next vertex (after~$u_1$) in~$P$ that is also contained in~$W$ as~$u_2$.
  Note that~$u_1$ and~$u_2$ are disconnected in~$\widehat{G}-S$, and hence there are~$i,j\in[r]$ with~$i\neq j$ such that~$u_1\in W_i$ and~$u_2\in W_j$.
%   which is the
%   last vertex in~$W$ which is visited by~$P$ before~$v^\star$ is visited, and a vertex~$w_\omega\in W$
%   which is the first vertex in~$W$ which is visited by~$P$ after~$v^\star$ is
%   visited.
  Since~$P$ does not visit any vertices in~$(S'\cup S_W)$ we can conclude
  that~$u_1$ and~$u_2$ are connected in~$\widehat{G} - (S'\cup S_W)$, and hence~$w_i$ and~$w_j$ are connected in~$G'-S'$.
  This contradicts the fact that~$S'$ is a solution for~$\I'$.
%   \end{proof}}

  \smallskip\noindent\emph{Running time:}
  It remains to show that the our algorithm runs in the proposed time.
  For the guess in step~\condRef{alg:i} there are at most~$2^{|W|}$ many possibilities.
  For the guess in step~\condRef{alg:ii} there are at most~$B_{|W|} \leq 2^{ |W|\cdot \log(|W|)}$ many possibilities, where~$B_n$ is the $n$-th Bell number.
  Step~\condRef{alg:iii} and the sanity check in step~\condRef{alg:v} can clearly be done in polynomial time.

  Let~$L$ be a minimum \nsep{s,z} in~$\widehat{G} - (W \setminus \{s,z\})$.
  If $k \ge |W \setminus \{s,z\}| + |L|$, then~$(W \setminus \{s,z\}) \cup L$ is a \nonstrsep{s,z} of size at most~$k$ for $\TG$.
  Otherwise, we have that~$k - |L| < |W|$. 
  %Recall the definition of \multiwaycut{}.
  \citet{cygan2013multiway} showed
  %\citet{cygan2013multiway} proved %, based on the work of \citet{DBLP:journals/disopt/Guillemot11a}, 
  that \multiwaycutAcr{} can be solved in~$2^{k-b}\cdot |V|^{\ON(1)}$ time%. %,
%  where~$k$ is the solution size and $|G|$ is the sum of the number of vertices and number of edges of the input graph.
% By examining the proof of \citet{DBLP:conf/soda/IwataOY14}\todo{hm: which theorem?}, one can observe that the running time of their algorithm is actually upper-bounded by $\ON(2^{k-m} \cdot |G|)$\todo{hm: or should this be $O(4^{k-m}\ldots)$?}
, where~$b := \max_{x \in T} \min \set{ |S| ; S\subseteq V  \text{ is an \nsep{x,T \setminus \{x\}}}}$. %, where~$T$ is the  terminal set of the input instance.
  Since~$s$ and~$z$ are not in the same~$W_i$ for any~$i \in [r]$, we know that~$|L| \leq b$.
  Hence,~$k - b \leq k - |L| < |W|$ and step \condRef{alg:iv} can be done in~$2^{|W|}\cdot |V|^{\ON(1)}$ time.
  Thus we have an overall running time of~$2^{|W| \cdot (\log|W| + 2)}\cdot |V|^{\ON(1)}+\ON(|\TE|\log|\TE|)$.
\end{proof}
We conclude that the strict and the non-strict variant of \nonstrproblem{} behave very differently on temporal graphs with a constant-size temporal core. 
While the strict version stays \NP{}-complete, the non-strict version becomes polynomial-time solvable.

%%%%%%%%%%%%%%%%%%%%%%%%%%%%%%%%%%%%%%%%%%%%%%%%%%%%%%%%%%%%%%%%%%%%%%%%%%%%%%%%%%%%%%%%%%%%%%%%%%%%%%%
%%%%%%%%%%%%%%%%%%%%%%%%%%%%%%%%%%%%%%%%%%%%%%%%%%%%%%%%%%%%%%%%%%%%%%%%%%%%%%%%%%%%%%%%%%%%%%%%%%%%%%%
%%%%%%%%%%%%%%%%%%%%%%%%%%%%%%%%%%%%%%%%%%%%%%%%%%%%%%%%%%%%%%%%%%%%%%%%%%%%%%%%%%%%%%%%%%%%%%%%%%%%%%%
\section{Conclusion}\label{sec:conclusion}
% \appendixsection{sec:conclusion}
The temporal path 
model strongly matters when assessing the computational complexity of 
finding small separators in temporal graphs. 
This phenomenon has so far been neglected in the literature. 
We settled the complexity dichotomy of \nonstrproblem{} and \strproblem{} by proving \NP{}-hardness on temporal graphs with~$\tau\geq 2$ and~$\tau\geq 5$, respectively, and polynomial-time solvability if the number of layers is below the respective constant.
The mentioned hardness results also imply that both problem variants are \Wone-hard when parameterized by the solution size~$k$. When considering the parameter combination $k+\tau$, it is easy to see that \strproblem{} is fixed-parameter tractable~\cite{Zschoche17}: There is a straightforward search-tree algorithm that branches on all vertices of a \strpath{s,z} which has length at most $\tau$. 
Whether the non-strict variant is fixed-parameter tractable regarding the same parameter combination remains open.
% Completing the picture, we showed that for both problems the respective constants are tight, that is, each problem becomes polynomial-time solvable if the number of layers is below the respective constant.
% Moreover, we showed that for finding a (strict) temporal separator of minimum size, the solution size~$k$ are of little help.
%

We showed that~\sproblem{} on temporal graphs with planar underlying graphs remains \NP-complete.
However, for the planar case we proved that if additionally the number~$\tau$ of layers is a constant, then \strproblem{} is solvable in $\ON(|\TE|\cdot\log|\TE|)$ time.
We leave open whether \nonstrproblem{} admits a similar result.
Finally, we introduced the notion of a temporal core as a temporal graph parameter.
We proved that on temporal graphs with constant-size temporal core, while \strproblem{} remains \NP-hard, \nonstrproblem{} is solvable in polynomial time.

\subparagraph*{Acknowledgements.}
We thank anonymous reviewers for their constructive feedback which helped us to improve the presentation of this work.

% \newpage
%\printbibliography[heading=bibintoc,title={Literature}]
\bibliography{tempseparator}

%\newpage
%\appendix

%\appendixProofText
\end{document}